\numberwithin{equation}{section}
\newtheorem{theom}{Theorem}[section]
\newtheorem{coroll}[theom]{Corollary}
\newtheorem{lema}[theom]{Lemma}
\newtheorem{propo}[theom]{Proposition}
\theoremstyle{definition}
\newtheorem{defin}[theom]{Definition}
\newtheorem{Remark}[theom]{Remark}
\newcommand{\C}{\mathbb{C}}
\newcommand{\R}{\mathbb{R}}
\newcommand{\ebar}{\overline{\varepsilon}}
\begin{document}

\allowdisplaybreaks

\newcommand{\arXivNumber}{1906.07926}

\renewcommand{\PaperNumber}{098}

\FirstPageHeading

\ShortArticleName{Exact Bohr--Sommerfeld Conditions for Quantum Periodic Benjamin--Ono}

\ArticleName{Exact Bohr--Sommerfeld Conditions\\ for the Quantum Periodic Benjamin--Ono Equation}

\Author{Alexander MOLL}

\AuthorNameForHeading{A.~Moll}

\Address{Department of Mathematics, Northeastern University, Boston, MA~USA}
\Email{\href{mailto:a.moll@northeastern.edu}{a.moll@northeastern.edu}}
\URLaddress{\url{https://web.northeastern.edu/moll/}}

\ArticleDates{Received June 20, 2019, in final form December 12, 2019; Published online December 18, 2019}

\Abstract{In this paper we describe the spectrum of the quantum periodic Benjamin--Ono equation in terms of the multi-phase solutions of the underlying classical system (the periodic multi-solitons). To do so, we show that the semi-classical quantization of this system given by Abanov--Wiegmann is exact and equivalent to the geometric quantization by Nazarov--Sklyanin. First, for the Liouville integrable subsystems defined from the multi-phase solutions, we use a result of G\'erard--Kappeler to prove that if one neglects the infinitely-many transverse directions in phase space, the regular Bohr--Sommerfeld conditions on the actions are equivalent to the condition that the singularities of the Dobrokhotov--Krichever multi-phase spectral curves define an anisotropic partition (Young diagram). Next, we locate the renormalization of the classical dispersion coefficient by Abanov--Wiegmann in the realization of Jack functions as quantum periodic Benjamin--Ono stationary states. Finally, we show that the classical energies of Bohr--Sommerfeld multi-phase solutions in the renormalized theory give the exact quantum spectrum found by Nazarov--Sklyanin without any Maslov index correction.}

\Keywords{Benjamin--Ono; solitons; geometric quantization; anisotropic Young diagrams}

\Classification{37K40; 37K10; 53D50; 81Q20; 81Q80}

\section{Introduction and statement of result} \label{SECIntro}

\looseness=1 In the semi-classical analysis of quantized Hamiltonian systems, a major goal is to approximate the quantum spectrum in terms of select periodic orbits of the underlying classical system. In special cases, a semi-classical approximation of the quantum spectrum may turn out to be \textit{exact}. For quantizations of Liouville integrable systems, the classical energies of orbits satisfying the regular Bohr--Sommerfeld conditions give an approximation to the spectrum which is exact, e.g., for free particles on tori. Similarly, the WKB matching conditions provide an approximate spectrum which is exact, e.g., for harmonic oscillators. For quantized chaotic systems, the semi-classical approximation in the Gutzwiller trace formula is also exact in several cases, e.g., for free particles on any surface of constant negative curvature. For background on semi-classical and geometric quantization, see Kirillov~\cite{Kirillov1990} and Takh\-tajan~\cite{TakhtajanBOOK}.

In this paper we give an exact semi-classical description of the spectrum of the quantum~perio\-dic Benjamin--Ono equation in terms of distinguished quasi-periodic orbits of the underlying classical system known as \textit{multi-phase solutions}, the periodic analogs of multi-soliton solutions. We state this result in Theorem~\ref{MAINTHEOREM} below. For recent mathematical surveys of classical and quantum periodic Benjamin--Ono equations, see \cite[Section~5.2]{Saut2018} and~\cite[Section~1.1.6]{Okounkov2018ICM}, re\-spec\-ti\-ve\-ly.

\subsection{1-phase solutions of classical Benjamin--Ono} Let $J$ be the spatial Hilbert transform
defined by
\begin{gather*} (J\varphi) (x) = \text{P.V.} \frac{1}{\pi} \int_{- \infty}^{+\infty} \frac{\varphi(y){\rm d}y}{ x-y}\end{gather*} with $J {\rm e}^{{\bf i}kx} = -{\bf i} \operatorname{sgn}(k){\rm e}^{{\bf i}kx}$. The Benjamin--Ono equation \cite{Benj, DavisAcrivos, Ono}
\begin{gather} \label{CBOE} \partial_t v + v \partial_x v = \tfrac{1}{2} \ebar J\big[ \partial^2_{x} v\big] \end{gather} for real $v(x,t)$, $x ,t \in \R$, of spatial period $2\pi$ is Hamiltonian for the Gardner--Faddeev--Zakharov bracket as we review in Section~\ref{SECclassicalBO}. In~(\ref{CBOE}), $\ebar > 0$ is a coefficient of dispersion whose notation we explain in Section~\ref{SECnekrasov}. For $\mathbb{T} = \R / 2 \pi \mathbb{Z}$, Molinet \cite{Molinet} proved~(\ref{CBOE}) is globally well-posed in~$L^2(\mathbb{T})$. We write $v(x,t; \ebar)$ for solutions of~(\ref{CBOE}). In their original derivation and analysis of the equation~(\ref{CBOE}), both Benjamin \cite{Benj} and Ono~\cite{Ono} found a 3-parameter family of periodic traveling waves that define periodic orbits of~(\ref{CBOE}) known as \textit{$1$-phase solutions}:
 \begin{defin} \label{DEF1phase} For any $3$ real parameters $\vec{s} \in \R^{3}$ ordered as \begin{gather} \label{BO1PhaseParameters}
 s_1^{\uparrow} < s_1^{\downarrow} < s_0^{\uparrow} \end{gather}
 and $1$ phase $\chi_1 \in \mathbb{R}$, the 1-phase solutions of (\ref{CBOE}) are the periodic traveling waves
\begin{gather} \label{BO1PhaseSolution} v^{\vec{s}, \chi_1} (x,t; \ebar) = \eta^{\vec{s}}(x - \chi_1 - c_1(\vec{s}) t; \ebar) \end{gather}
with wavespeed \begin{gather} \label{BO1PhaseWaveSpeed} c_1(\vec{s}) = \tfrac{1}{2} \big( s_1^{\downarrow} + s_0^{\uparrow}\big) \end{gather} and permanent form \begin{gather} \label{BO1PhaseForm} \eta^{\vec{s}}(x; \ebar) = \frac{ \big(s_0^{\uparrow} - s_1^{\downarrow}\big)^2}{ \big(s_1^{\downarrow} - s_1^{\uparrow}\big) + \big(s_0^{\uparrow} - s_1^{\uparrow}\big) - 2 \sqrt{ \frac{ s_0^{\uparrow} - s_1^{\uparrow}}{ s_1^{\downarrow} - s_1^{\uparrow}}} \cos \big ( \big ( \frac{ s_0^{\uparrow} - s_1^{\downarrow}}{\ebar} \big ) x\big )}. \end{gather}
 \end{defin}
 The parameters (\ref{BO1PhaseParameters}) define two closed intervals $\big({-} \infty, s_1^{\uparrow}\big]$, $\big[s_1^{\downarrow}, s_0^{\uparrow}\big]$ we call \textit{bands} and one open interval $\big(s_1^{\uparrow}, s_1^{\downarrow}\big)$ we call a \textit{gap}. The wavespeed (\ref{BO1PhaseWaveSpeed}) is the midpoint of the band $\big[s_1^{\downarrow}, s_0^{\uparrow}\big]$. The wavelength of (\ref{BO1PhaseForm}) is inversely proportional to the length of the band $\big[s_1^{\downarrow}, s_0^{\uparrow}\big]$ and proportional to $\ebar$. As the band $\big[s_1^{\downarrow}, s_0^{\uparrow}\big]$ shrinks or merges with the band $\big({-} \infty, s_1^{\uparrow}\big]$, the permanent form (\ref{BO1PhaseForm})
 \begin{gather*} \eta^{\vec{s}} (x; \ebar) \sim \begin{cases} s_1^{\uparrow} + \frac{ \ebar^2 }{ \frac{1}{2 \big(s_0^{\uparrow} - s_1^{\uparrow}\big)^2} + x^2},& s_1^{\uparrow} < s_1^{\downarrow} \longrightarrow s_0^{\uparrow}, \\ s_0^{\uparrow} , & s_1^{\uparrow} \longleftarrow s_1^{\downarrow} < s_0^{\uparrow} \end{cases}
 \end{gather*}
of the $1$-phase solution (\ref{BO1PhaseSolution}) converges to that of a $1$-soliton or constant solution of (\ref{CBOE}).

\subsection{Multi-phase solutions of classical Benjamin--Ono}
After the discovery of the family of 1-phase periodic orbits (\ref{BO1PhaseSolution}), Satsuma--Ishimori \cite{SatsumaIshimori1979} found a larger family of quasi-periodic orbits of (\ref{CBOE}) known as \textit{multi-phase solutions}. We now recall a formula for these multi-phase solutions due to Dobrokhotov--Krichever \cite{DobrokhotovKrichever}. Throughout we write $\delta$ for Kronecker delta.
\begin{defin} \label{DEFmultiphase} For $n=0,1,2,\ldots$ with $2n+1$ real parameters $\vec{s} \in \R^{2n+1}$ ordered as \begin{gather} \label{BOMultiPhaseParameters} s_n^{\uparrow} < s_n^{\downarrow} < \cdots < s_1^{\uparrow} < s_1^{\downarrow} < s_0^{\uparrow} \end{gather} and $n$ phases $\chi_n, \ldots, \chi_1 \in \mathbb{R}$ denoted $\vec{\chi} \in \mathbb{R}^n$, the multi-phase ($n$-phase) solutions of (\ref{CBOE}) are
\begin{gather} \label{BOmPhaseSolution}
v^{\vec{s}, \vec{\chi}} (x, t; \ebar) = s_n^{\uparrow} - \sum_{i=1}^n \big(s_{i-1}^{\uparrow} - s_i^{\downarrow}\big) - 2 \ebar \operatorname{Im} \partial_x \log \det M^{\vec{s}, \vec{\chi}} (x,t; \ebar ),
\end{gather} where $M^{\vec{s}, \vec{\chi}}(x,t;\ebar)$ is the $n \times n$ matrix with entries $M^{\vec{s}, \vec{\chi}}_{ij} (x,t; \ebar)$ for $1 \leq i,j \leq n$ defined by
\begin{gather} \label{BOmPhaseSolutionMatrix}
 M^{\vec{s}, \vec{\chi}}_{ij} (x,t; \ebar)=\tfrac{1}{ s_{i-1}^{\uparrow} - s_j^{\downarrow}} \Big ( {-} 1 + \delta(i-j) Z_i(\vec{s}) {\rm e}^{ {{\bf i} \big( \tfrac{s_{i-1}^{\uparrow} - s_i^{\downarrow} }{\ebar}\big) \big( x - \chi_i - \tfrac{1}{2}\big(s_i^{\downarrow} + s_{i-1}^{\uparrow}\big)t \big) } } \Big ),\\
 Z_i(\vec{s}) = \sqrt{ \tfrac{ s_{i-1}^{\uparrow} - s_n^{\uparrow} }{ s_{i}^{\downarrow} - s_n^{\uparrow}} } \prod_{j \neq i} \sqrt{ \tfrac{ \big(s_i^{\downarrow} - s_j^{\downarrow}\big)\big( s_{i-1}^{\uparrow} - s_{j-1}^{\uparrow}\big)}{ \big(s_{i-1}^{\uparrow} - s_j^{\downarrow}\big) \big( s_i^{\downarrow} - s_{j-1}^{\uparrow}\big)}} .\end{gather}
\end{defin}

For $n=1$, (\ref{BOmPhaseSolution}) is the 1-phase periodic traveling wave (\ref{BO1PhaseSolution}). In Dobrokhotov--Kriche\-ver~\cite{DobrokhotovKrichever}, the parameters (\ref{BOMultiPhaseParameters}) are singularities of their rational spectral curves. As in the $1$-phase case, (\ref{BOMultiPhaseParameters}) defines $n+1$ closed intervals $\big({-} \infty, s_n^{\uparrow}\big]$, $\big[s_n^{\downarrow}, s_{n-1}^{\uparrow}\big], \ldots, \big[s_1^{\downarrow}, s_0^{\uparrow}\big]$ we call \textit{bands} and $n$ open intervals $\big(s_n^{\uparrow}, s_n^{\downarrow}\big), \ldots, \big(s_1^{\uparrow}, s_1^{\downarrow}\big)$ we call \textit{gaps}. As all band lengths shrink $ | s_i^{\downarrow} - s_{i-1}^{\uparrow}| \rightarrow 0$, the multi-phase solution~(\ref{BOmPhaseSolution}) converges to one of the multi-soliton solutions of~(\ref{CBOE}) found by Matsuno~\cite{Matsuno1979}.

\subsection{Bands and spatial periodicity conditions} The form of exponential terms in (\ref{BOmPhaseSolutionMatrix}) implies:

\begin{propo} \label{MultiPhasePeriodicityConditions} The $n$-phase solution $v^{\vec{s}, \vec{\chi}}(x,t; \ebar)$ in \eqref{BOmPhaseSolution} is $2\pi$-periodic in $x$ if and only if
\begin{gather} \label{BOMultiPhasePeriodicConditions} \big| s_i^{\downarrow} - s_{i-1}^{\uparrow} \big| = \ebar N_i \end{gather} the length of the $i$th band $[s_i^{\downarrow}, s_{i-1}^{\uparrow}]$ is a positive integer $N_i \in \mathbb{Z}_+$ multiple of $\ebar$ for all $i=1,\ldots, n$, i.e., the $i$th $1$-phase periodic traveling wave in the $n$-phase wave has $N_i$ bumps on $\mathbb{T} \cong \R / 2 \pi \mathbb{Z}$.\end{propo}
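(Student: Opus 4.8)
The plan is to isolate the dependence of \eqref{BOmPhaseSolution} on $x$, reduce $2\pi$-periodicity to integrality of the fundamental wavenumbers $k_i:=(s_{i-1}^{\uparrow}-s_i^{\downarrow})/\ebar$, and use the nondegeneracy of a Cauchy matrix to exclude accidental periodicity. By \eqref{BOmPhaseSolutionMatrix} the variable $x$ enters $M^{\vec s,\vec\chi}$ only through the diagonal exponentials, so I would write $M=A+D(x)$ with $A_{ij}=-1/(s_{i-1}^{\uparrow}-s_j^{\downarrow})$ independent of $x$ and $D(x)$ diagonal with $D_{ii}\propto e^{\mathbf i k_i x}$. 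Expanding $\det M$ multilinearly in the rows gives $f(x):=\det M^{\vec s,\vec\chi}(x,t;\ebar)=\sum_{S\subseteq\{1,\dots,n\}}B_S\,e^{\mathbf i(\sum_{i\in S}k_i)x}$, where $B_S$ equals a nonzero constant (built from the $Z_i$ and the prefactors in \eqref{BOmPhaseSolutionMatrix}) times the principal minor $\det A[S^{c}]$. The ordering \eqref{BOMultiPhaseParameters} makes the nodes $s_{i-1}^{\uparrow}$ and $s_j^{\downarrow}$ pairwise distinct, so $A$ is, up to sign, a Cauchy matrix and \emph{every} principal minor is nonzero; since also each $Z_i\neq0$, every coefficient $B_S$ is nonzero. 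In particular the extreme frequencies $0$ (only $S=\emptyset$) and $K:=\sum_i k_i$ (only $S=\{1,\dots,n\}$) occur with nonzero coefficients and are each attained by a unique~$S$.

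Sufficiency is then immediate: if every $k_i\in\mathbb Z_+$, all frequencies $\sum_{i\in S}k_i$ are integers, $f$ is genuinely $2\pi$-periodic, and hence so is $v=c-2\ebar\,\operatorname{Im}\partial_x\log f$, where $c$ is the constant prefactor in \eqref{BOmPhaseSolution}. For necessity I would pass to the meromorphic function $P:=\partial_x\log(f/f^{*})$, where $f^{*}(x):=\overline{f(\overline x)}$ is the Schwarz reflection of $f$. On $\mathbb R$ one has $f^{*}=\overline f$ and hence $P=2\mathbf i\,\operatorname{Im}\partial_x\log f$, so $v-c=\mathbf i\ebar\,P$; thus $2\pi$-periodicity of $v$ forces $P(x+2\pi)=P(x)$ as meromorphic functions. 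Consequently $P$ descends to a function of $q=e^{\mathbf i x}$, and because $B_\emptyset,B_{[n]}\neq0$ it has the finite limit $\mathbf i K$ as $\operatorname{Im}x\to\pm\infty$, i.e.\ as $q\to0,\infty$; so $P$ extends to a rational function of $q$.

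To finish, reading off the poles of this rational function (whose residues are the integer zero-multiplicities of $f$ and of $f^{*}$, with opposite signs) shows that the zero divisor of $f$ is $2\pi$-periodic. Hence $f(x+2\pi)/f(x)$ is entire and zero-free; being a ratio of exponential sums it is of exponential type, so by Hadamard factorization it equals $e^{ax+b}$, and the two vertical limits $1$ and $e^{2\pi\mathbf i K}$ (both of modulus one) pin it to the constant $1$ and force $e^{2\pi\mathbf i K}=1$. Therefore $f$ is genuinely $2\pi$-periodic, so $e^{2\pi\mathbf i\omega}=1$ for every frequency $\omega=\sum_{i\in S}k_i$ carrying a nonzero coefficient; since $B_{\{i\}}\neq0$ the frequency $k_i$ is present, whence $k_i\in\mathbb Z$. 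Positivity of the band $[s_i^{\downarrow},s_{i-1}^{\uparrow}]$ makes $N_i:=k_i$ a positive integer, and \eqref{BOmPhaseSolutionMatrix} identifies $N_i$ with the number of bumps of the $i$th constituent wave.

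The delicate point is the last step: a priori the subset sums $\sum_{i\in S}k_i$ may resonate (coincide or satisfy linear relations), and one must preclude that such resonances conspire to render $v$ accidentally $2\pi$-periodic while some $k_i\notin\mathbb Z$, i.e.\ that the combined Fourier coefficient at the frequency $k_i$ cancels even though $B_{\{i\}}\neq0$. This is exactly what the Cauchy nonvanishing is meant to control, since no $B_S$ vanishes and the fundamental wavenumbers $k_i$ should thereby persist in the frequency module of $v$; verifying that these relevant coefficients cannot accidentally vanish (equivalently, that each $k_i$ genuinely lies in the spectrum of $v$) is the step I expect to be the main obstacle.
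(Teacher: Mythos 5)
Your route is genuinely different from the paper's. The paper treats Proposition~\ref{MultiPhasePeriodicityConditions} as immediate: the only $x$-dependence in \eqref{BOmPhaseSolution}--\eqref{BOmPhaseSolutionMatrix} sits in the diagonal exponentials with wavenumbers $k_i=(s_{i-1}^{\uparrow}-s_i^{\downarrow})/\ebar$, so $2\pi$-periodicity is ``read off'' from the form of those terms (and checked against the cosine in \eqref{BO1PhaseForm} for $n=1$); no further argument is given. You instead prove it honestly: the multilinear expansion $\det M=\sum_S B_S e^{\mathbf{i}(\sum_{i\in S}k_i)x}$ with every $B_S\neq 0$ (Cauchy minors times the nonzero $Z_i$) is correct and is exactly the right structure, the sufficiency half is complete, and the Hadamard/Schwarz-reflection argument upgrading periodicity of $v$ to periodicity of $f$ is a sound way to attack necessity. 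This buys something the paper's one-liner does not: an actual proof of the ``only if'' direction, which is not tautological precisely because $v$ depends on $f$ only through $\operatorname{Im}\partial_x\log f$. (One secondary point in that step: to read the zero divisor of $f$ off the poles of $P=\partial_x\log(f/f^{*})$ you need the zeros of $f$ and of $f^{*}$ to be disjoint, since a common zero cancels in the residue. This is fine here -- Lemma~1.1 of Dobrokhotov--Krichever, which the paper invokes in Section~\ref{SECclassicalMP}, places all zeros of $\det M$ in $\operatorname{Im}x<0$, hence those of $f^{*}$ in $\operatorname{Im}x>0$ -- but it should be said.)

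The resonance gap you flag at the end is genuine as stated and cannot be waved away at a single fixed time: for special $\vec{\chi}$, $t_0$ and resonant $\vec{s}$ the coefficient at frequency $k_j$ really can cancel. The ingredient that closes it is one you have not used: the proposition is about the solution, so periodicity in $x$ must hold for \emph{all} $t$, and the temporal frequency attached to $S$ is $\sum_{i\in S}k_ic_i$ with the wavespeeds $c_1>c_2>\cdots>c_n$ pairwise distinct (midpoints of disjoint ordered bands). Now induct on the $k_i$ sorted increasingly. Any $S$ with $\sum_{i\in S}k_i=k_{(m)}$ is either a singleton $\{j'\}$ with $k_{j'}=k_{(m)}$, or a multi-element set consisting entirely of indices with strictly smaller $k$-values (since all $k_i>0$). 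If a multi-element partner exists, then $k_{(m)}$ is a sum of wavenumbers already proven integral, and you are done without worrying about cancellation. If only singleton partners exist, their temporal frequencies $k_{j'}c_{j'}$ are pairwise distinct, so the coefficient of $e^{\mathbf{i}k_{(m)}x}$ in $f(\cdot,t)$, an exponential sum in $t$ with nonzero coefficients $B^0_{\{j'\}}$, cannot vanish for all $t$; hence the frequency $k_{(m)}$ survives and must be an integer. The base case $m=1$ is the observation that the minimal positive frequency $\min_i k_i$ is attained only by singletons. With this induction inserted, your proof is complete.
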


The $n=1$ case of Proposition \ref{MultiPhasePeriodicityConditions} follows also by direct inspection of the cosine term in~(\ref{BO1PhaseForm}). We now show that additional conditions on lengths $\big| s_i^{\uparrow} - s_i^{\downarrow}\big|$ of gaps $\big(s_i^{\uparrow}, s_{i}^{\downarrow}\big)$ arise in quantization.

\subsection{Statement of result: gaps and exact Bohr--Sommerfeld conditions}\label{SECstatement}
In Theorem~\ref{MAINTHEOREM} below, we give exact Bohr--Sommerfeld quantization conditions on the tori in phase space defined by the classical multi-phase solutions (\ref{BOmPhaseSolution}). As a~consequence, we give a~semi-classical interpretation of the results of Nazarov--Sklyanin~\cite{NaSk2}
for the quantum periodic Benjamin--Ono equation and also show that the semi-classical soliton quantization of~(\ref{CBOE}) by Abanov--Wiegmann~\cite{AbWi1} is exact. Recall for $\hbar >0$ and any periodic orbit $\upgamma$ of a classical Hamiltonian $O\colon M \rightarrow \R$ in a phase space $(M, d \upalpha)$ with Liouville $1$-form $\upalpha$, the \textit{$\hbar$-Bohr--Sommerfeld condition} on $\upgamma$ is that the action $\oint_{\upgamma} \upalpha$ is a $N' \in \mathbb{Z}_+$ multiple of $2 \pi \hbar$:
 \begin{gather} \label{BSConditionIntro} \oint_{\upgamma} \upalpha = 2 \pi \hbar N' .\end{gather}
 Recall also that for a self-adjoint operator $\widehat{O}(\hbar)$ in a Hilbert space $(\mathcal{H}, \langle
 \cdot, \cdot \rangle)$ chosen to quantize~$O$ in $(M, d \upalpha)$, the \textit{$\hbar$-Bohr--Sommerfeld
 approximation} to the spectrum of $\widehat{O}(\hbar)$ is given by the classical energies $O|_{\upgamma}$ of the
 periodic orbits $\upgamma$ satisfying~(\ref{BSConditionIntro}). When~$O$ is Liouville integrable, one takes
 conditions (\ref{BSConditionIntro}) for each Hamiltonian $O_i$ in an integrable hierarchy containing $O$ whose
 corresponding periodic orbits $\upgamma_i$ are a basis of cycles on the Liouville tori. Using a~recent description
 of (\ref{CBOE}) as a classical Liouville integrable Hamiltonian system in~$L^2(\mathbb{T})$ by G\'erard--Kappeler
 \cite{GerardKappeler2019}, in Section \ref{SECquantumMP} we prove:

\begin{theom}\label{MAINTHEOREM} Let $\ebar >0$ and $\hbar>0$ be dimensionless coefficients of dispersion and quantization.
\begin{itemize}\itemsep=0pt
\item {\rm [Part I: Bohr--Sommerfeld conditions]} Let $\upgamma_{i,n}^{\vec{s}}(\ebar)$ be the cycle in the phase space of \eqref{CBOE} defined by varying only the $i$th phase $\chi_i$ in the multi-phase initial data $v^{\vec{s}, \vec{\chi}}(x, 0; \ebar)$ \eqref{BOmPhaseSolution}. The action of the Gardner--Faddeev--Zakharov Liouville $1$-form $\upalpha_{\rm GFZ}$ \eqref{1formFORMULA} along $\upgamma_{i,n}^{\vec{s}} (\ebar)$ is \begin{gather} \label{ClassicalActionsRESULT} \oint_{\upgamma_{i,n}^{\vec{s}} (\ebar)} \upalpha_{\rm GFZ} = 2 \pi \ebar \big|s_i ^{\uparrow} - s_i^{\downarrow} \big| \end{gather} $2\pi \ebar$ times the length of the $i$th gap $\big(s_i^{\uparrow}, s_{i}^{\downarrow}\big)$. Neglecting the infinitely-many transverse directions in phase space to $n$-phase tori, the regular Bohr--Sommerfeld conditions~\eqref{BSConditionIntro} on the classical actions~\eqref{ClassicalActionsRESULT} are therefore \begin{gather} \label{BOMultiPhaseBohrSommerfeldConditions}\big|s_i ^{\uparrow} - s_i^{\downarrow} \big| = \frac{ \hbar}{\ebar} N_i' \end{gather} that the length of the $i$th gap $\big(s_i^{\uparrow}, s_i^{\downarrow}\big)$ is a positive integer $N_i' \in \mathbb{Z}_+$ multiple of $\hbar / \ebar >0$ for all $i=1,\ldots, n$ with $N_i'$ independent of $N_i$ in the description of band lengths in \eqref{BOMultiPhasePeriodicConditions}.
\item {\rm [Part II: Exact Bohr--Sommerfeld conditions]} The spectrum of the geometric quantization of \eqref{CBOE} on $\mathbb{T}$ for fixed $a= \int_0^{2\pi} v(x) \tfrac{{\rm d}x}{2 \pi}$ found by Nazarov--Sklyanin~{\rm \cite{NaSk2}} is the subset of classical energy
 levels $\sum\limits_{i=0}^n \big(s_i^{\uparrow}\big)^3 - \sum\limits_{i=1}^n \big(s_i^{\downarrow}\big)^3$ of the $n$-phase solutions \eqref{BOmPhaseSolution} for $n=0,1,2,\ldots$ given by $\vec{s} \in \R^{2n+1}$ satisfying inequalities \eqref{BOMultiPhaseParameters},
 $ a = \sum\limits_{i=0}^n s_i^{\uparrow} - \sum\limits_{i=1}^n s_i^{\downarrow} $, the spatial periodicity conditions \eqref{BOMultiPhasePeriodicConditions}, and the Bohr--Sommerfeld conditions \eqref{BOMultiPhaseBohrSommerfeldConditions} with $\ebar$ in each replaced by \begin{gather} \label{Renormalization} \varepsilon_1 (\ebar, \hbar) =\frac{ \ebar + \sqrt{\ebar^2 + 4 \hbar}}{ 2} \end{gather} the renormalized coefficient of classical dispersion determined by Abanov--Wiegmann~{\rm \cite{AbWi1}}.\end{itemize}
\end{theom}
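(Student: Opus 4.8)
The plan is to derive Part~I from the global action--angle coordinates of G\'erard--Kappeler \cite{GerardKappeler2019}, and to prove Part~II by matching the renormalized classical energies to the Nazarov--Sklyanin spectrum \cite{NaSk2} through the moment combinatorics of anisotropic Young diagrams.

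\emph{Part I.} First I would invoke the theorem of G\'erard--Kappeler that \eqref{CBOE} admits global action--angle coordinates $(I_k,\theta_k)_{k\ge 1}$ on $L^2(\mathbb{T})$ in which $\upalpha_{\rm GFZ}$ is brought to Darboux normal form $\sum_k I_k\,{\rm d}\theta_k$ up to an exact one-form. On the $n$-phase locus \eqref{BOmPhaseSolution} all but $n$ of the actions vanish, and the $n$ angles conjugate to the surviving actions restrict to the Dobrokhotov--Krichever phases $\chi_1,\dots,\chi_n$; the vanishing of the transverse actions is what licenses neglecting their (trivial) Bohr--Sommerfeld conditions. Since $\upgamma_{i,n}^{\vec{s}}(\ebar)$ advances $\chi_i$ through one period $2\pi$ while all remaining angles and all actions are frozen, the line integral collapses to $\oint_{\upgamma_{i,n}^{\vec{s}}(\ebar)}\upalpha_{\rm GFZ}=2\pi I_i$. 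It then remains to evaluate the G\'erard--Kappeler action $I_i$ on an $n$-phase solution: here I would identify the spectrum of their Lax operator at \eqref{BOmPhaseSolution} with the band/gap endpoints \eqref{BOMultiPhaseParameters} of the Dobrokhotov--Krichever spectral curve, so that $I_i$ equals $\ebar$ times the length $|s_i^\uparrow-s_i^\downarrow|$ of the $i$th spectral gap, giving \eqref{ClassicalActionsRESULT}. Substituting into \eqref{BSConditionIntro} with $\upgamma=\upgamma_{i,n}^{\vec{s}}(\ebar)$ and $N'=N_i'$ yields $2\pi\ebar|s_i^\uparrow-s_i^\downarrow|=2\pi\hbar N_i'$, i.e.\ \eqref{BOMultiPhaseBohrSommerfeldConditions}; independence of the gap integers $N_i'$ from the band integers $N_i$ of Proposition~\ref{MultiPhasePeriodicityConditions} is immediate because band lengths and gap lengths are free coordinates on the $n$-phase family.

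\emph{Part II.} The strategy is to build a dictionary, controlled by the single algebraic relation satisfied by \eqref{Renormalization}, between the Nazarov--Sklyanin spectral data and anisotropic partitions. I would first record the Nazarov--Sklyanin description \cite{NaSk2}: for fixed mean $a$ the joint spectrum is indexed by partitions $\lambda$, with the cubic energy eigenvalue a weighted sum of box contents whose weights are built from $\ebar$ and $\hbar$ through the Jack parameter. Next I would set up the geometric correspondence: for $\vec{s}$ obeying \eqref{BOMultiPhaseParameters} together with \eqref{BOMultiPhasePeriodicConditions} and \eqref{BOMultiPhaseBohrSommerfeldConditions} after the substitution $\ebar\mapsto\varepsilon_1(\ebar,\hbar)$, the valleys $s_i^\uparrow$ and peaks $s_i^\downarrow$ are precisely the interior corner coordinates of the profile of a Young diagram $\lambda$ drawn with box width $\varepsilon_1$ and box height $\hbar/\varepsilon_1$: the band condition $|s_i^\downarrow-s_{i-1}^\uparrow|=\varepsilon_1 N_i$ records a descending run of $N_i$ horizontal edges, the gap condition $|s_i^\uparrow-s_i^\downarrow|=(\hbar/\varepsilon_1)N_i'$ records an ascending run of $N_i'$ vertical edges, and $a=\sum_{i=0}^n s_i^\uparrow-\sum_{i=1}^n s_i^\downarrow$ fixes the horizontal placement of the profile and matches the conserved charge.

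Finally I would carry out the main computation: with $\lambda$ fixed, expand the classical energy $\sum_{i=0}^n(s_i^\uparrow)^3-\sum_{i=1}^n(s_i^\downarrow)^3$ as the third moment of the profile of $\lambda$. Using the interlacing of $\{s_i^\uparrow\}$ and $\{s_i^\downarrow\}$ in the Markov--Krein/Kerov transition measure, this power-sum difference resolves into a sum over the boxes of $\lambda$ of the anisotropic content together with terms fixed by $a$ and the total box number, and I would show it equals the Nazarov--Sklyanin eigenvalue by invoking the quadratic relation obeyed by \eqref{Renormalization}, namely $\varepsilon_1^2=\ebar\,\varepsilon_1+\hbar$, equivalently $\ebar=\varepsilon_1-\hbar/\varepsilon_1$. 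This relation says exactly that the two box dimensions have product $\hbar$ (the quantum cell area, accounting for the renormalized gap spacing $\hbar/\varepsilon_1$ in \eqref{BOMultiPhaseBohrSommerfeldConditions}) and difference $\ebar$ (the classical dispersion), which is the content of the Abanov--Wiegmann renormalization \cite{AbWi1} and is what eliminates any Maslov correction. The main obstacle I anticipate is precisely this last identity: translating the cubic power-sum of the profile extrema into the content sum and verifying that $\varepsilon_1^2=\ebar\,\varepsilon_1+\hbar$ makes the two sides agree coefficient by coefficient, since this is where the classical band/gap geometry, the Jack combinatorics of the quantum spectrum, and the renormalization must all be reconciled.
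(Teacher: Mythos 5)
Your Part I contains a genuine gap at its central step. You assert that the Gérard--Kappeler angles conjugate to the $n$ surviving actions ``restrict to the Dobrokhotov--Krichever phases $\chi_1,\dots,\chi_n$,'' so that the cycle $\upgamma_{i,n}^{\vec{s}}(\ebar)$ traverses the $i$th angle once with all other angles frozen and the integral collapses to $2\pi I_i$. But the DK phases and the GK angle variables are defined by completely unrelated constructions (an explicit determinant formula versus the Birkhoff map built from the Baker--Akhiezer function), and a priori the homology class $\big[\upgamma_{i,n}^{\vec{s}}(\ebar)\big]$ is only some integer combination $\sum_j C_{j,i}^{\vec{s}}(\ebar)\big[\Gamma_{h_j}^{f(\cdot|\vec{s})}(\ebar)\big]$ of the GK basis cycles, giving $\oint_{\upgamma_{i,n}^{\vec{s}}(\ebar)}\upalpha_{\rm GFZ}=2\pi\ebar\sum_j C_{j,i}^{\vec{s}}(\ebar)\big|s_j^{\uparrow}-s_j^{\downarrow}\big|$. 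Pinning down $C_{j,i}=\delta(i-j)$ is the actual content of \eqref{ClassicalActionsRESULT} and is where the paper spends most of its effort: it (i) computes the $1$-phase action directly from \eqref{BO1PhaseSolution} and \eqref{1formFORMULA}, (ii) shows the action is asymptotic to $2\pi\ebar|s_i^{\uparrow}-s_i^{\downarrow}|$ when all gaps diverge, because the off-diagonal entries of \eqref{BOmPhaseSolutionMatrix} vanish and the $n$-phase solution splits into a superposition of $1$-phase solutions, (iii) proves the $C_{j,i}^{\vec{s}}(\ebar)$ are constant in $\vec{s}$ using smoothness of the GK fibration and simple-connectedness of the base $B^{\rm reg}_{\vec{N};n}(a;\ebar)\cong\R_{>0}^n$ (no monodromy), and then (iv) takes $n$ limits in which the $j$th gap grows fastest to read off $C_{j,i}=\delta(i-j)$. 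Without some substitute for (i)--(iv), your ``the integral collapses to $2\pi I_i$'' is exactly the statement to be proved, not a consequence of the existence of action--angle variables.

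Your Part II is essentially sound and close to the paper's argument, though you propose to do more computation than is needed. The paper expresses both sides as third moments of profiles: the classical energy is $\int c^3\tfrac12 f''(c|\vec{s})\,{\rm d}c$ by Proposition~\ref{Kerov337} together with Theorem~\ref{FiniteGapResult}, and the Nazarov--Sklyanin eigenvalue is already recorded in profile form in Theorem~\ref{QuantumNSSPECTRUMtheorem} as $\int c^3\tfrac12 f_{\lambda}''(c-a|\varepsilon_2,\varepsilon_1)\,{\rm d}c$; the match is then immediate once one observes (Proposition~\ref{INTERPRETATION} and Lemma~\ref{AnisotropicLemma}) that the renormalized periodicity and Bohr--Sommerfeld conditions say precisely that $f(c|\vec{s})$ is an anisotropic partition profile of anisotropy $(\varepsilon_2,\varepsilon_1)$, using $-\varepsilon_2=\hbar/\varepsilon_1$ as you correctly identify. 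Your route -- expanding $\sum_i\big(s_i^{\uparrow}\big)^3-\sum_i\big(s_i^{\downarrow}\big)^3$ into a content sum over boxes and comparing with the Jack eigenvalue via $\varepsilon_1^2=\ebar\,\varepsilon_1+\hbar$ -- would also work and is how one verifies the profile form of the eigenvalue in the first place, but it re-derives by hand a Kerov-type identity the paper gets for free from the spectral-shift-function formalism.
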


\begin{figure}[htb]\centering
\includegraphics[width=0.25 \textwidth]{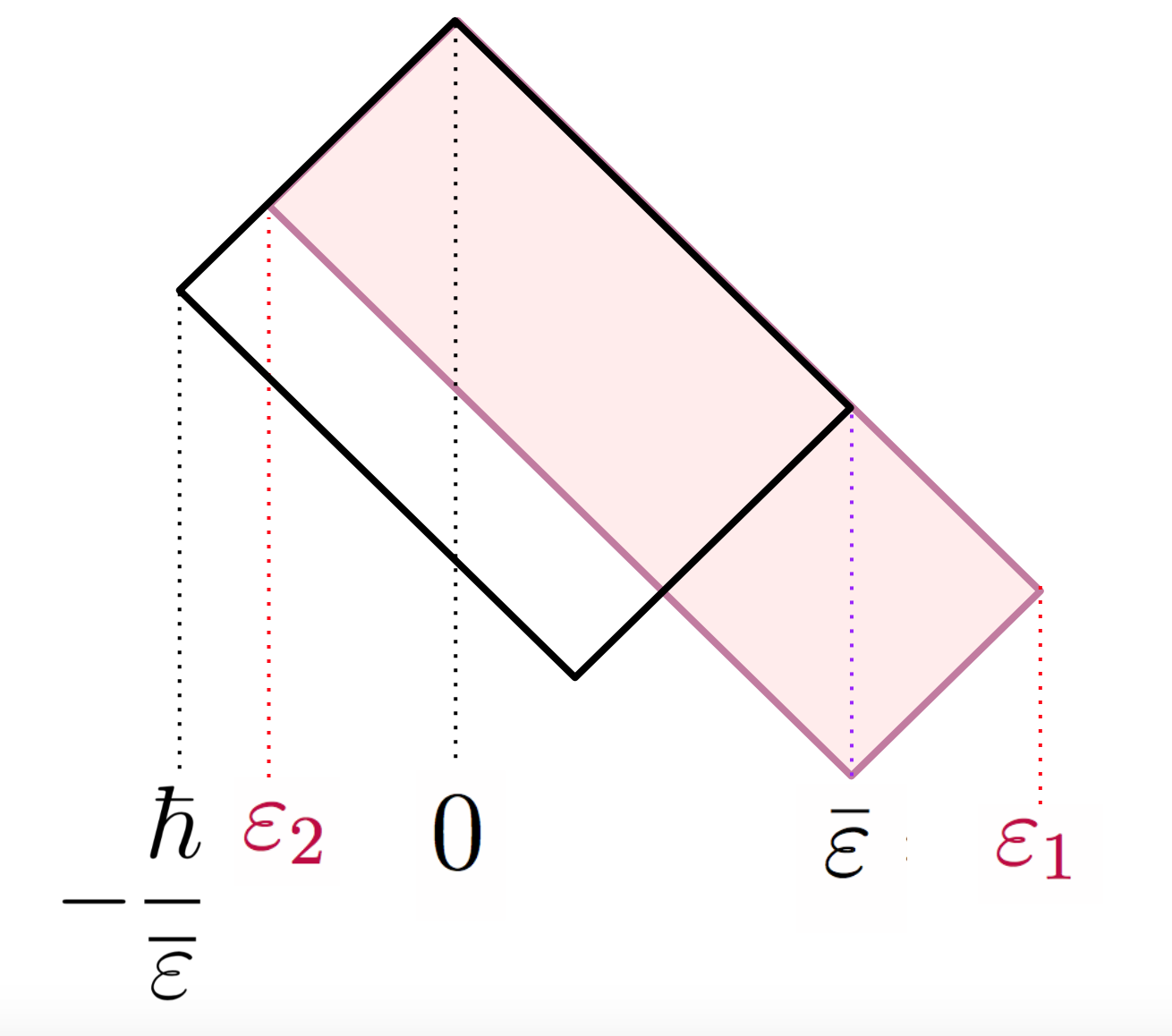}
\caption{Renormalization.}\label{FIGrectangle}
\end{figure}

The renormalization (\ref{Renormalization}) in Part~II of Theorem~\ref{MAINTHEOREM} reflects the fluctuations of the quantum system in the infinitely-many transverse
directions to the Liouville tori neglected in Part~I. The formula~(\ref{Renormalization}) for $\varepsilon_1$ can be characterized by the rectangles $R(r_2, r_1)$ of side lengths $- r_2 \sqrt{2}, r_1 \sqrt{2}$ for two choices of $r_2 < 0 <r_1$ in Fig.~\ref{FIGrectangle}: for
\begin{gather}\label{RenormalizationCompanion}
 \varepsilon_2 ( \ebar, \hbar) = \frac{ \ebar - \sqrt{\ebar^2 + 4 \hbar}}{2},
\end{gather}
 (i) $R({\varepsilon_2, \varepsilon_1})$ encloses the same area $2 \hbar$ as $R(-{\hbar / \ebar, \ebar})$ and (ii) the intersection of $R({\varepsilon_2, \varepsilon_1})$ and the exterior of $R(-{\hbar / \ebar, \ebar})$ is a square.

\subsection[Interpretation of result: Dobrokhotov--Krichever profiles and anisotropic partitions]{Interpretation of result: Dobrokhotov--Krichever profiles\\ and anisotropic partitions} \label{SUBSECinterpretation} We now interpret our Theorem~\ref{MAINTHEOREM} for the quantum multi-phase solutions in terms of partitions (Young diagrams) built from the rectangles in Fig.~\ref{FIGrectangle}. For the classical multi-phase solutions,

\begin{defin} \label{MultiPhaseProfileDEF} A Dobrokhotov--Krichever profile $f(c |\vec{s})$ is a piecewise-linear function of $c \in \R$ with slopes $\pm 1$, local extrema (\ref{BOMultiPhaseParameters}), and $f(c | \vec{s}) \sim |c- a|$ as $c \rightarrow \pm \infty$ for $a = \sum\limits_{i=0}^n s_i^{\uparrow} - \sum\limits_{i=1}^n s_i^{\downarrow}$.
\end{defin}

A subset of Dobrokhotov--Krichever profiles are the anisotropic partition profiles of Kerov~\cite{Ke4}:
\begin{defin} \label{AnisotropicProfileDEF} For $r_2< 0 < r_1$ and $a \in \R$, a piecewise-linear real function $f(c)$ of $c \in \R$ is an {anisotropic partition profile of anisotropy $(r_2, r_1)$ centered at $a$} if the region
\begin{gather} \label{TheRegion} \big\{ (c,y) \in \R^2 \colon |c - a | < y < f(c) \big\} \end{gather} is a disjoint union of finitely-many translates of a $-r_2 \sqrt{2} \times r_1 \sqrt{2}$ rectangle $R(r_2, r_1)$.
 \end{defin}

\begin{figure}[htb]\centering
\includegraphics[width=0.45\textwidth]{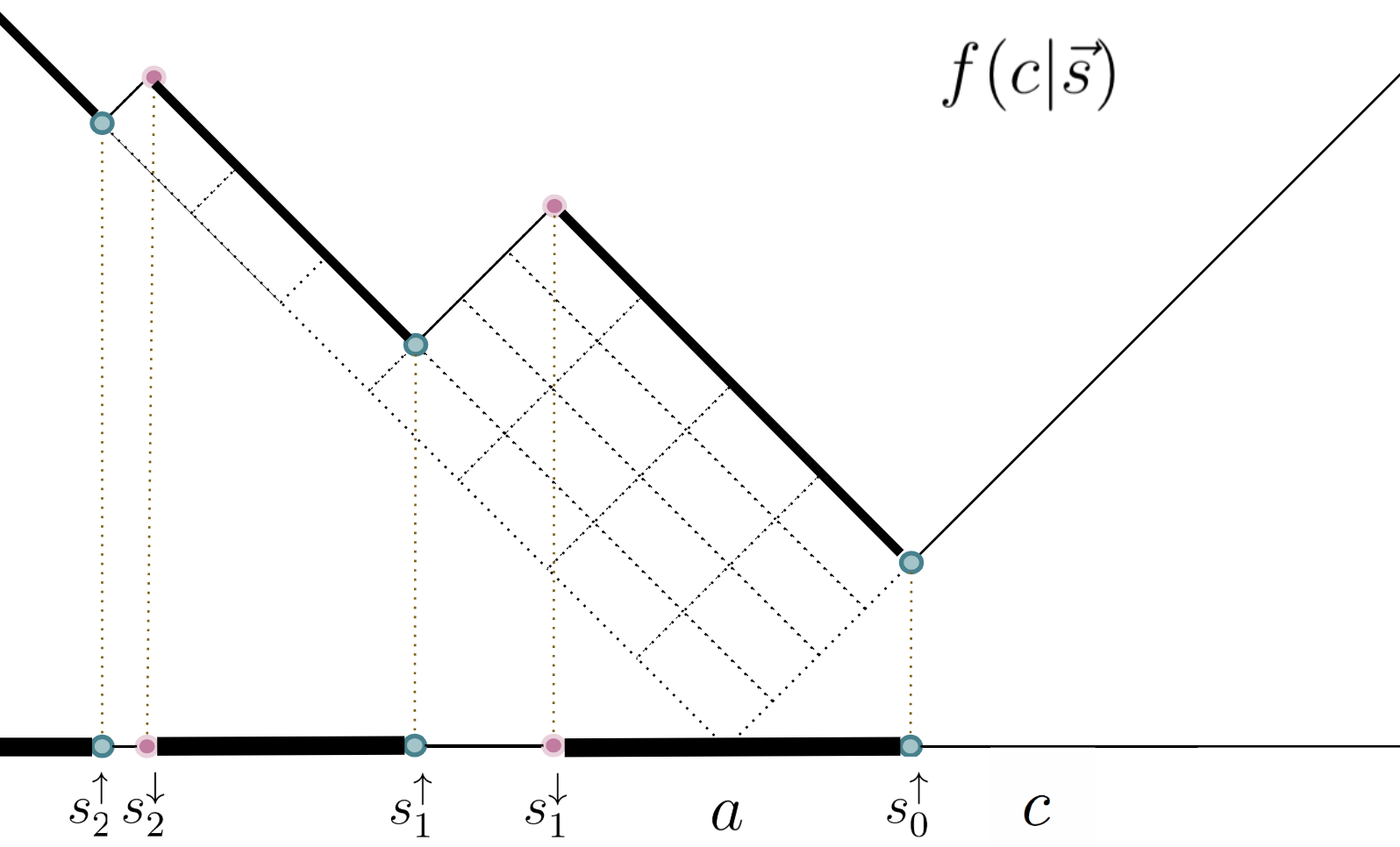}
\caption{Dobrokhotov--Krichever profile $f( c | \vec{s})$ of quantum Benjamin--Ono $2$-phase solution with gap and band lengths determined by $(N_2', N_2, N_1', N_1) = (1, 3, 3, 4)$.}\label{anisotropicPOPFIGURE}
\end{figure}

\begin{propo}\label{INTERPRETATION}Theorem~{\rm \ref{MAINTHEOREM}} can be restated for Dobrokhotov--Krichever profiles $f( c| \vec{s})$:
\begin{itemize}\itemsep=0pt
\item {\rm [Part I]} The original $($approximate$)$ regular Bohr--Sommerfeld conditions on the multi-phase $v^{\vec{s}, \vec{\chi}}(x, t; \ebar)$ are that $f(c | \vec{s})$ is an anisotropic partition profile of anisotropy $(-\hbar / \ebar, \ebar)$.
\item {\rm [Part II]} The renormalized $($exact$)$ regular Bohr--Sommerfeld conditions on the multi-phase $v^{\vec{s}, \vec{\chi}}(x, t; \ebar)$ are that $f(c | \vec{s})$ is an anisotropic partition profile of anisotropy $(\varepsilon_2, \varepsilon_1)$.
\end{itemize}
\end{propo}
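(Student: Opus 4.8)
The plan is to translate the algebraic content of Theorem~\ref{MAINTHEOREM} directly into geometric statements about the region \eqref{TheRegion} beneath a Dobrokhotov--Krichever profile. The key observation is that the band and gap lengths of $\vec{s}$ are precisely the lengths of the alternating maximal descending and ascending segments of $f(c|\vec{s})$: since the profile has slopes $\pm 1$ with local extrema at the ordered values \eqref{BOMultiPhaseParameters}, the horizontal (equivalently, by slope $\pm 1$, the $\sqrt{2}$-rescaled diagonal) extents of these segments are exactly $|s_i^\downarrow - s_{i-1}^\uparrow|$ for bands and $|s_i^\uparrow - s_i^\downarrow|$ for gaps. I would first verify that $f(c|\vec{s}) \sim |c-a|$ together with the centering $a = \sum_{i=0}^n s_i^\uparrow - \sum_{i=1}^n s_i^\downarrow$ is automatic from Definition~\ref{MultiPhaseProfileDEF}, so that the profile and the data $\vec{s}$ determine one another; this makes the correspondence between $n$-phase solutions and profiles a bijection once the inequalities \eqref{BOMultiPhaseParameters} are imposed.

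The core of the argument is a tiling lemma: for fixed anisotropy $(r_2, r_1)$ with $r_2 < 0 < r_1$, a slope-$\pm 1$ profile $f$ centered at $a$ bounds a region \eqref{TheRegion} that tiles by translates of $R(r_2, r_1)$ if and only if every band length is a positive-integer multiple of $-r_2$ and every gap length is a positive-integer multiple of $r_1$ (up to the $\sqrt{2}$ normalization built into the side lengths $-r_2\sqrt{2}$ and $r_1\sqrt{2}$). I would prove this by rotating coordinates by $45^\circ$ so that the two edge families of $R(r_2, r_1)$ become horizontal and vertical of lengths $-r_2$ and $r_1$; in these coordinates the region under the profile is a Young-diagram-like staircase, and the standard fact that such a staircase tiles by a single rectangle exactly when its horizontal and vertical step sizes are integer multiples of the rectangle's side lengths gives the claim. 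Here the descending segments (bands) and ascending segments (gaps) play the roles of the two step directions.

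With the tiling lemma in hand, both parts follow by specializing the anisotropy. For Part~I, setting $(r_2, r_1) = (-\hbar/\ebar, \ebar)$ makes the band condition ``band length $\in \ebar \mathbb{Z}_+$'' coincide with the spatial periodicity conditions \eqref{BOMultiPhasePeriodicConditions}, and the gap condition ``gap length $\in (\hbar/\ebar)\mathbb{Z}_+$'' coincide with the approximate Bohr--Sommerfeld conditions \eqref{BOMultiPhaseBohrSommerfeldConditions}; this is exactly the assertion that $f(c|\vec{s})$ is an anisotropic partition profile of anisotropy $(-\hbar/\ebar, \ebar)$. For Part~II, I would invoke the renormalization clause of Theorem~\ref{MAINTHEOREM}, which replaces $\ebar$ by $\varepsilon_1(\ebar,\hbar)$ in the Bohr--Sommerfeld conditions; the companion value $\varepsilon_2$ in \eqref{RenormalizationCompanion} then supplies the negative side length, and the characterization (i)--(ii) following \eqref{RenormalizationCompanion} confirms $(\varepsilon_2, \varepsilon_1)$ is the correct anisotropy pair. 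The main obstacle I anticipate is the tiling lemma itself: one must check that the \emph{disjointness} and \emph{finiteness} requirements in Definition~\ref{AnisotropicProfileDEF} are genuinely equivalent to the two divisibility conditions, ruling out pathological partial tilings, and that the alternating band/gap structure forced by the strict inequalities \eqref{BOMultiPhaseParameters} is exactly what guarantees a clean staircase with no overhangs.
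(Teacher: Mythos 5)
Your overall strategy is the same as the paper's: the entire published proof of Proposition~\ref{INTERPRETATION} is a one-line assertion of exactly the equivalence you call the ``tiling lemma'' (a Dobrokhotov--Krichever profile is an anisotropic partition profile of anisotropy $(r_2,r_1)$ if and only if all band and gap lengths are positive-integer multiples of the appropriate rectangle side), followed by the specialization of $(r_2,r_1)$ to $(-\hbar/\ebar,\ebar)$ and $(\varepsilon_2,\varepsilon_1)$. Your proposal to actually prove that lemma by rotating $45^{\circ}$ and reducing to a staircase tiling is sound and supplies detail the paper omits; your concerns about disjointness and finiteness are legitimate but are resolved by the same rotation, since Lemma~\ref{AnisotropicLemma} exhibits the canonical tiling by rows of positive slope.

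There is, however, a concrete error you must fix: your tiling lemma attaches the two side lengths to the wrong families of segments. You state that tilability is equivalent to ``every band length is a positive-integer multiple of $-r_2$ and every gap length is a positive-integer multiple of $r_1$,'' but the correct assignment --- the one the paper uses, and the one your own application two sentences later silently assumes --- is the opposite: band lengths $\big|s_i^{\downarrow}-s_{i-1}^{\uparrow}\big|=r_1N_i$ and gap lengths $\big|s_i^{\uparrow}-s_i^{\downarrow}\big|=-r_2N_i'$. With your stated lemma, setting $(r_2,r_1)=(-\hbar/\ebar,\ebar)$ would force band lengths into $(\hbar/\ebar)\mathbb{Z}_+$ and gap lengths into $\ebar\mathbb{Z}_+$, which does not reproduce \eqref{BOMultiPhasePeriodicConditions} and \eqref{BOMultiPhaseBohrSommerfeldConditions}; Part~I would then come out as anisotropy $(-\ebar,\hbar/\ebar)$ rather than $(-\hbar/\ebar,\ebar)$. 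The correct assignment can be checked on a single box: placing $R(r_2,r_1)$ with its bottom vertex at $(a,0)$ in the orientation dictated by Lemma~\ref{AnisotropicLemma} (rows of the partition run along the positive-slope direction, so the $-r_2\sqrt{2}$ side lies along slope $+1$ and the $r_1\sqrt{2}$ side along slope $-1$) yields a profile whose single gap has horizontal extent $-r_2$ and whose single finite band has horizontal extent $r_1$. Relatedly, since Definition~\ref{AnisotropicProfileDEF} speaks only of translates of a rectangle with two given side lengths, you should say explicitly that the orientation of the rectangle relative to the two diagonal directions is fixed once and for all (as in Fig.~\ref{FIGrectangle} and Lemma~\ref{AnisotropicLemma}); otherwise the two assignments get conflated, which is presumably how the swap crept in.
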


\begin{proof} $f(c |\vec{s})$ is an anisotropic partition profile of anisotropy $(r_2, r_1)$ if and only if the band lengths $\big|s_i^{\downarrow} - s_{i-1}^{\uparrow} \big| = r_1 N_i$ and gap lengths $\big|s_i^{\uparrow} - s_i^{\downarrow} \big| = -r_2 N_i ' $ for $N_i, N_i ' \in \mathbb{Z}_+$ for $i=1, \ldots, n$.\end{proof}

\subsection{Motivation} \label{SECmotivation} Our Theorem~\ref{MAINTHEOREM} establishes the presence of classical multi-phase solutions in the work of Nazarov--Sklyanin \cite{NaSk2} and hence realizes Jack functions as quantum multi-phase states. Conversely, our Theorem~\ref{MAINTHEOREM} relates the results in Nazarov--Sklyanin \cite{NaSk2} to the semi-classical studies of quantum Benjamin--Ono dynamics out of equilibrium by Abanov--Wiegmann \cite{AbWi1}, Bettelheim--Abanov--Wiegmann~\cite{AbBeWi2}, and Wiegmann~\cite{Wieg1}. As will appear in~\cite{Moll5}, Theorem~\ref{MAINTHEOREM} implies that the semi-classical and small dispersion asymptotics in the author's thesis~\cite{Moll0} on Jack measures, a generalization of Okounkov's Schur measures~\cite{Ok1}, reflect the structure of quantum dispersive shock waves and quantum soliton trains emitted by coherent states as studied in Bettelheim--Abanov--Wiegmann~\cite{AbBeWi2}. Note that a classical version of these small dispersion asymptotics, relating dispersive action profiles of the classical hierarchy in~\cite{NaSk2} to the formation of classical dispersive shock waves, have already been discussed by the author in \cite[Section~8]{Moll1}.

\subsection{Outline} \label{SECoutline} In Section~\ref{SECcomments} we discuss our Theorem~\ref{MAINTHEOREM} and its relation to previous results. In Section~\ref{SECclassicalBO} we review the Hamiltonian and Lax operator of~(\ref{CBOE}). In Section~\ref{SECclassicalNS} we recall the classical Nazarov--Sklyanin hierarchy \cite{NaSk2} and its presentation in terms of dispersive action profiles from \cite{Moll1}. In Section~\ref{SECclassicalINT} we identify the classical global action variables of G\'erard--Kappeler~\cite{GerardKappeler2019} with the gaps in the dispersive action profiles from \cite{Moll1}. In Section~\ref{SECquantumBO} we define a geometric quantization of~(\ref{CBOE}) by quantizing the Hamiltonian and Lax operator. In Section~\ref{SECquantumINT} we show that the renormalization of the classical coupling (\ref{Renormalization}) in Abanov--Wiegmann~\cite{AbWi1} is implicit in the realization of Jack functions as quantum periodic Benjamin--Ono Hamiltonian eigenfunctions. In Section~\ref{SECquantumNS} we present the quantum Nazarov--Sklyanin hierarchy and its exact spectrum from~\cite{NaSk2}. In Section~\ref{SECclassicalMP} we recall finite gap conditions for multi-phase solutions from \cite{Moll1}. In Section~\ref{SECquantumMP} we derive formula (\ref{ClassicalActionsRESULT}) for the classical actions from results of G\'erard--Kappeler \cite{GerardKappeler2019}, establish the Bohr--Sommerfeld conditions (\ref{BOMultiPhaseBohrSommerfeldConditions}), and prove Theorem~\ref{MAINTHEOREM}.

\section{Comments and comparison with previous results} \label{SECcomments}

\subsection{Comparison with results for the sine-Gordon equation}
The \textit{WKB matching conditions} $\oint_{\upgamma} \big( \upalpha + \tfrac{\upmu_{\upgamma}}{4} \big) = 2 \pi \hbar N'$ are often said to ``correct'' (\ref{BSConditionIntro}) by the Maslov index $\oint_{\upgamma} \upmu_{\upgamma}$. However, in our Theorem~\ref{MAINTHEOREM}, the Bohr--Sommerfeld conditions are exact after renormalization without need of the Maslov index correction. This phenomena also occurs in the quantum sine-Gordon equation as emphasized in surveys by Coleman \cite{ColemanLumps}, Faddeev \cite{Faddeev40Years}, and Sklyanin--Smirnov--Takhtajan in \cite{FaddeevStudentSURVEY2017}. The renormalization (\ref{Renormalization}) for Benjamin--Ono in Abanov--Wiegmann \cite{AbWi1} is an analog of that in the semi-classical quantization of the sine-Gordon equation by Dashen--Hasslacher--Neveu \cite{DHN1974a, DHN1974b, DHN1975a}, Goldstone--Jackiw \cite{GoldstoneJackiw1975}, and Faddeev--Korepin \cite{KorepinFaddeev1975}. Our Theorem~\ref{MAINTHEOREM} is an analog of the result in Faddeev--Sklyanin--Takhatajan \cite{FaddeevSklyaninTakhtajan1979} and Coleman~\cite{Coleman1975} that this semi-classical quantization is exact.

\subsection{Comparison with results for the Calogero--Sutherland equation}

In \cite{AbWi1}, Abanov--Wiegmann give two derivations of the renormalization (\ref{Renormalization}). The first derivation in field theory is at 1-loop by an effective action and choice of counterterms in a semi-classical quantization of (\ref{CBOE}) following Jevicki \cite{Jevicki1992}. Part I of our Theorem~\ref{MAINTHEOREM} is a Hamiltonian counterpart to the 0-loop step in this first derivation, neglecting the infinitely-many transverse directions in the phase space of classical fields. The second derivation of~(\ref{Renormalization}) in hydrodynamics in \cite{AbWi1} uses the realization of (\ref{CBOE}) in Calogero--Sutherland hydrodynamics and builds upon the work of Andri\'{c}--Bardek \cite{AndricBardek1988}, Polychronakos \cite{Poly1995}, and Awata--Matsuo--Odake--Shiraishi \cite{AwMtOdSh}. For the quantum Calogero--Sutherland many-body problem, the analog of Part~II of our Theorem~\ref{MAINTHEOREM} -- namely, that after a shift the semi-classical quantization is exact -- is well-known: see reviews by Calogero~\cite{Calogero2008}, Etingof~\cite{Eti0}, Ruijenaars~\cite{Ruijsenaars1999}, and Sutherland~\cite{Suth0}. The works of Nazarov--Sklyanin~\cite{NaSk2, NaSk1} and Sergeev--Veselov \cite{SergVes, SergVes2} are exact extensions of the second hydrodynamic derivation of~(\ref{Renormalization}) in~\cite{AbWi1}.

\subsection{Comments on Hilbert schemes of points on surfaces} \label{SECnekrasov} Our notation $\ebar$ in (\ref{CBOE}), $\varepsilon_1$ in (\ref{Renormalization}), $\varepsilon_2$ in (\ref{RenormalizationCompanion}), and $a $ in (\ref{ClassicalPhaseSpace}) reflect the appearance of the quantum periodic Benjamin--Ono equation in equivariant cohomology of Hilbert schemes of points in $\C^2$ reviewed in \cite[Section~1.1.6]{Okounkov2018ICM}. To interpret our Theorem~\ref{MAINTHEOREM} in this context, note that our coefficient of classical dispersion $\ebar = \varepsilon_1+ \varepsilon_2$ is the deformation parameter of the Maulik--Okounkov Yangian \cite{MaulOk} while our coefficient of quantization $\hbar = - \varepsilon_1 \varepsilon_2$ is the handle-gluing element in \cite{MaulOk}. These $\ebar$, $\hbar$ appear in \cite[Section~1.1.2]{Okounkov2018ICM} in trading $\C^2$ for a~surface~$\mathcal{S}$. For other exact Bohr--Sommerfeld conditions in the related theory of Nekrasov \cite{Nek1}, see Mironov--Morozov~\cite{MironovMorozov2010}. Note that we have related dispersive action profiles of~(\ref{CBOE}) to profiles in Nekrasov--Shatashvili \cite{NekShat} and Nekrasov--Pestun--Shatashvili \cite{NekPesSha} in~\cite[Section~2.5]{Moll1}.

\section[Classical periodic Benjamin--Ono: Hamiltonian and Lax operator]{Classical periodic Benjamin--Ono:\\ Hamiltonian and Lax operator} \label{SECclassicalBO}

In this section we recall the formulation of the classical Benjamin--Ono equation (\ref{CBOE}) for $v$ periodic in~$x$ as a classical Hamiltonian system with respect to the Gardner--Faddeev--Zakharov symplectic form $\upomega_{\rm GFZ}$ and discuss a complex structure~$J$ on the classical phase space. We also introduce the classical Lax operator $L_{\bullet}(v; \ebar)$ of (\ref{CBOE}) and express the classical Hamiltonian in terms of $L_{\bullet}(v; \ebar)$.

\subsection[Classical phase space as Sobolev space from Gardner--Faddeev--Zakharov construction]{Classical phase space as Sobolev space\\ from Gardner--Faddeev--Zakharov construction} \label{SECclassicalPhaseSpace}

Define Fourier coefficients of $v\colon \mathbb{T} \rightarrow \R$ with the sign convention $ V_k = \int_0^{2 \pi} {\rm e}^{{\bf i} k x} v(x) \frac{ {\rm d}x}{2\pi}$ so that
\begin{gather} \label{FourierDEF} v(x) = \sum_{k \in \mathbb{Z}} V_k {\rm e}^{- {\bf i} kx}. \end{gather}
For real-valued $v$, $V_{-k} = \overline{V_k}$ for all $k \in \mathbb{Z}$.
For $a \in \R$, we choose as the classical phase space of~(\ref{CBOE}) the affine subspace $M(a)$ of the $s = -1/2$ real $L^2$-Sobolev space of $\mathbb{T}$: \begin{gather} \label{ClassicalPhaseSpace} M(a) = \left\{v(x)\colon ||v||_{-1/2}^2=2 \sum_{k=1}^{\infty} k^{-1} |V_k|^2 < \infty \ \text{and} \ V_0 = a\right\}. \end{gather}

\begin{defin} \label{ClassicalGFZDefinition} The Gardner--Faddeev--Zakharov Poisson bracket \begin{gather} \label{ClassicalBracket} \{ V_{-k}, V_{k'} \}_{\rm GFZ} = {\bf i} k \delta(k-k') \end{gather} is symplectic on the leaf $M(a)$ in (\ref{ClassicalPhaseSpace}) and defines a symplectic form $\upomega_{\rm GFZ}$ on $M(a)$. \end{defin}

 In geometric quantization of (\ref{CBOE}), we will use a compatible complex structure $J$ on $(M(a),\allowbreak \upomega_{\rm GFZ})$.

\begin{propo} \label{ClassicalPhaseSpaceCompatibility}The symplectic form $\upomega_{\rm GFZ}$ on $M(a)$ determined by \eqref{ClassicalBracket} is compatible with the complex structure on $M(a)$ defined by the spatial Hilbert transform $J$ with Fourier multiplier $J {\rm e}^{ \pm {\bf i} kx} = \mp {\bf i} {\rm e}^{\pm {\bf i}kx}$ which appears in \eqref{CBOE}. The compatibility of $\upomega_{\rm GFZ}$ and $J$ determines the real Sobolev inner product $g_{-1/2}$ on $M(a)$ of regularity $s=-1/2$ associated to the norm in \eqref{ClassicalPhaseSpace}. \end{propo}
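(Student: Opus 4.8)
The plan is to make all three structures explicit in the Fourier coordinates of \eqref{FourierDEF} and then verify the standard axioms of a compatible triple: $J^2=-1$, the $J$-invariance $\upomega_{\rm GFZ}(JX,JY)=\upomega_{\rm GFZ}(X,Y)$, and the taming condition $\upomega_{\rm GFZ}(X,JX)>0$. Once these hold, the metric is forced by $g(X,Y):=\upomega_{\rm GFZ}(X,JY)$, and the only remaining task is to recognize this $g$ as $g_{-1/2}$.

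First I would make the symplectic form explicit. A tangent vector to $M(a)$ is a real, mean-zero function $\delta v=\sum_{k\ne 0}\delta V_k\,{\rm e}^{-{\bf i}kx}$ subject to $\delta V_{-k}=\overline{\delta V_k}$; writing $\delta V_k=\delta x_k+{\bf i}\,\delta y_k$ for $k\geq 1$ gives global real coordinates $(x_k,y_k)_{k\geq 1}$. Inverting the Poisson tensor encoded in \eqref{ClassicalBracket}, equivalently deducing $\{x_k,y_k\}_{\rm GFZ}=k/2$ from $\{V_{-k},V_k\}_{\rm GFZ}={\bf i}k$, yields $\upomega_{\rm GFZ}=\sum_{k\geq 1}\frac{2}{k}\,dx_k\wedge dy_k$. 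Since the coefficients $2/k$ decay, a Cauchy--Schwarz estimate shows this series converges and defines a bounded two-form on the $s=-1/2$ tangent space.

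Next I would record the action of $J$ on Fourier modes. From $J{\rm e}^{\pm{\bf i}kx}=\mp{\bf i}\,{\rm e}^{\pm{\bf i}kx}$ one reads off that $J$ multiplies the $k$-th mode by ${\bf i}\operatorname{sgn}(k)$, so on the real frame $J\partial_{x_k}=\partial_{y_k}$ and $J\partial_{y_k}=-\partial_{x_k}$; in particular $J^2=-1$ on the mean-zero tangent space. A two-line computation on the coordinate frame then gives $\upomega_{\rm GFZ}(J\,\cdot,J\,\cdot)=\upomega_{\rm GFZ}(\cdot,\cdot)$, and the usual manipulation $\upomega_{\rm GFZ}(X,JY)=\upomega_{\rm GFZ}(JX,J^2Y)=\upomega_{\rm GFZ}(Y,JX)$ shows that $J$-invariance makes $g$ symmetric.

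Finally, compatibility forces the metric. Evaluating $g:=\upomega_{\rm GFZ}(\cdot,J\,\cdot)$ on the frame gives $g(\partial_{x_k},\partial_{x_k})=g(\partial_{y_k},\partial_{y_k})=2/k$ and $g(\partial_{x_k},\partial_{y_k})=0$, so $g$ is positive definite (hence $\upomega_{\rm GFZ}(X,JX)>0$) because $k\geq 1$. Rewriting in terms of $\delta V_k$ with $|\delta V_k|^2=\delta x_k^2+\delta y_k^2$ gives $g(\delta v,\delta v)=2\sum_{k\geq 1}k^{-1}|\delta V_k|^2=\|\delta v\|_{-1/2}^2$, which is precisely the norm in \eqref{ClassicalPhaseSpace}; thus $g=g_{-1/2}$. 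The one point I would highlight as the substantive content, rather than the routine algebra, is the regularity bookkeeping: the growing factor $k$ in the bracket \eqref{ClassicalBracket} inverts to the decaying weight $1/k$ in $\upomega_{\rm GFZ}$, and it is exactly this weight that the relation $g=\upomega_{\rm GFZ}(\cdot,J\,\cdot)$ transports into the metric, singling out the Sobolev index $s=-1/2$ and no other. I would therefore present the verification of the compatible-triple axioms as immediate and devote the emphasis to this identification of the regularity.
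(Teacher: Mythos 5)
Your proof is correct and takes the same route the paper intends: the paper disposes of this proposition in one line by saying it ``follows from the corresponding statements for $V_k \in \C$,'' i.e., a mode-by-mode verification in the Fourier coordinates, which is exactly the computation you carry out in full (the weights check out: $\{x_k,y_k\}=k/2$ inverts to $\upomega_{\rm GFZ}=\sum_k \tfrac{2}{k}\,{\rm d}x_k\wedge {\rm d}y_k$, consistent with \eqref{1formFORMULA}, and $g=\upomega_{\rm GFZ}(\cdot,J\cdot)$ reproduces the norm in \eqref{ClassicalPhaseSpace}). Your closing emphasis on how the weight $1/k$ singles out $s=-1/2$ is a worthwhile addition the paper leaves implicit.
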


 Proposition~\ref{ClassicalPhaseSpaceCompatibility} follows from the corresponding statements for $V_k \in \C$, as does the following:
 \begin{propo} \label{GFZformEXACT} The symplectic structure $\upomega_{\rm GFZ}$ on $M(a)$ is exact
\begin{gather*} \upomega_{\rm GFZ} = {\rm d} \upalpha_{\rm GFZ} \end{gather*} with canonical Liouville $1$-form $\upalpha_{\rm GFZ}$ given in the global coordinates of Fourier modes $V_k$ by \begin{gather} \label{1formFORMULA} \upalpha_{\rm GFZ} = 2 \sum_{k=1}^{\infty} k^{-1} \operatorname{Re}[V_k] d \operatorname{Im} [ V_k]. \end{gather}\end{propo}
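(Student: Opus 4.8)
The plan is to pass from the complex Fourier modes $V_k$ to the real coordinates $p_k = \operatorname{Re}[V_k]$ and $q_k = \operatorname{Im}[V_k]$, $k \geq 1$, which form a global coordinate system on $M(a)$: the reality condition $V_{-k} = \overline{V_k}$ together with the constraint $V_0 = a$ leave exactly the pairs $(p_k, q_k)_{k\geq 1}$ free. First I would rewrite the Gardner--Faddeev--Zakharov bracket \eqref{ClassicalBracket} in these coordinates. Writing $V_{\pm k} = p_k \pm {\bf i} q_k$ and expanding $\{V_{-k}, V_{k'}\}_{\rm GFZ} = {\bf i} k \delta(k - k')$ by bilinearity and antisymmetry gives $2{\bf i}\{p_k, q_{k'}\}_{\rm GFZ} = {\bf i} k \delta(k-k')$ together with $\{p_k, p_{k'}\}_{\rm GFZ} = \{q_k, q_{k'}\}_{\rm GFZ} = 0$; hence $\{p_k, q_{k'}\}_{\rm GFZ} = \tfrac{1}{2} k \delta(k-k')$. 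The symplectic form dual to this bracket is then the block-diagonal form $\upomega_{\rm GFZ} = \sum_{k=1}^\infty \tfrac{2}{k}\, dp_k \wedge dq_k$, each $(p_k, q_k)$ pair being a rescaled Darboux pair.

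With $\upomega_{\rm GFZ}$ in this form, the exactness claim becomes a termwise computation. Writing the proposed potential as $\upalpha_{\rm GFZ} = 2\sum_{k=1}^\infty k^{-1} p_k\, dq_k$ and differentiating term by term gives $d\upalpha_{\rm GFZ} = 2\sum_{k=1}^\infty k^{-1}\, dp_k \wedge dq_k$, which matches $\upomega_{\rm GFZ}$ exactly. Because the construction of $\upomega_{\rm GFZ}$ on $M(a)$ is inherited from the same bracket on the space of complex sequences, this computation can be carried out first for unconstrained $V_k \in \C$ and then restricted, exactly as in the reduction already used for Proposition~\ref{ClassicalPhaseSpaceCompatibility}; this is what the sentence preceding the statement anticipates.

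The one point that needs care --- and the main obstacle --- is that $M(a)$ is infinite-dimensional, so I must check that $\upalpha_{\rm GFZ}$ is a genuinely well-defined, continuous $1$-form in the $s=-\tfrac12$ Sobolev topology rather than a merely formal series. The tangent space at $v$ consists of mean-zero fields $\delta v$ with $\|\delta v\|_{-1/2}^2 = 2\sum_{k\geq 1} k^{-1}|\delta V_k|^2 < \infty$, and on such vectors Cauchy--Schwarz gives $|\upalpha_{\rm GFZ}(\delta v)| = |2\sum_k k^{-1} p_k \delta q_k| \leq \|v\|_{-1/2}\,\|\delta v\|_{-1/2}$, so the series converges absolutely and defines a bounded linear functional; the same estimate applied to $d\upalpha_{\rm GFZ}(\delta_1 v, \delta_2 v) = 2\sum_k k^{-1}(\delta_1 p_k \delta_2 q_k - \delta_2 p_k \delta_1 q_k)$ shows the $2$-form is bounded as well. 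The weight $k^{-1}$ in \eqref{1formFORMULA} is therefore not cosmetic: it is precisely the factor matching the $s=-\tfrac12$ norm that makes both $\upalpha_{\rm GFZ}$ and $d\upalpha_{\rm GFZ}$ continuous. Once boundedness is established, exactness follows from the termwise identity above, and smoothness of $\upalpha_{\rm GFZ}$ is immediate since it is fiberwise linear in the coordinates.
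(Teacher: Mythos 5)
Your proposal is correct and follows essentially the same route as the paper, which simply notes that Proposition~\ref{GFZformEXACT} ``follows from the corresponding statements for $V_k \in \C$'' and leaves the mode-by-mode Darboux computation $\{p_k, q_{k'}\}_{\rm GFZ} = \tfrac{1}{2}k\,\delta(k-k')$, $\upomega_{\rm GFZ} = 2\sum_k k^{-1}\,{\rm d}p_k \wedge {\rm d}q_k = {\rm d}\bigl(2\sum_k k^{-1} p_k\,{\rm d}q_k\bigr)$ implicit. Your additional Cauchy--Schwarz verification that $\upalpha_{\rm GFZ}$ and ${\rm d}\upalpha_{\rm GFZ}$ are bounded in the $s=-\tfrac12$ topology is a worthwhile point of rigor that the paper omits, but it does not change the argument.
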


\subsection{Classical Hamiltonian at criticality} \label{SECclassicalHamiltonianCRIT} The Benjamin--Ono equation (\ref{CBOE}) is Hamiltonian:
\begin{defin} \label{DEFclassicalBOhamiltonian} For $\ebar >0$, the classical periodic Benjamin--Ono Hamiltonian $O_3(\ebar)$ is \begin{gather} \label{ClassicalBOHamiltonian} O_3 ( \ebar)|_v= 3\sum_{h_1, h_2 = 0}^{\infty} V_{h_1} V_{h_2 - h_1} V_{-h_2} + 3\sum_{h=0}^{\infty} \big ( \ebar h - a \big ) V_{h} V_{-h} + a^3, \end{gather}
a partially-defined functional $v \mapsto O_3(\ebar)|_v$ on $(M(a), \upomega_{\rm GFZ})$ with $a = \int_0^{2\pi} v(x) \tfrac{{\rm d}x}{2 \pi}$ which generates the equations~(\ref{CBOE}) for $v(x,t; \ebar)$ as the Hamilton equations with respect to the bracket~(\ref{ClassicalBracket}).
\end{defin}

The a priori redundant constant term $a^3$ in (\ref{ClassicalBOHamiltonian}) emerges naturally in Proposition~\ref{Kerov337} below.

The classical Hamilton equations for (\ref{ClassicalBOHamiltonian}) in Definition~\ref{DEFclassicalBOhamiltonian} are formal: $M(a)$ is larger than the space $L^2(\mathbb{T})$ in which (\ref{CBOE}) is known to be well-posed \cite{GerardKappeler2019, Molinet}. From the perspective of dispersive equations, it is a coincidence that the symplectic space $M(a)$ of (\ref{CBOE}) corresponds to its critical regularity $s_c = -1/2$ as in Proposition~\ref{ClassicalPhaseSpaceCompatibility}. For discussion of criticality in~(\ref{CBOE}), see Saut~\cite{Saut2018}.

\subsection{Classical Lax operator as generalized Toeplitz operator} \label{SECclassicalLAX} We now review the definition of the classical Lax operator for Benjamin--Ono and its restriction to $L^2$ periodic Hardy space $H_{\bullet}$. Throughout the paper, the subscript ``$\bullet$'' denotes a construction defined by the Szeg\H{o} projection $\uppi_{\bullet}$.
\begin{defin} \label{HardyDef} Using the realization $\mathbb{T} = \{w \in \C \colon |w|=1\}$, the $L^2$-Hardy space $H_{\bullet}$ on~$\mathbb{T}$ is the Hilbert space closure of $\C[ w ]$ in $H=L^2(\mathbb{T})$. Equivalently, in terms of the {Szeg\H{o} projection} \begin{gather*} 
 ( \uppi_{\bullet} \Phi)(w_+) = \oint_{\mathbb{T}} \frac{ \Phi(w_-) }{ w_- - w_+} \frac{ {\rm d}w_- }{ 2 \pi {\bf i}} , \end{gather*} the periodic $L^2$-Hardy space $H_{\bullet}$ is the image of $\uppi_{\bullet}$ applied to $H = L^2(\mathbb{T})$.
\end{defin}

\begin{defin} \label{ClassicalBOLaxOperatorHERE} For $\ebar >0$ and bounded $v$, the classical Lax operator of~(\ref{CBOE}) is the unbounded self-adjoint operator $L_{\bullet}(v; \ebar)$ in Hardy space $H_{\bullet}$ defined to be the unique self-adjoint extension of the essentially self-adjoint operator \begin{gather}\label{ClassicalLaxMatrix} L_{\bullet}(v; \ebar) \big |_{\C[ w ]} =
 \begin{bmatrix}
(-0 \ebar + V_0 )& V_{-1} & V_{-2} & V_{-3} & \cdots & \\
V_{1} & (-1 \ebar+ V_0) & V_{-1} & V_{-2} & \ddots & \\
 V_{2} & V_{1} & (-2 \ebar+ V_0) & V_{-1} & \ddots & \\
 V_3 & V_2 & V_1 &(-3 \ebar + V_0 )& \ddots & \\
 \vdots & \ddots & \ddots & \ddots & \ddots \\
 \end{bmatrix} \end{gather}
 presented in the basis $|h \rangle = w^h$ for $h=0,1,2,\ldots$ of $\C[ w ]$. Equivalently, the Lax operator \begin{gather} \label{ClassicalLaxINTRINSIC} L_{\bullet}(v; \ebar)= - \ebar D_{\bullet} + L_{\bullet}(v) \end{gather} is the generalized Toeplitz operator of order $1$, where $L(v)$ is the operator of multiplication by~$v$, $L_{\bullet}(v) = \uppi_{\bullet} L(v) \uppi_{\bullet}$ is the Toeplitz operator of symbol $v$, and $D_{\bullet}$ acts by $D_{\bullet}| h \rangle = h | h \rangle$. \end{defin}

For background on Toeplitz operators, see Deift--Its--Krasovsky~\cite{DeiftItsKra}. The classical Lax ope\-rator $L_{\bullet}(v; \ebar)$ is essentially self-adjoint on $\C[ w ]$ since it is a bounded perturbation of $D_{\bullet} = \uppi_{\bullet}( w \partial_w ) \uppi_{\bullet}$ by a Toeplitz operator $L_{\bullet}(v)$ of bounded symbol~$v$. Note for $\ebar >0$ in~(\ref{ClassicalLaxINTRINSIC}), $- L_{\bullet}(v; \ebar)$ is elliptic.

\subsection{Classical Hamiltonian from classical Lax operator} \label{SECclassicalLAXHAMDICTIONARY}
By direct computation, one has: \begin{propo} \label{ClassicalBOHamiltonianVIALAX} The Hamiltonian~\eqref{ClassicalBOHamiltonian} on $(M(a), \upomega_{\rm GFZ})$ can be recovered as \begin{gather} \label{ClassicalRECOVERY} O_3(\ebar) = 3T_3^{\uparrow}(\ebar)- 3aT_2^{\uparrow}(\ebar) + a^3, \end{gather} where $a = \int_0^{2\pi} v(x) \tfrac{{\rm d}x}{2 \pi}$ and
\begin{gather} \label{ClassicalT2} T_2^{\uparrow}(\ebar) |_v = \big\langle 0 | L_{\bullet}(v; \ebar)^{2} | 0 \big\rangle, \\
\label{ClassicalT3} T_{3}^{\uparrow}(\ebar)|_v = \big \langle 0 | L_{\bullet}(v; \ebar)^{3} | 0 \big\rangle \end{gather}
are matrix elements of powers of the classical Lax operator~\eqref{ClassicalLaxMatrix} where $|0 \rangle = w^0 = 1\in H_{\bullet}$.
 \end{propo}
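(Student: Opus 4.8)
The plan is to prove the identity \eqref{ClassicalRECOVERY} by a direct expansion of the matrix elements $T_2^{\uparrow}(\ebar) = \langle 0 | L_{\bullet}(v;\ebar)^2 | 0\rangle$ and $T_3^{\uparrow}(\ebar) = \langle 0 | L_{\bullet}(v;\ebar)^3 | 0\rangle$ in the orthonormal basis $|h\rangle = w^h$, $h \geq 0$, of $\C[w]$, reading the entries straight off \eqref{ClassicalLaxMatrix}, namely $\langle i | L_{\bullet}(v;\ebar) | j\rangle = -\ebar\, i\,\delta(i-j) + V_{i-j}$ with $V_0 = a$ on the diagonal. Equivalently I would use the intrinsic form \eqref{ClassicalLaxINTRINSIC}, $L_{\bullet}(v;\ebar) = -\ebar D_{\bullet} + L_{\bullet}(v)$, together with the Toeplitz matrix elements $\langle i | L_{\bullet}(v) | j\rangle = V_{i-j}$ and $D_{\bullet}|h\rangle = h|h\rangle$.

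First I would insert a resolution of the identity $\sum_{h \geq 0} |h\rangle\langle h|$ on $H_{\bullet}$ into each power. For the square this gives $T_2^{\uparrow} = \sum_{h \geq 0} \langle 0 | L_{\bullet} | h\rangle \langle h | L_{\bullet} | 0\rangle = \sum_{h \geq 0} V_{-h} V_h$; the diagonal operator $D_{\bullet}$ contributes nothing here, since $D_{\bullet}|0\rangle = 0$ forces the only surviving diagonal term to carry eigenvalue zero. For the cube I would insert two resolutions of the identity and use $\langle 0 | D_{\bullet} = 0 = D_{\bullet}|0\rangle$ to kill every term in which $D_{\bullet}$ sits at an endpoint, so that only the middle factor's diagonal survives and $T_3^{\uparrow} = \sum_{p,q \geq 0} V_{-p} V_{p-q} V_q - \ebar \sum_{p \geq 0} p\, V_{-p} V_p$.

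Next I would match terms. Relabeling $h_1 = q$, $h_2 = p$ identifies the triple sum $\sum_{p,q \geq 0} V_{-p} V_{p-q} V_q$ with the cubic term $\sum_{h_1,h_2 \geq 0} V_{h_1} V_{h_2-h_1} V_{-h_2}$ of \eqref{ClassicalBOHamiltonian}, while the single sum reproduces the dispersive term $\ebar h\, V_h V_{-h}$. The role of the correction $-3a\, T_2^{\uparrow} + a^3$ is then purely to account for the modes $V_0 = a$ that are hidden on the diagonal and at the endpoints of the ordered sums, producing the $-a$ in the quadratic coefficient $(\ebar h - a)$ and reconciling the constant term. Assembling $3T_3^{\uparrow} - 3a\, T_2^{\uparrow} + a^3$ and collecting the constant, quadratic, and cubic pieces should then reproduce \eqref{ClassicalBOHamiltonian} exactly.

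The main obstacle is bookkeeping rather than ideas. The subtlety lies in the Szeg\H{o} projection: the intermediate states run only over $h \geq 0$, which is precisely what converts the naive convolutions into the ordered (``upper'', hence the superscript $\uparrow$) sums $\sum_{h_1,h_2 \geq 0}$ in \eqref{ClassicalBOHamiltonian}, and one must isolate the $V_0 = a$ contributions sitting both on the diagonal and at the endpoints so that the counterterm $-3a\, T_2^{\uparrow} + a^3$ removes exactly the right pieces. The one point demanding care is the coefficient and sign of the purely dispersive contribution $-\ebar D_{\bullet}$, which enters $T_3^{\uparrow}$ but not $T_2^{\uparrow}$ and must be matched, through the indexing convention fixed by \eqref{ClassicalLaxMatrix}, to the $\ebar h$ term of \eqref{ClassicalBOHamiltonian}.
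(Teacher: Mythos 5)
Your strategy---insert the resolution of the identity $\sum_{h\geq 0}|h\rangle\langle h|$, use $D_{\bullet}|0\rangle=0=\langle 0|D_{\bullet}$ to isolate the single surviving dispersive term, and match against \eqref{ClassicalBOHamiltonian}---is exactly the ``direct computation'' the paper has in mind (it offers no further proof), and your expansions
\begin{gather*}
T_2^{\uparrow}(\ebar)=\sum_{h\geq 0}V_{-h}V_h,\qquad
T_3^{\uparrow}(\ebar)=\sum_{p,q\geq 0}V_{-p}V_{p-q}V_q-\ebar\sum_{p\geq 0}p\,V_{-p}V_p
\end{gather*}
are correct consequences of \eqref{ClassicalLaxMatrix}; the constant bookkeeping ($3a^3-3a\cdot a^2+a^3=a^3$) and the identification of the cubic sums under $h_1=q$, $h_2=p$ also check out.

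However, the one step you yourself single out as ``demanding care''---the sign of the dispersive term---is precisely where the argument as written does not close, and you assert the match rather than verify it. With the diagonal of \eqref{ClassicalLaxMatrix} equal to $-h\ebar+V_0$, your (correct) expansion gives $3T_3^{\uparrow}-3aT_2^{\uparrow}+a^3=3\sum_{h_1,h_2}V_{h_1}V_{h_2-h_1}V_{-h_2}+3\sum_h(-\ebar h-a)V_hV_{-h}+a^3$, i.e., the coefficient $-\ebar h-a$ rather than the $\ebar h-a$ appearing in \eqref{ClassicalBOHamiltonian}: the two dispersive terms have \emph{opposite} signs. So either one of the two displayed formulas carries a sign slip that your computation should expose and correct (e.g., the diagonal of the Lax matrix read as $+h\ebar+V_0$ restores the identity, and is what consistency with the eigenvalue formula \eqref{NSMegaFormula} and \eqref{GreatFormula} requires), or the claimed matching is simply false as stated. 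A complete proof must confront this explicitly---write out the $p=q$ term of $\langle 0|L_{\bullet}TD_{\bullet}T|0\rangle$-type contribution with a definite sign convention and exhibit the equality---rather than declare that the single sum ``reproduces the dispersive term.'' As it stands, the critical line of the proposal is an unproved (and, with the printed conventions, incorrect) assertion.
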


We use the same notation ``$\uparrow$'' as in multi-phase parameters $s_i^{\uparrow}$ in anticipation of results in Section~\ref{SECclassicalMP}.

\section[Classical Nazarov--Sklyanin hierarchy: dispersive action profiles]{Classical Nazarov--Sklyanin hierarchy:\\ dispersive action profiles} \label{SECclassicalNS}

In this section we recall the classical integrable hierarchy for~(\ref{CBOE}) from Nazarov--Sklyanin~\cite{NaSk2}, a collection of Poisson commuting Hamiltonians built from the Lax operator~(\ref{ClassicalLaxMatrix}). We also present the dispersive action profiles introduced by the author in~\cite{Moll1} which encode this classical hierarchy through spectral shift functions.

\subsection{Classical Nazarov--Sklyanin hierarchy} \label{SECclassicalNSpresentation} The following generalizes (\ref{ClassicalT2}) and (\ref{ClassicalT3}): \begin{defin} The classical Nazarov--Sklyanin hierarchy is the family of Hamiltonians
\begin{gather} \label{ClassicalNSHierarchy}T_{\ell}^{\uparrow}(\ebar)|_v = \big\langle 0 | L_{\bullet} (v; \ebar)^{\ell} | 0 \big\rangle \end{gather}
 defined as matrix elements of the $\ell$th power of the Lax operator \eqref{ClassicalLaxMatrix} for $|0 \rangle = w^0 =1 \in H_{\bullet}$. \end{defin}
\begin{theom}[Nazarov--Sklyanin \cite{NaSk2}] \label{ClassicalNStheorem} Restricting to the dense subspace in $M(a)$ of $v(x)$ with finite Fourier series in which all Hamiltonians \eqref{ClassicalNSHierarchy} are well-defined, for any $\ell_1, \ell_2 =0,1,2,3,\ldots $,
\begin{gather*}
\big\{T_{\ell_1}^{\uparrow}(\ebar) , T_{\ell_2}^{\uparrow}(\ebar) \big\}_{\rm GFZ} = 0 \end{gather*}
the \looseness=1 classical observables \eqref{ClassicalNSHierarchy} pairwise commute for the Gardner--Faddeev--Zakharov brac- \linebreak ket~\eqref{ClassicalBracket}.\end{theom}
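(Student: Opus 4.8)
The plan is to prove involution for the entire hierarchy at once by packaging the Hamiltonians \eqref{ClassicalNSHierarchy} into the diagonal resolvent (Weyl function)
\[
T^{\uparrow}(\lambda; \ebar)|_v = \big\langle 0 \big| (\lambda - L_{\bullet}(v;\ebar))^{-1} \big| 0 \big\rangle = \sum_{\ell = 0}^{\infty} \lambda^{-\ell - 1} T_{\ell}^{\uparrow}(\ebar)|_v,
\]
the Cauchy transform of the spectral measure of $L_{\bullet}$ at the cyclic vector $|0\rangle$. Since the $T_{\ell}^{\uparrow}$ are the Taylor coefficients of $T^{\uparrow}(\lambda)$ at $\lambda = \infty$, it suffices to show $\{T^{\uparrow}(\lambda), T^{\uparrow}(\mu)\}_{\rm GFZ} = 0$ for two formal spectral parameters. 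On the dense subspace of $v$ with finite Fourier series the operator $L_{\bullet}(v;\ebar)$ is banded, so each coefficient $\langle 0| L_{\bullet}^{a} E_m L_{\bullet}^{b}|0\rangle$ below is a finite sum and $T^{\uparrow}(\lambda)$ may be manipulated rigorously as a formal power series in $\lambda^{-1}$.

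First I would differentiate. From $L_{\bullet}(v;\ebar) = -\ebar D_{\bullet} + \sum_m V_m E_m$ with the truncated shifts $E_m = \uppi_{\bullet} w^{-m} \uppi_{\bullet}$, one has $\partial_{V_m} L_{\bullet} = E_m$, so writing $R_\lambda = (\lambda - L_{\bullet})^{-1}$ gives $\partial_{V_m} T^{\uparrow}(\lambda) = \langle 0| R_\lambda E_m R_\lambda|0\rangle$. Inserting this into the Gardner--Faddeev--Zakharov bracket \eqref{ClassicalBracket}, whose only nonzero pairings are $\{V_{-k}, V_k\}_{\rm GFZ} = {\bf i}k$, yields
\[
\{T^{\uparrow}(\lambda), T^{\uparrow}(\mu)\}_{\rm GFZ} = \sum_{k=1}^{\infty} {\bf i} k \Big( \big\langle 0 \big| R_\lambda E_{-k} R_\lambda \big| 0\big\rangle \big\langle 0 \big| R_\mu E_{k} R_\mu \big| 0\big\rangle - (\lambda \leftrightarrow \mu) \Big),
\]
an $r$-matrix--type contraction of the two diagonal resolvents against the kernel $\sum_{k \ge 1} {\bf i}k \, (E_{-k} \otimes E_k - E_k \otimes E_{-k})$.

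The engine of the cancellation is the exact identity $[D_{\bullet}, E_k] = -k E_k$ on $\C[w]$, which lets me trade the weight ${\bf i}k$ for a commutator: since $-\ebar D_{\bullet} = L_{\bullet} - L_{\bullet}(v)$,
\[
{\bf i} k E_k = -{\bf i}[D_{\bullet}, E_k] = \tfrac{{\bf i}}{\ebar}[L_{\bullet}, E_k] - \tfrac{{\bf i}}{\ebar}[L_{\bullet}(v), E_k].
\]
The first piece telescopes inside a resolvent, $R_\mu [L_{\bullet}, E_k] R_\mu = R_\mu E_k - E_k R_\mu$, collapsing one of the two resolvents; summed over $k$ and antisymmetrized in $\lambda, \mu$, I expect this bulk contribution to cancel by the same reorganization that puts spectral invariants of a Lax operator in involution. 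This is the routine part.

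The main obstacle is the second piece, $[L_{\bullet}(v), E_k] = \sum_m V_m [E_m, E_k]$, where the generalized-Toeplitz structure is decisive. Truncated shifts of the same sign commute, so a direct check shows $[E_m, E_k]$ is nonzero only for $\operatorname{sgn}(m) = -\operatorname{sgn}(k)$ and is then a finite-rank operator supported on the lowest modes $|0\rangle, |1\rangle, \ldots$ --- precisely the Szeg\H{o}-projection boundary of Hardy space $H_{\bullet}$. The crux of the theorem is to show that when these finite-rank Hankel corners are paired with $R_\lambda|0\rangle$ and $R_\mu|0\rangle$ and antisymmetrized in $\lambda, \mu$, they cancel identically. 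I expect to organize this by exhibiting the rank-one companion of the Nazarov--Sklyanin Lax operator, so that the boundary contribution itself takes commutator form relative to the distinguished vector $|0\rangle$. Controlling these corners, rather than the bulk telescoping, is where the genuine content of involution for \eqref{ClassicalNSHierarchy} lies.
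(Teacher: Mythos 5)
Your reduction to the generating function $T^{\uparrow}(\lambda;\ebar)|_v=\langle 0|(\lambda-L_{\bullet}(v;\ebar))^{-1}|0\rangle$ is exactly the first move made in the literature this theorem is quoted from: the paper does not prove Theorem~\ref{ClassicalNStheorem} itself but imports it from Nazarov--Sklyanin, who establish Poisson-commutativity of precisely this resolvent matrix element \eqref{ClassicalNSgenerating}, and it records two further routes (G\'erard--Kappeler's Lax pair for the flow generated by $T^{\uparrow}(u|\ebar)$, and the $\hbar$-expansion of the quantum commutativity in Theorem~\ref{QuantumNStheorem}). Your variational setup is also correct: $\partial_{V_m}T^{\uparrow}(\lambda)=\langle 0|R_{\lambda}E_mR_{\lambda}|0\rangle$, the bracket contraction against $\sum_{k\geq 1}{\bf i}k\,(E_{-k}\otimes E_k-E_k\otimes E_{-k})$, the exact identity $[D_{\bullet},E_k]=-kE_k$ on $\C[w]$, and the resolvent collapse $R_{\mu}[L_{\bullet},E_k]R_{\mu}=R_{\mu}E_k-E_kR_{\mu}$ all check out.

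The gap is that the proof stops where the theorem begins. Both cancellations are asserted rather than carried out: you write that you ``expect'' the bulk contribution to cancel ``by the same reorganization that puts spectral invariants of a Lax operator in involution,'' and you explicitly defer the finite-rank boundary terms coming from $[L_{\bullet}(v),E_k]=\sum_m V_m[E_m,E_k]$ to a hoped-for ``rank-one companion'' that is never exhibited. But $T^{\uparrow}(\lambda)$ is \emph{not} a trace of a power of $L_{\bullet}$ --- it is a matrix element at the distinguished vector $|0\rangle$ --- so the standard argument for involution of spectral invariants does not apply to the bulk term either; the bulk and boundary pieces must be cancelled against each other, and this interplay is the entire content of the result. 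This is why the existing proofs each supply an additional mechanism: Nazarov--Sklyanin route the argument through the commuting operators $\widehat{A}(u|\ebar,\hbar)$ of \cite{NaSk1} and the Baker--Akhiezer function \eqref{ClassicalNSbakerakhiezer}, while G\'erard--Kappeler construct an explicit Lax pair for the $T^{\uparrow}(u|\ebar)$-flow. Until you produce the identity that makes the antisymmetrized Hankel corners vanish (or exhibit the companion operator you allude to), what you have is a correct reformulation of the claim, not a proof of it.
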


Building upon their work \cite{NaSk1}, in \cite{NaSk2} Nazarov--Sklyanin define a classical Baker--Akhiezer function \begin{gather} \label{ClassicalNSbakerakhiezer} \Phi^{\rm BA}(u,w| v; \ebar) = \frac{1}{u - L_{\bullet}(v; \ebar)} | 0 \rangle \end{gather}of $u \in \C \setminus \R$, $w = {\rm e}^{{\bf i} x}$, and prove Theorem~\ref{ClassicalNStheorem} by showing Poisson-commutativity of all
\begin{gather} \label{ClassicalNSgenerating} T^{\uparrow}(u | \ebar)|_v = \left\langle 0 | \frac{1}{u - L_{\bullet}(v; \ebar)} | 0 \right\rangle, \end{gather}
which suffices since $T_{\ell}^{\uparrow}(\ebar)|_v$ in (\ref{ClassicalNSHierarchy}) is the coefficient of $u^{-\ell -1}$ in the expansion of (\ref{ClassicalNSgenerating}) at~$\infty$. In~$H_{\bullet}$, $\Phi^{\rm BA}(u,w | v; \ebar) = \sum\limits_{h=0}^{\infty} \Phi_h^{\rm BA}(u | v; \ebar) w^h$ and $T^{\uparrow}(u | \ebar)|_v = \Phi_0^{\rm BA}(u | v; \ebar)$, i.e., (\ref{ClassicalNSgenerating}) is the average value of~(\ref{ClassicalNSbakerakhiezer}) on~$\mathbb{T}$. Both (\ref{ClassicalNSbakerakhiezer}), (\ref{ClassicalNSgenerating}) are degenerations of quantum objects in Nazarov--Sklyanin~\cite{NaSk2} as we review in Section~\ref{SECquantumNSfinally}.

In recent work, G\'erard--Kappeler \cite{GerardKappeler2019} independently discovered the generating function (\ref{ClassicalNSgenerating}) and classical Baker--Akhiezer function (\ref{ClassicalNSbakerakhiezer}) from \cite{NaSk2} and gave a new proof of Theorem~\ref{ClassicalNStheorem} by constructing a new Lax pair for the classical Hamiltonian flow generated by (\ref{ClassicalNSgenerating}) with respect to the Poisson bracket (\ref{ClassicalBracket}).

\subsection{Embedded principal minor of classical Lax operator} \label{SECembeddedprincipalminor} The material in this section Section~\ref{SECembeddedprincipalminor} and the next section Section~\ref{SECkmk} is necessary in order to define in the following section Section~\ref{SECdispersiveactionprofiles} the dispersive action profiles for (\ref{ClassicalNSgenerating}) introduced by the author in~\cite{Moll1}. From now on, we assume $v$ bounded.

\begin{defin} \label{DEFLaxOpEmbeddedPrincipalMinor} The {embedded principal minor} $L_+(v; \ebar)$ of the Lax operator $L_{\bullet}(v; \ebar)$ is the unique self-adjoint extension to Hardy space $H_{\bullet}$ of the essentially self-adjoint operator in $\C[ w ]$ \begin{gather}\label{LaxEmbeddedPrincipalMinor} L_{+}(v; \ebar) \big |_{\C[ w ]} = \begin{bmatrix}
\ \ \ \ \ \ \ 0 \ \ \ \ \ \ \ & 0 & 0 & 0 & \cdots & \\
0& (-1 \ebar+ V_0) & V_{-1} & V_{-2} & \ddots & \\
0 & V_{1} & (-2 \ebar+ V_0) & V_{-1} & \ddots & \\
0& V_2 & V_1 &(-3 \ebar + V_0 )& \ddots & \\
 \vdots & \ddots & \ddots & \ddots & \ddots \\
 \end{bmatrix}. \end{gather} \end{defin}
 The operator (\ref{LaxEmbeddedPrincipalMinor}) is block diagonal $L_+(v; \ebar) = 0 \oplus L_+^{\perp}(v; \ebar)$ with respect to the orthogonal decomposition $H_{\bullet} = H_0 \oplus H_+$ of Hardy space where $H_0$ is the span of $| 0 \rangle = w^0=1$ and $L_+^{\perp}(v; \ebar)$ is the principal minor of $L_{\bullet}(v; \ebar)$. We review the role of (\ref{LaxEmbeddedPrincipalMinor}) in Nazarov--Sklyanin \cite{NaSk2} in Section~\ref{SECquantumNSfinally}.

\subsection{Essential self-adjointness and perturbation determinants} \label{SECkmk} The next results are from \cite[Section~3]{Moll1} and give a generalization of Cauchy's interlacing theorem from finite-rank to essentially self-adjoint operators:

\begin{propo}[\cite{Moll1}]\label{Paper1KMK}
Let $L_{\bullet}$ be a self-adjoint operator in a Hilbert space $H_{\bullet}$, $\psi_0 \in H_{\bullet}$ fixed, $H_0$
 the span of $\psi_0$, $H_+$ the orthogonal complement of $H_0$ in $H_{\bullet}$, and $L_+$ the embedded
 principal minor which is by definition block diagonal in $H_{\bullet} = H_0 \oplus H_+$ of the form
 $L_{+}= 0 \oplus L_+^{\perp}$ where~$L_+^{\perp}$ is the principal minor of $L_{\bullet}$ on $H_+$. If
 $L_{\bullet}$ is essentially self-adjoint on the $L_{\bullet}$ orbit of~$\psi_0$, then
 \begin{gather} \label{ClassicalOperatorKMK} \langle \psi_0 | \frac{1}{u - L_{\bullet}} | \psi_0 \rangle = T^{\uparrow}(u ) = \frac{1}{u} \cdot \frac{ \det_{H_{\bullet}}
 ( u - L_{+})}{ \det_{H_{\bullet}} (u - L_{\bullet} )}
 \end{gather}
for any $u \in \C \setminus \R$ the resolvent matrix element is
 a multiple of the perturbation determinant \begin{gather} \label{PertDetDef} \frac{ \det_{H_{\bullet}} (u - L_+) }
 { \det_{H_{\bullet}} (u -L_{\bullet}) } := \det\nolimits_{H_{\bullet}}
\big (\mathbbm{1} + (L_{\bullet}- L_+)( u - L_{\bullet})^{-1} \big ),
 \end{gather}
which is well-defined by the Fredholm determinant since $L_{\bullet} - L_+$ is rank $2$ hence trace class.
 \end{propo}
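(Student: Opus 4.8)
The plan is to reduce the asserted identity to the classical finite-rank computation and then upgrade it to the essentially self-adjoint setting through the perturbation-determinant formalism. In finite dimensions, if $\psi_0$ is a unit vector and one works in a basis whose first element is $\psi_0$, then Cramer's rule (cofactor expansion of the resolvent) gives $\langle \psi_0 | (u - L_\bullet)^{-1} | \psi_0\rangle = \det(u - L_+^{\perp})/\det(u - L_\bullet)$, where $L_+^{\perp}$ is the compression of $L_\bullet$ to $H_+$. Since $L_+ = 0 \oplus L_+^{\perp}$ carries one extra eigenvalue $0$ on $H_0$, one has $\det(u - L_+) = u\det(u - L_+^{\perp})$, which is exactly the claimed $\tfrac1u \det(u - L_+)/\det(u - L_\bullet)$. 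This is the resolvent form of Cauchy interlacing, and the task is to make sense of these determinants when $H_\bullet$ is infinite-dimensional.

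The structural observation I would establish first is that $A := L_\bullet - L_+$ has rank $2$. Writing $b = P_+ L_\bullet\psi_0$ for the off-diagonal block, where $P_+$ is the orthogonal projection onto $H_+$, a direct check on $v = \langle\psi_0,v\rangle\psi_0 + P_+ v$ and the self-adjointness relation $\langle\psi_0, L_\bullet y\rangle = \langle L_\bullet\psi_0, y\rangle$ give the dyadic form $A v = \langle\psi_0,v\rangle L_\bullet\psi_0 + \langle b, v\rangle\psi_0$, so that $A = |L_\bullet\psi_0\rangle\langle\psi_0| + |\psi_0\rangle\langle b|$. Once $\psi_0$ lies in the domain of $L_\bullet$ this is a bounded operator of rank at most $2$, hence trace class, so the perturbation determinant $\det_{H_\bullet}\big(\mathbbm{1} + A(u - L_\bullet)^{-1}\big)$ defining the right-hand ratio is legitimate.

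With this in hand I would compute the determinant exactly, with no truncation. Setting $R = (u - L_\bullet)^{-1}$ and factoring $A = FG$ through $\C^2$ via $Fe_1 = L_\bullet\psi_0$, $Fe_2 = \psi_0$ and $Gv = (\langle\psi_0, v\rangle, \langle b, v\rangle)$, the Sylvester identity $\det_{H_\bullet}(\mathbbm{1} + FGR) = \det_{\C^2}(\mathbbm{1} + GRF)$ reduces everything to a $2\times 2$ determinant with entries $\langle\psi_0, RL_\bullet\psi_0\rangle$, $\langle\psi_0, R\psi_0\rangle = T^{\uparrow}(u)$, $\langle b, RL_\bullet\psi_0\rangle$, and $\langle b, R\psi_0\rangle$. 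Using the resolvent relation $RL_\bullet\psi_0 = uR\psi_0 - \psi_0$ together with $b \perp \psi_0$, the matrix $\mathbbm{1} + GRF$ collapses to the one with rows $(uT^{\uparrow}(u),\, T^{\uparrow}(u))$ and $(u\beta,\, 1+\beta)$, where $\beta = \langle b, R\psi_0\rangle$; its determinant is $uT^{\uparrow}(u)(1+\beta) - u\beta\, T^{\uparrow}(u) = uT^{\uparrow}(u)$, the $\beta$-terms cancelling. Dividing by $u$ yields the stated formula.

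The role of the essential self-adjointness hypothesis, and the place where I expect the real work to sit, is precisely in licensing these manipulations. It guarantees a unique self-adjoint realization of $L_\bullet$ on the cyclic subspace generated by $\psi_0$, hence a bounded resolvent $R$ for $u \in \C\setminus\R$; it places $\psi_0$ and $L_\bullet\psi_0$ in the relevant domain, so that $A$ is a genuine bounded rank-$2$ operator and the relation $RL_\bullet\psi_0 = uR\psi_0 - \psi_0$ holds; and it is what allows one to realize $L_+$ as a self-adjoint embedded principal minor and to identify the finite-dimensional $\det_{\C^2}$ above with the Fredholm determinant on $H_\bullet$. The hard part is therefore not the algebra but verifying these domain and trace-class statements carefully; once they are secured, the rank-$2$ reduction renders the infinite-dimensional identity as elementary as its finite-dimensional ancestor.
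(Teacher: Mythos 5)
The paper does not actually prove this proposition: it is imported verbatim from \cite{Moll1}, with the remark that for bounded $L_{\bullet}$ it goes back to Kerov \cite{Ke1}, so there is no in-paper argument to compare against. Judged on its own, your proof is correct and complete at the algebraic level. The dyadic form $A=L_{\bullet}-L_+=|L_{\bullet}\psi_0\rangle\langle\psi_0|+|\psi_0\rangle\langle b|$ with $b=P_+L_{\bullet}\psi_0$ is right (it uses that $\langle\psi_0,L_{\bullet}\psi_0\rangle$ is real), the factorization $A=FG$ through $\C^2$ together with $\det_{H_{\bullet}}(\mathbbm{1}+FGR)=\det_{\C^2}(\mathbbm{1}+GRF)$ is the standard and cleanest route, and the resolvent identity $RL_{\bullet}\psi_0=uR\psi_0-\psi_0$ plus $b\perp\psi_0$ does collapse the $2\times2$ matrix to one with determinant $uT^{\uparrow}(u)$, which is exactly the claim; this also establishes the rank-$2$/trace-class assertion that makes the Fredholm determinant well defined. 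Two small points worth making explicit: the formula as stated requires $\|\psi_0\|=1$ (true for $\psi_0=|0\rangle=w^0$ in the application, and implicitly needed so that $\det\to1$ and $T^{\uparrow}(u)\sim u^{-1}$ at infinity are consistent), and one should note that $P_+v\in\operatorname{dom}(L_{\bullet})$ whenever $v,\psi_0\in\operatorname{dom}(L_{\bullet})$ so the dyadic computation is licensed on the common domain. You correctly identify that the hypothesis of essential self-adjointness on the $L_{\bullet}$-orbit of $\psi_0$ is what supplies these domain facts; with them in place your argument is a valid self-contained proof, essentially the perturbation-determinant computation for a rank-two perturbation, in the same spirit as Kerov's bounded-operator version.
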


 \begin{coroll}[\cite{Moll1}] \label{Paper1KMKcorollary} Proposition {\rm \ref{Paper1KMK}} implies the relationship
 \begin{gather} \label{Paper1KMKformula} \int_{- \infty}^{+\infty} \frac{{\rm d} \tau^{\uparrow}_{\psi_0} (c | L_{\bullet})} { u - c} =T^{\uparrow}(u ) = \frac{1}{u} \cdot \exp \left(- \int_{- \infty}^{+\infty} \frac{\xi ( c | L_{\bullet}, L_+) {\rm d}c}{ u-c} \right ) \end{gather}
 between ${\rm d} \tau^{\uparrow}_{\psi_0} (c | L_{\bullet})$, the spectral measure of $L_{\bullet}$ at $\psi_0$, and $\xi(c |L_{\bullet}, L_+)$, the spectral shift function of $L_{\bullet}$ with respect to $L_+$ defined by equating~\eqref{Paper1KMKformula} with~\eqref{ClassicalOperatorKMK}.
 \end{coroll}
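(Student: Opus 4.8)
The plan is to establish the two equalities in \eqref{Paper1KMKformula} separately, with the middle quantity $T^{\uparrow}(u)$ furnished by Proposition~\ref{Paper1KMK} serving as the bridge. For the \emph{left} equality I would invoke the spectral theorem for the self-adjoint operator $L_{\bullet}$: writing $\{E(c)\}$ for its resolution of the identity and ${\rm d}\tau^{\uparrow}_{\psi_0}(c | L_{\bullet}) = {\rm d}\langle \psi_0 | E(c) | \psi_0 \rangle$ for the scalar spectral measure at $\psi_0$ (a probability measure, since $\langle \psi_0 | \psi_0\rangle = 1$), the functional calculus applied to $f_u(c) = (u - c)^{-1}$, which is bounded and continuous on $\R$ for each $u \in \C \setminus \R$, gives
\begin{gather*}
\langle \psi_0 | \frac{1}{u - L_{\bullet}} | \psi_0 \rangle = \int_{-\infty}^{+\infty} \frac{{\rm d}\tau^{\uparrow}_{\psi_0}(c | L_{\bullet})}{u - c}.
\end{gather*}
Proposition~\ref{Paper1KMK} then identifies the left-hand side with $T^{\uparrow}(u)$, which is the first equality; the hypothesis of essential self-adjointness on the $L_{\bullet}$-orbit of $\psi_0$ is exactly what guarantees that $\tau^{\uparrow}_{\psi_0}$ is the spectral measure of the closure, so that this Cauchy--Stieltjes transform is unambiguous.

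For the \emph{right} equality I would begin from the perturbation-determinant formula \eqref{ClassicalOperatorKMK}, whereby $u\,T^{\uparrow}(u)$ equals the Fredholm determinant in \eqref{PertDetDef}. Since $L_{\bullet} - L_{+}$ is rank $2$, hence trace class, this determinant is well-defined, analytic, and nonvanishing on $\C \setminus \R$ --- its zeros and poles lying at the real eigenvalues of $L_{\bullet}$ and $L_{+}$ --- and tends to $1$ as $|u| \to \infty$, so that $g(u) := \log\big(u\, T^{\uparrow}(u)\big)$ has a single-valued analytic branch on each half-plane normalized by $g(u) \to 0$ at infinity, the two branches being compatible through the self-adjointness symmetry $\overline{T^{\uparrow}(\overline{u})} = T^{\uparrow}(u)$. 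I would then appeal to Krein's spectral shift theory: differentiating term by term gives
\begin{gather*}
g'(u) = \operatorname{tr}\big( (u - L_{+})^{-1} - (u - L_{\bullet})^{-1}\big),
\end{gather*}
a trace that is finite precisely because the resolvent difference $(u - L_{+})^{-1}(L_{+} - L_{\bullet})(u - L_{\bullet})^{-1}$ is trace class, and Krein's trace formula represents it as the Cauchy transform of an integrable density. Integrating back in $u$ subject to $g \to 0$ and \emph{defining} $\xi(c | L_{\bullet}, L_{+})$ to be the negative of that density produces $g(u) = -\int_{-\infty}^{+\infty} \xi(c | L_{\bullet}, L_{+})(u - c)^{-1}\,{\rm d}c$; exponentiating and restoring the prefactor $1/u$ gives the right equality.

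The hard part is the \emph{existence} of $\xi$ as a genuine integrable function rather than a formal bookkeeping symbol: the determinants $\det(u - L_{\bullet})$ and $\det(u - L_{+})$ are individually meaningless in the infinite-dimensional Hardy space, and only their ratio survives as the Fredholm determinant \eqref{PertDetDef}. The whole argument therefore rests on the rank-$2$, trace-class nature of $L_{\bullet} - L_{+}$ recorded in Proposition~\ref{Paper1KMK}, which is precisely what legitimizes both the Fredholm determinant and the ensuing Krein representation; granting that, the remaining work is the standard Herglotz--Nevanlinna boundary-value analysis certifying that $\xi \in L^{1}(\R)$ with $L^{1}$-norm bounded by the trace norm of $L_{\bullet} - L_{+}$.
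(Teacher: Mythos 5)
Your argument is correct and follows essentially the same route as the paper: the left equality is the spectral theorem applied to the resolvent matrix element $\langle \psi_0 | (u - L_{\bullet})^{-1} | \psi_0 \rangle$, and the right equality holds because $\xi$ is \emph{defined} by identifying the exponential with the perturbation determinant \eqref{PertDetDef}, whose existence and $L^1$ bound rest on the rank-$2$ (trace-class) nature of $L_{\bullet} - L_{+}$. The paper delegates the Krein-theory justification to the references (Birman--Pushnitski, and Kerov for bounded $L_{\bullet}$) rather than spelling it out, so your sketch of the logarithmic-derivative and Herglotz--Nevanlinna steps simply makes explicit what the paper cites.
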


For general pairs of self-adjoint operators $L_{\bullet}$, $L_+$ whose difference $L_{\bullet} - L_+$ is trace class, the spectral shift function $\xi ( c | L_{\bullet}, L_+)$ is defined by identifying the right-hand side of~(\ref{Paper1KMKformula}) with the perturbation determinant in~(\ref{PertDetDef}). For background on spectral shift functions, see Birman--Pushnitski~\cite{BirPush}. For bounded $L_{\bullet}$, both Proposition~\ref{Paper1KMK} and Corollary~\ref{Paper1KMKcorollary} are due to Kerov~\cite{Ke1} in his theory of profiles and interlacing measures.

\subsection{Dispersive action profiles: definition} \label{SECdispersiveactionprofiles} Recall $L_{\bullet}(v; \ebar)$ from Section~\ref{SECclassicalLAX} and $L_+(v; \ebar)$ from Section~\ref{SECembeddedprincipalminor}.

 \begin{defin}[\cite{Moll1}] \label{DispersiveActionProfileDEF} Let $\mathbbm{1}_{[0, \infty)}(c)$ be the indicator function of $[0, \infty)$ and $\xi(c | L_{\bullet}(v; \ebar), L_+(v; \ebar))$ the spectral shift function of $L_{\bullet}(v; \ebar)$ with respect to $L_+(v; \ebar)$. The dispersive action profile $f(c | v; \ebar)$ is the unique function of $c \in \R$ so $f(c|v; \ebar) \sim | c- a|$ as $c \rightarrow \pm \infty$ for $a = \int_0^{2\pi} v(x) \tfrac{{\rm d}x}{2\pi}$ and \begin{gather} \label{Xi2f} \xi( c | L_{\bullet}(v; \ebar), L_+(v; \ebar)) = \tfrac{1}{2} (1 + f' (c |v; \ebar) ) - \mathbbm{1}_{[0, \infty)} (c) .\end{gather} \end{defin}

The dispersive action profile $f(c | v; \ebar)$ and the classical Nazarov--Sklyanin hierarchy (\ref{ClassicalNSHierarchy}) mutually determine each other: for bounded $v$, $L_{\bullet}(v; \ebar)$ is essentially self-adjoint on the orbit of~$|0 \rangle$ in~$H_{\bullet}$, hence by Proposition~\ref{Paper1KMK} and Corollary~\ref{Paper1KMKcorollary} one can match (\ref{ClassicalNSgenerating}) and~(\ref{Xi2f}) using~(\ref{Paper1KMKformula}). In particular, $f(c|v; \ebar)$ determines the classical periodic Benjamin--Ono Hamiltonian as follows:

\begin{propo} \label{Kerov337} For $v \in M(a)$, the dispersive action profile determines the energy \eqref{ClassicalBOHamiltonian} by \begin{gather*} 
O_3 (\ebar)|_v = \int_{- \infty}^{+\infty} c^3 \tfrac{1}{2} f''(c | v; \ebar) {\rm d}c. \end{gather*} \end{propo}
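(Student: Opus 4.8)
The plan is to identify $\nu:=\tfrac12 f''(c|v;\ebar)\,{\rm d}c$ as a (signed) measure whose Cauchy transform is the logarithmic derivative of the Nazarov--Sklyanin generating function $T^{\uparrow}(u|\ebar)|_v$ from \eqref{ClassicalNSgenerating}, and then to extract $\int c^3\,{\rm d}\nu$ by matching $1/u$-expansions at $u=\infty$.

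First I would rewrite the defining relation \eqref{Xi2f} of Definition~\ref{DispersiveActionProfileDEF}. Since $\mathbbm{1}_{[0,\infty)}(c)=\tfrac12(1+\operatorname{sgn}(c))$, equation \eqref{Xi2f} becomes $\xi(c|L_{\bullet},L_+)=\tfrac12\big(f'(c|v;\ebar)-\operatorname{sgn}(c)\big)$, so differentiating in the sense of distributions (using $\tfrac{d}{dc}\operatorname{sgn}(c)=2\delta_0$ for the Dirac mass $\delta_0$ at the origin) gives $\tfrac12 f''(c|v;\ebar)=\xi'(c)+\delta_0(c)$. Writing $\Xi(u)=\int_{-\infty}^{+\infty}\tfrac{\xi(c)\,{\rm d}c}{u-c}$, an integration by parts (legitimate because $f'(c)\to\operatorname{sgn}(c)$ as $c\to\pm\infty$, so $\xi$ is compactly supported when $v$ has finite Fourier support) yields
\[
\int_{-\infty}^{+\infty}\frac{\tfrac12 f''(c|v;\ebar)}{u-c}\,{\rm d}c=\frac1u+\frac{d}{du}\Xi(u).
\]

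Next I would feed in Corollary~\ref{Paper1KMKcorollary}. Equation \eqref{Paper1KMKformula} with $\psi_0=|0\rangle$, $L_{\bullet}=L_{\bullet}(v;\ebar)$, $L_+=L_+(v;\ebar)$ states $T^{\uparrow}(u|\ebar)|_v=\tfrac1u e^{-\Xi(u)}$, hence $\Xi(u)=-\log u-\log T^{\uparrow}(u|\ebar)|_v$, so the displayed Cauchy transform of $\nu$ equals $-\tfrac{d}{du}\log T^{\uparrow}(u|\ebar)|_v$. Expanding at $u=\infty$, by \eqref{ClassicalNSgenerating}--\eqref{ClassicalNSHierarchy} one has $T^{\uparrow}(u|\ebar)|_v=\sum_{\ell\ge0}T_\ell^{\uparrow}(\ebar)|_v\,u^{-\ell-1}$ with $T_0^{\uparrow}=\langle0|0\rangle=1$; writing this as $u^{-1}P(1/u)$ with $P(x)=1+\sum_{k\ge1}T_k^{\uparrow}x^k$, the identity above becomes $\sum_{\ell\ge0}m_\ell x^{\ell}=1+x\,\tfrac{d}{dx}\log P(x)$ for the moments $m_\ell=\int c^\ell\,{\rm d}\nu$, i.e.\ $\log P(x)=\sum_{\ell\ge1}\tfrac{m_\ell}{\ell}x^\ell$. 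Extracting the coefficient of $x^3$ from $\log\big(1+T_1^{\uparrow}x+T_2^{\uparrow}x^2+T_3^{\uparrow}x^3+\cdots\big)$ then gives
\[
m_3=3T_3^{\uparrow}(\ebar)|_v-3T_1^{\uparrow}(\ebar)|_v\,T_2^{\uparrow}(\ebar)|_v+\big(T_1^{\uparrow}(\ebar)|_v\big)^3.
\]
Finally, reading off the $(0,0)$ entry of the Lax matrix \eqref{ClassicalLaxMatrix} gives $T_1^{\uparrow}(\ebar)|_v=\langle0|L_{\bullet}(v;\ebar)|0\rangle=V_0=a$, so $m_3=3T_3^{\uparrow}-3aT_2^{\uparrow}+a^3$, which is exactly $O_3(\ebar)|_v$ by the recovery formula \eqref{ClassicalRECOVERY} of Proposition~\ref{ClassicalBOHamiltonianVIALAX}. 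This proves the claim.

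The main obstacle I anticipate is purely analytic: justifying the distributional integration by parts and, more seriously, the existence and term-by-term validity of the $u\to\infty$ expansion, since for general $v\in M(a)$ the spectral shift function $\xi$ need not be compactly supported and one must know $\nu$ has a finite third moment. I expect to handle this by first establishing the identity on the dense subspace of $v$ with finite Fourier series, where $L_{\bullet}-L_+$ is finite rank, $\nu$ is a finite combination of point masses, and all manipulations are elementary, and then extending by continuity exactly as in the matching of \eqref{ClassicalNSgenerating} and \eqref{Xi2f} recorded after Definition~\ref{DispersiveActionProfileDEF}. As a consistency check, for the multi-phase data of Definition~\ref{DEFmultiphase} this reduces to $\nu=\sum_{i=0}^n\delta_{s_i^{\uparrow}}-\sum_{i=1}^n\delta_{s_i^{\downarrow}}$, whose third moment $\sum_{i=0}^n\big(s_i^{\uparrow}\big)^3-\sum_{i=1}^n\big(s_i^{\downarrow}\big)^3$ is precisely the energy level appearing in Part~II of Theorem~\ref{MAINTHEOREM}.
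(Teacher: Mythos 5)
Your proposal is correct and follows essentially the same route as the paper: both convert the perturbation-determinant identity \eqref{Paper1KMKformula} into a statement about $\tfrac{1}{2}f''(c\,|\,v;\ebar)$ via \eqref{Xi2f}, extract the coefficient of $u^{-4}$ in the logarithmic derivative of the resulting generating-function identity to get $m_3 = 3T_3^{\uparrow} - 3aT_2^{\uparrow} + a^3$, and conclude by the recovery formula \eqref{ClassicalRECOVERY}. Your write-up simply makes explicit the distributional bookkeeping and the density argument that the paper leaves implicit.
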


\begin{proof} Replacing the spectral shift function $\xi( c| v; \ebar)$ in (\ref{Paper1KMKformula}) with $f( c | v; \ebar)$ using (\ref{Xi2f}) gives \begin{gather}\label{IHaveThis} \sum_{\ell=0}^{\infty} T_{\ell}^{\uparrow}(\ebar)|_v u^{- \ell -1} = \exp \left( \int_{- \infty}^{+\infty} \log \left[ \frac{1}{u-c} \right] \tfrac{1}{2} f''(c | v; \ebar) {\rm d}c \right). \end{gather} By (\ref{ClassicalRECOVERY}), the coefficient of $u^{-4}$ in the logarithmic derivative of (\ref{IHaveThis}) in $u$ is the desired result. Indeed, the $a^3$ in (\ref{ClassicalBOHamiltonian}) is chosen to match formula (3.3.7) in~\cite{Ke1}.\end{proof}

\subsection{Dispersive action profiles: bands and gaps}\label{SECdispersiveactionprofilesBANDGAP}

We next recall from \cite{Moll1} why dispersive action profiles $f( c | v; \ebar)$ are piecewise-linear with slo\-pes~$\pm 1$. This motivates the next definition as in Section~\ref{SECIntro}.

\begin{defin}[\cite{Moll1}] \label{BandGapOriginalDef} The bands of dispersive action profiles $f(c| v; \ebar)$ are the closures of the connected intervals of $c \in \R$ in which $f(c | v ; \ebar)$ has slope $-1$. The gaps of dispersive action profiles $f( c| v; \ebar)$ are the interiors of the connected intervals of $c \in \R$ in which $f( c | v; \ebar)$ has slope~$+1$. \end{defin}

\begin{propo}[Boutet de Monvel--Guillemin \cite{DeMonvelGuillemin}] \label{GeneralizedToeplitzSpectrumDiscrete} $L_{\bullet}(v;\ebar)$ has discrete spectrum in $H_{\bullet}$ \begin{gather} \label{OriginalSpectrum} \cdots \leq C_2^{\uparrow}(v; \ebar) \leq C_1^{\uparrow}(v; \ebar) \leq C_0^{\uparrow}(v; \ebar) \end{gather} with eigenvalues $\big\{C_h^{\uparrow}(v; \ebar)\big\}_{i=0}^{\infty} $ bounded above with $-\infty$ as the only point of accumulation. \end{propo}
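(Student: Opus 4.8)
The plan is to exploit the decomposition $L_{\bullet}(v; \ebar) = -\ebar D_{\bullet} + L_{\bullet}(v)$ in~\eqref{ClassicalLaxINTRINSIC}, treating the diagonal part $-\ebar D_{\bullet}$ as an unperturbed operator with explicitly known spectrum and the Toeplitz part $L_{\bullet}(v)$ as a bounded self-adjoint perturbation. Discreteness of the spectrum of $L_{\bullet}(v;\ebar)$ and its accumulation only at $-\infty$ will then follow from the fact that $-\ebar D_{\bullet}$ has compact resolvent, together with an elementary upper bound on the quadratic form.

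First I would record the spectrum of the unperturbed operator. Since $D_{\bullet}|h\rangle = h|h\rangle$ on the orthonormal basis $|h\rangle = w^h$ of $H_{\bullet}$, the operator $-\ebar D_{\bullet}$ is self-adjoint with simple eigenvalues $-\ebar h$ for $h=0,1,2,\ldots$, decreasing to $-\infty$ with no finite accumulation point. Hence $(-\ebar D_{\bullet} - u)^{-1}$ is diagonal with eigenvalues $(-\ebar h - u)^{-1} \to 0$, and is therefore compact, for every $u \in \C \setminus \{-\ebar h \colon h \geq 0\}$.

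Next I would control the perturbation and upgrade boundedness of the resolvent to compactness. Because $v$ is bounded, multiplication by $v$ is bounded on $H = L^2(\mathbb{T})$ with norm $\|v\|_{L^{\infty}}$; as $\uppi_{\bullet}$ is an orthogonal projection, the Toeplitz operator $L_{\bullet}(v) = \uppi_{\bullet} L(v) \uppi_{\bullet}$ satisfies $\|L_{\bullet}(v)\| \leq \|v\|_{L^{\infty}}$. The Kato--Rellich theorem then confirms that $L_{\bullet}(v;\ebar)$---already singled out in Definition~\ref{ClassicalBOLaxOperatorHERE} as the unique self-adjoint extension---is self-adjoint on the domain of $D_{\bullet}$. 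Writing $A = -\ebar D_{\bullet}$ and $B = L_{\bullet}(v)$, for $u$ with sufficiently large imaginary part one has $\|(A-u)^{-1} B\| < 1$, so the factorization
\[
(A + B - u)^{-1} = \big(\mathbbm{1} + (A-u)^{-1} B\big)^{-1} (A-u)^{-1}
\]
displays the resolvent of $L_{\bullet}(v;\ebar)$ as a bounded operator times the compact operator $(A-u)^{-1}$, hence as a compact operator. A self-adjoint operator with compact resolvent has purely discrete spectrum of finite multiplicities accumulating only at $\pm\infty$, which produces the monotone listing~\eqref{OriginalSpectrum}.

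Finally I would determine which infinity occurs. For any unit vector $\psi$ in the domain, $\langle \psi | L_{\bullet}(v;\ebar) | \psi\rangle = -\ebar \langle \psi | D_{\bullet} | \psi\rangle + \langle \psi | L_{\bullet}(v) | \psi\rangle \leq \|v\|_{L^{\infty}}$ since $D_{\bullet} \geq 0$, so the spectrum is bounded above by $\|v\|_{L^{\infty}}$; together with compactness of the resolvent this forces $-\infty$ to be the only accumulation point. I expect the one genuine subtlety---more a point of care than a real obstacle---to be precisely this passage from boundedness to compactness of the resolvent: a bounded self-adjoint perturbation preserves discreteness only because the unperturbed resolvent is already compact, which is exactly where $\ebar > 0$ is used. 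This is the Hamiltonian shadow of the ellipticity remark following Definition~\ref{ClassicalBOLaxOperatorHERE}, namely that $-L_{\bullet}(v;\ebar)$ is a first-order self-adjoint generalized Toeplitz operator whose principal symbol is $\ebar$ times that of $D_{\bullet}$ and hence positive; the theory of Boutet de Monvel--Guillemin~\cite{DeMonvelGuillemin} then independently forces a discrete spectrum bounded below for $-L_{\bullet}(v;\ebar)$, equivalently the bounded-above discrete spectrum of $L_{\bullet}(v;\ebar)$ asserted here.
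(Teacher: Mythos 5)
Your argument is correct and complete, but it proceeds by a genuinely different route from the paper, which offers no proof at all: Proposition~\ref{GeneralizedToeplitzSpectrumDiscrete} is simply cited from the general microlocal theory of Boutet de Monvel--Guillemin \cite{DeMonvelGuillemin}, where discreteness of the spectrum, boundedness below, and accumulation only at $+\infty$ for $-L_{\bullet}(v;\ebar)$ follow from ellipticity and positivity of the order-one principal symbol of a generalized Toeplitz operator --- exactly the remark after Definition~\ref{ClassicalBOLaxOperatorHERE} that you correctly identify in your closing sentence. What you supply instead is an elementary, self-contained substitute: since $L_{\bullet}(v;\ebar)=-\ebar D_{\bullet}+L_{\bullet}(v)$ is literally a bounded self-adjoint perturbation of a diagonal operator with compact resolvent, Kato--Rellich plus the resolvent factorization $(A+B-u)^{-1}=\big(\mathbbm{1}+(A-u)^{-1}B\big)^{-1}(A-u)^{-1}$ gives compactness of the resolvent, and the quadratic-form bound $\langle\psi|L_{\bullet}(v;\ebar)|\psi\rangle\leq \|v\|_{L^{\infty}}$ (using $D_{\bullet}\geq 0$ and $\ebar>0$) rules out accumulation at $+\infty$. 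Each approach buys something: yours requires nothing beyond first-year spectral theory and makes transparent exactly where $\ebar>0$ and boundedness of $v$ enter, while the symbol-calculus route of \cite{DeMonvelGuillemin} applies to generalized Toeplitz operators that are not bounded perturbations of an explicitly diagonalizable model and also yields finer asymptotics of the eigenvalues. The only point worth flagging --- and you essentially handle it --- is that the self-adjoint operator produced by Kato--Rellich on $\operatorname{Dom}(D_{\bullet})$ must coincide with the unique self-adjoint extension from $\C[w]$ in Definition~\ref{ClassicalBOLaxOperatorHERE}; this is automatic because essential self-adjointness on $\C[w]$ forces uniqueness of the self-adjoint extension.
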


 \begin{coroll}[\cite{Moll1}] \label{InterlacingCorollary} For bounded $v$, the embedded principal minor $L_+(v; \ebar)$ defined by \eqref{LaxEmbeddedPrincipalMinor} has discrete spectrum with eigenvalues $\{0\} \cup \big\{C_h^{\downarrow}(v; \ebar)\big\}_{h=1}^{\infty}$ interlacing those of $L_{\bullet}(v; \ebar)$ \begin{gather} \label{Interlacing} \cdots \leq C_2^{\uparrow}(v; \ebar) \leq C_2^{\downarrow}(v; \ebar) \leq C_1^{\uparrow}(v; \ebar) \leq C_1^{\downarrow}(v; \ebar) \leq C_0^{\uparrow}(v; \ebar) \end{gather} and hence the dispersive action profile $f(c | v; \ebar)$ is piecewise-linear with slopes~$\pm 1$.
 \end{coroll}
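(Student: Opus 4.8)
The plan is to prove \eqref{Interlacing} as an infinite-dimensional Cauchy interlacing theorem via the Courant--Fischer min-max principle, and then to read off the slopes of $f(c\,|\,v;\ebar)$ by identifying the spectral shift function appearing in \eqref{Xi2f} with an integer-valued eigenvalue-counting function whose jumps are controlled by that interlacing.

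First I would settle discreteness of $L_+(v;\ebar)$. Since $-L_\bullet(v;\ebar)$ is elliptic, bounded below, and has purely discrete spectrum accumulating only at $+\infty$ by Proposition~\ref{GeneralizedToeplitzSpectrumDiscrete}, and since $L_\bullet(v;\ebar) - L_+(v;\ebar)$ is the rank-two (hence trace-class) operator recorded in Proposition~\ref{Paper1KMK}, stability of the essential spectrum under trace-class perturbations shows $L_+(v;\ebar)$ also has empty essential spectrum, i.e.\ discrete spectrum; equivalently one applies Boutet de Monvel--Guillemin directly to the principal minor $L_+^\perp(v;\ebar)$ on $H_+$. The block structure $L_+ = 0 \oplus L_+^\perp$ from \eqref{LaxEmbeddedPrincipalMinor} then exhibits the eigenvalues of $L_+(v;\ebar)$ as $\{0\}$ together with the eigenvalues $\{C_h^\downarrow(v;\ebar)\}_{h\geq 1}$ of $L_+^\perp(v;\ebar)$.

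Next, the interlacing \eqref{Interlacing} is Cauchy interlacing for the compression of $L_\bullet(v;\ebar)$ to the codimension-one subspace $H_+ = H_0^\perp$. Working with $A := -L_\bullet(v;\ebar)$, which is bounded below with discrete spectrum, each eigenvalue has a Courant--Fischer min-max characterization over subspaces of $H_\bullet$, and the eigenvalues of the compression $A|_{H_+}$ are given by the same min-max with competitors restricted to $H_+$. Restricting the competing subspaces to $H_+$ can only increase each min-max value, while intersecting an optimal subspace for $A$ with the codimension-one $H_+$ yields a competitor of dimension one lower for $A|_{H_+}$; the two resulting chains of inequalities are exactly $C_h^\uparrow \leq C_h^\downarrow \leq C_{h-1}^\uparrow$, which is \eqref{Interlacing}. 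The only care needed here is that min-max is valid below the essential spectrum, which is legitimate because that spectrum is empty.

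Finally, to obtain the slopes I would identify the spectral shift function. Comparing the product expansion of the ratio of perturbation determinants in \eqref{ClassicalOperatorKMK} with the exponential in \eqref{Paper1KMKformula} identifies $\xi(c\,|\,L_\bullet, L_+)$ as the integer counting difference $N_{L_+}(c) - N_{L_\bullet}(c)$, where $N_B(c)$ is the number of eigenvalues of $B$ exceeding $c$. Writing $N_{L_+}(c) = N_{L_+^\perp}(c) + \mathbbm{1}_{(-\infty,0)}(c)$ to account for the stray eigenvalue $0$ and substituting into \eqref{Xi2f}, the two indicator functions combine to $1$ and leave $f'(c\,|\,v;\ebar) = 1 - 2\big(N_{L_\bullet}(c) - N_{L_+^\perp}(c)\big)$. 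By the interlacing just proved, the counting difference $N_{L_\bullet}(c) - N_{L_+^\perp}(c)$ takes only the values $0$ and $1$ — it equals $1$ precisely on the intervals $\big(C_h^\downarrow, C_{h-1}^\uparrow\big)$ and $0$ precisely on $\big(C_h^\uparrow, C_h^\downarrow\big)$ — so $f'(c\,|\,v;\ebar) \in \{+1,-1\}$, and $f(c\,|\,v;\ebar)$ is piecewise-linear with the asserted slopes, its slope $-1$ and slope $+1$ intervals being exactly the bands and gaps of Definition~\ref{BandGapOriginalDef}. The main obstacle is the bookkeeping in this last step: one must pin down the sign convention in the spectral shift function and verify that the stray eigenvalue $0$ of $L_+$ is absorbed cleanly by the $\mathbbm{1}_{[0,\infty)}(c)$ term built into \eqref{Xi2f}, so that no half-integer slopes survive.
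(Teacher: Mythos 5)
Your proposal is correct, but it reaches the conclusion by a genuinely different route from the paper. The paper's proof is a one-line reduction: citing Proposition~\ref{GeneralizedToeplitzSpectrumDiscrete}, the definition \eqref{Xi2f}, and the general theory in Birman--Pushnitski, it suffices to show $|\xi(c\,|\,L_{\bullet},L_+)|\leq 1$, which follows from the abstract bound $-n_-\leq\xi\leq n_+$ for a trace-class perturbation with $n_+$ positive and $n_-$ negative eigenvalues, applied to the rank-two difference $L_{\bullet}(v;\ebar)-L_+(v;\ebar)$ (which has one eigenvalue of each sign); the interlacing \eqref{Interlacing} is then read off from the integer-valued, $\{-1,0,1\}$-valued counting interpretation of $\xi$. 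You instead prove the interlacing \emph{first}, by Courant--Fischer min-max for the codimension-one compression $L_+^{\perp}=\uppi_{H_+}L_{\bullet}\uppi_{H_+}$ (legitimate here since the essential spectrum is empty, as you note), and only then deduce the bound on the counting difference and hence the slopes. Your identification $\xi(c)=N_{L_+}(c)-N_{L_{\bullet}}(c)$ has the correct sign for the conventions of \eqref{ClassicalOperatorKMK}--\eqref{Paper1KMKformula}, and your final bookkeeping $f'(c)=1-2\bigl(N_{L_{\bullet}}(c)-N_{L_+^{\perp}}(c)\bigr)$, with the stray eigenvalue $0$ of $L_+$ cancelling against the $\mathbbm{1}_{[0,\infty)}$ term in \eqref{Xi2f}, is exactly right and in fact makes explicit a point the paper's terse proof leaves implicit: the bound $|\xi|\leq 1$ alone does not immediately exclude slopes $\pm 3$ in \eqref{Xi2f} without also tracking the sign of $\xi$ on each side of $0$, which your counting argument does. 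What the paper's route buys is brevity and generality (it needs only the signature of the perturbation, not a variational principle); what your route buys is self-containedness -- an elementary Cauchy-interlacing argument replacing the appeal to spectral shift function estimates -- at the cost of the careful translation between interlacing and the profile's slopes.
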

\begin{proof} By Proposition~\ref{GeneralizedToeplitzSpectrumDiscrete}, (\ref{Xi2f}), and \cite[Section~2]{BirPush}, it is enough to show
\begin{gather*} 
| \xi (c | L_{\bullet}(v ; \ebar), L_+(v; \ebar) ) | \leq 1, \end{gather*} which holds since $L_{\bullet}(v; \ebar) - L_+(v; \ebar)$ has~1 positive and~1 negative eigenvalue.\end{proof}

\section[Classical integrability: G\'erard--Kappeler global action variables]{Classical integrability: G\'erard--Kappeler global action\\ variables} \label{SECclassicalINT}

 In this section we identify the global action variables of G\'erard--Kappeler \cite{GerardKappeler2019} in $L^2(\mathbb{T}) \cap M(a)$ with the gaps of dispersive action profiles $f(c | v; \ebar)$ from Definition~\ref{BandGapOriginalDef} introduced in \cite{Moll1} in the case of bounded $v$. We also state the characterization in G\'erard--Kappeler \cite{GerardKappeler2019} of these global action variables as integrals of the Liouville 1-form $\upalpha_{\rm GFZ}$ in (\ref{1formFORMULA}) along a basis of cycles $\Gamma_h^b(\ebar)$ of generically infinite-dimensional tori $\Lambda^b(\ebar)$ parametrized by profiles $b$. This relationship between profiles and classical actions plays a key role in our proof of Theorem~\ref{MAINTHEOREM} in Section~\ref{SECquantumMP}.

\subsection{Principal minor and shift relation} Inspection of (\ref{LaxEmbeddedPrincipalMinor}) immediately gives the shift relation:

\begin{lema} \label{ShiftRelation}Under the shift operator identifying $H_{+} \cong H_{\bullet}$ the subspace $H_+$ of periodic $L^2$ Hardy space $H_{\bullet}$ spanned by $\big\{w^h\big\}_{h=1}^{\infty}$ with Hardy space itself, the action of the principal minor
\begin{gather*} L_+^{\perp} (v; \ebar) \big |_{w \C[ w ] } \cong \begin{bmatrix}
 - \ebar & 0 & 0 & \ddots & \\
0& - \ebar & 0 & \ddots & \\
0 & 0&-\ebar & \ddots & \\
 \vdots & \ddots & \ddots & \ddots \\ \end{bmatrix} + \begin{bmatrix}
 (0 \ebar+ V_0) & V_{-1} & V_{-2} & \ddots & \\
V_{1} & (-1 \ebar+ V_0) & V_{-1} & \ddots & \\
 V_2 & V_1 &(-2 \ebar + V_0 )& \ddots & \\
 \vdots & \ddots & \ddots & \ddots \\ \end{bmatrix} \end{gather*}
 in the dense subspace of $H_+$ is unitarily equivalent to that of a shifted classical Lax operator
 \begin{gather*}
 {\rm Id}_{H_{\bullet}} + L_{\bullet}(v; \ebar) .\end{gather*}
 As a consequence, the eigenvalues $C_h^{\downarrow} (v; \ebar)$ of the embedded principal minor $L_{+}(v; \ebar)$ of the classical Lax operator $L_{\bullet}(v; \ebar)$ can be calculated from those of $L_{\bullet}(v; \ebar)$ by the shift relation \begin{gather} \label{ShiftRelationSpectrum} C_h^{\downarrow}(v; \ebar) = - \ebar + C_{h-1}^{\uparrow}(v; \ebar). \end{gather}
\end{lema}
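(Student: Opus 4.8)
The plan is to realize the asserted unitary equivalence explicitly through the shift operator and then read off the eigenvalue relation from it. First I would define $U\colon H_+ \to H_{\bullet}$ to be the unitary determined by $U w^h = w^{h-1}$ for $h \geq 1$; since $\{w^h\}_{h\geq 1}$ and $\{w^h\}_{h \geq 0}$ are orthonormal bases of $H_+$ and $H_{\bullet}$ respectively, $U$ is unitary. The whole content of the first assertion is the intertwining relation $U L_+^{\perp}(v;\ebar) U^{-1} = -\ebar\,\mathrm{Id}_{H_{\bullet}} + L_{\bullet}(v;\ebar)$ exhibited by the displayed matrix decomposition, which I would establish by comparing matrix entries in these bases.

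The key computation is elementary. Writing $(L_{\bullet})_{hh'} = -h\ebar\,\delta(h-h') + V_{h-h'}$ for $h,h' \geq 0$ from \eqref{ClassicalLaxMatrix}, the principal minor is the restriction $(L_+^{\perp})_{jk} = -j\ebar\,\delta(j-k) + V_{j-k}$ for $j,k \geq 1$. Conjugating by $U$ shifts both indices down by one, so $(U L_+^{\perp} U^{-1})_{hh'} = -(h+1)\ebar\,\delta(h-h') + V_{h-h'}$. The off-diagonal Toeplitz symbol $V_{h-h'}$ depends only on the difference $h-h'$ and is therefore unchanged by the simultaneous shift, while each diagonal entry acquires an extra $-\ebar$; comparison with $(L_{\bullet})_{hh'}$ yields exactly the splitting into the constant $-\ebar$ diagonal plus $L_{\bullet}(v;\ebar)$ shown in the statement.

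From this the spectral shift relation \eqref{ShiftRelationSpectrum} follows immediately. Since $L_+(v;\ebar) = 0 \oplus L_+^{\perp}(v;\ebar)$ on $H_{\bullet} = H_0 \oplus H_+$, its nonzero eigenvalues are those of $L_+^{\perp}$, which by the unitary equivalence are $\{-\ebar + C_h^{\uparrow}(v;\ebar)\}_{h\geq 0}$, i.e.\ the spectrum of $L_{\bullet}(v;\ebar)$ from Proposition~\ref{GeneralizedToeplitzSpectrumDiscrete} translated down by $\ebar$. Ordering these decreasingly and matching labels---the largest, $-\ebar + C_0^{\uparrow}$, being $C_1^{\downarrow}$---gives $C_h^{\downarrow}(v;\ebar) = -\ebar + C_{h-1}^{\uparrow}(v;\ebar)$ for all $h \geq 1$.

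There is no real difficulty here; the only point deserving a word of care is that $L_{\bullet}$ and $L_+^{\perp}$ are unbounded. I would therefore first verify the intertwining on the dense core $w\C[w]$, where both operators act by the matrices above and where $L_{\bullet}$ is essentially self-adjoint as noted after Definition~\ref{ClassicalBOLaxOperatorHERE}. Because $U$ maps this core onto $\C[w]$ and intertwines the two essentially self-adjoint actions, it carries the closure of one to the closure of the other, so the equality holds for the unique self-adjoint extensions and the spectral comparison above is legitimate.
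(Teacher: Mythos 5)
Your proof is correct and is essentially the paper's own argument: the paper merely asserts that inspection of \eqref{LaxEmbeddedPrincipalMinor} ``immediately gives'' the shift relation, and your explicit conjugation by the unitary $U w^h = w^{h-1}$, the entrywise comparison $-(h+1)\ebar\,\delta(h-h') + V_{h-h'} = -\ebar\,\delta(h-h') + (L_{\bullet})_{hh'}$, and the remark about passing from the core $w\C[w]$ to the self-adjoint closures are exactly that inspection made precise. Note only that your computation (correctly) produces $-\ebar\,\mathrm{Id}_{H_{\bullet}} + L_{\bullet}(v;\ebar)$, consistent with the displayed matrices and with \eqref{ShiftRelationSpectrum}, rather than the $\mathrm{Id}_{H_{\bullet}} + L_{\bullet}(v;\ebar)$ written in the lemma's text, which is evidently a typo.
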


\subsection{Interlacing property and simplicity of spectrum}
 As a first application of the shift relation in Lemma~\ref{ShiftRelation}, we give a new short proof of Proposition~2.1 in G\'erard--Kappeler~\cite{GerardKappeler2019} for bounded $v$.
\begin{propo}[\cite{GerardKappeler2019}] \label{Simplicity} $C_{h}^{\uparrow} (v; \ebar) \leq - \ebar + C_{h-1}^{\uparrow}(v;\ebar)$ hence $L_{\bullet}(v; \ebar)$ has simple spectrum in $H_{\bullet}$. \end{propo}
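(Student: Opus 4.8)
The plan is to combine the two structural facts just established, the interlacing inequalities of Corollary~\ref{InterlacingCorollary} and the shift relation of Lemma~\ref{ShiftRelation}; the argument is then immediate, as both inputs are already in place. First I would extract from the interlacing chain \eqref{Interlacing} the single-index comparison $C_h^{\uparrow}(v;\ebar) \leq C_h^{\downarrow}(v;\ebar)$, valid for every $h \geq 1$, which bounds the $h$th eigenvalue of $L_{\bullet}(v;\ebar)$ by the $h$th eigenvalue of its embedded principal minor $L_+(v;\ebar)$.

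Next I would substitute the shift relation \eqref{ShiftRelationSpectrum}, namely $C_h^{\downarrow}(v;\ebar) = -\ebar + C_{h-1}^{\uparrow}(v;\ebar)$, into the right-hand side of this comparison. This yields at once
\[
 C_h^{\uparrow}(v;\ebar) \leq -\ebar + C_{h-1}^{\uparrow}(v;\ebar),
\]
which is precisely the displayed inequality of the proposition. To deduce simplicity I would then invoke $\ebar > 0$: the bound strengthens to the strict inequality $C_h^{\uparrow}(v;\ebar) < C_{h-1}^{\uparrow}(v;\ebar)$ for all $h \geq 1$, so the eigenvalues in \eqref{OriginalSpectrum}, enumerated in nonincreasing order with multiplicity, in fact form a strictly decreasing sequence; no eigenvalue is repeated, hence every eigenvalue of $L_{\bullet}(v;\ebar)$ is simple.

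I do not anticipate any genuine obstacle, since both ingredients are at hand and the deduction is one line of substitution. The only point worth flagging is that the passage from the weak inequality of the statement to simplicity relies on $\ebar>0$ to upgrade the bound to a \emph{strict} decrease; it is this strictness, rather than the weak inequality alone, that rules out coincidences among the $C_h^{\uparrow}(v;\ebar)$ and thereby forces each eigenvalue to have multiplicity one.
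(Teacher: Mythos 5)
Your proposal is correct and follows essentially the same route as the paper: substitute the shift relation \eqref{ShiftRelationSpectrum} into the interlacing chain \eqref{Interlacing} to obtain $C_h^{\uparrow}(v;\ebar) \leq -\ebar + C_{h-1}^{\uparrow}(v;\ebar)$, then use $\ebar>0$ to upgrade to strict decrease and conclude simplicity. The paper's proof is exactly this one-line substitution, so there is nothing to add.
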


\begin{proof} Use formula (\ref{ShiftRelationSpectrum}) to write the interlacing property (\ref{Interlacing}) of Corollary~\ref{InterlacingCorollary} as \begin{gather*} \cdots \leq C_h^{\uparrow} (v; \ebar) \leq - \ebar + C_{h-1}^{\uparrow}(v; \ebar) \leq C_{h-1}^{\uparrow}(v; \ebar) \leq \cdots, \end{gather*} which implies the bound and that $C_h^{\uparrow} (v; \ebar)< C_{h-1}^{\uparrow} (v; \ebar)$ all inequalities in (\ref{OriginalSpectrum}) are strict.\end{proof}

\subsection{Bands and spatial periodicity conditions II} Next, we derive for generic $v$ a counterpart to Proposition~\ref{MultiPhasePeriodicityConditions} for multi-phase $v$. \begin{propo} \label{SPECTRALPeriodicityConditions} If $v(x)$ is $2\pi$-periodic in $x$, then the dispersive action profile $f(c| v; \ebar)$ has band lengths that are all positive integer multiples of $\ebar>0$.\end{propo}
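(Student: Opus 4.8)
The plan is to read the band lengths directly off the spectrum of the Lax operator, using the shift relation of Lemma~\ref{ShiftRelation} as the mechanism that pins them to integer multiples of $\ebar$. First I would recall, from Definition~\ref{BandGapOriginalDef} together with the correspondence \eqref{Xi2f} and Corollary~\ref{InterlacingCorollary}, that $f(c\,|\,v;\ebar)$ is piecewise linear with slopes $\pm 1$ whose corners occur exactly where the spectral shift function $\xi(c\,|\,L_{\bullet}(v;\ebar),L_{+}(v;\ebar))$ jumps: as $c$ increases it steps by $+1$ at each eigenvalue $C_h^{\uparrow}(v;\ebar)$ of $L_{\bullet}(v;\ebar)$ and by $-1$ at each eigenvalue $C_h^{\downarrow}(v;\ebar)$ of the principal minor $L_+^{\perp}(v;\ebar)$, the lone $0$-eigenvalue of the embedded minor $L_+(v;\ebar)=0\oplus L_+^{\perp}(v;\ebar)$ being cancelled by the $\mathbbm{1}_{[0,\infty)}$ term in \eqref{Xi2f} so that no corner is created at $c=0$. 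Reading \eqref{Xi2f} as $f'=2\xi-1+2\,\mathbbm{1}_{[0,\infty)}$, the local minima of $f$ sit at the $C_h^{\uparrow}$ and the local maxima at the $C_h^{\downarrow}$, so that each band is a maximal interval of slope $-1$ bounded on the left by a maximum $C_h^{\downarrow}$ and on the right by a minimum $C_{h'}^{\uparrow}$.

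The key step is the length of an \emph{elementary} band, i.e.\ a band with no interior corner. By the interlacing \eqref{Interlacing} such a band is $[\,C_h^{\downarrow},C_{h-1}^{\uparrow}\,]$, and the shift relation \eqref{ShiftRelationSpectrum}, $C_h^{\downarrow}=-\ebar+C_{h-1}^{\uparrow}$, gives its length as $C_{h-1}^{\uparrow}-C_h^{\downarrow}=\ebar$ exactly; strict positivity (so that this really is a nondegenerate band) follows from Proposition~\ref{Simplicity}. This is the only place where the $2\pi$-periodicity of $v$ enters: it is precisely the integer spacing of the diagonal $-h\ebar+V_0$ of $L_{\bullet}(v;\ebar)$ in \eqref{ClassicalLaxMatrix} that produces the step $\ebar$ in \eqref{ShiftRelationSpectrum}, hence the quantization of band lengths in units of $\ebar$.

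It then remains to handle bands that do contain interior corners. A gap $(C_h^{\uparrow},C_h^{\downarrow})$ has length $C_h^{\downarrow}-C_h^{\uparrow}\ge 0$ and degenerates to a point exactly when $C_h^{\uparrow}=C_h^{\downarrow}$; there the $+1$ jump of $\xi$ coming from $L_{\bullet}$ and the $-1$ jump coming from $L_+^{\perp}$ cancel, $f$ acquires no corner, and the two adjacent elementary bands fuse into a single interval of slope $-1$. Consequently every band is a union of some number $N\ge 1$ of consecutive elementary bands joined across $N-1$ degenerate gaps, and its length is $N\ebar$, a positive integer multiple of $\ebar$. This is the promised spectral counterpart of Proposition~\ref{MultiPhasePeriodicityConditions}, with $N$ playing the role of the bump count $N_i$.

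I expect the main obstacle to be the careful bookkeeping of the spectral shift function at coincident eigenvalues: verifying that the net jump of $\xi$ across a degenerate gap is genuinely zero so that the merging argument is rigorous, and confirming that the $0$-eigenvalue of the embedded minor together with the $\mathbbm{1}_{[0,\infty)}$ term in \eqref{Xi2f} neither creates nor destroys a corner near $c=0$. One should also record that, because the spectrum accumulates only at $-\infty$ (Proposition~\ref{GeneralizedToeplitzSpectrumDiscrete}, \eqref{OriginalSpectrum}), each band is a finite union of elementary bands and hence of finite length, the single possible exception being a leftmost semi-infinite band of slope $-1$ which falls outside the integer-multiple claim.
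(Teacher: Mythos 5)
Your proposal is correct and follows essentially the same route as the paper: the paper's proof is exactly that bands are unions of consecutive intervals $\big[C_h^{\downarrow}(v;\ebar), C_{h-1}^{\uparrow}(v;\ebar)\big]$, each of length $\ebar$ by the shift relation \eqref{ShiftRelationSpectrum}, so a band consisting of $N$ such intervals (fused across degenerate gaps) has length $N\ebar$. Your extra bookkeeping of the spectral shift function and the semi-infinite leftmost band is sound, though note the elementary band length is \emph{exactly} $\ebar$ directly from \eqref{ShiftRelationSpectrum}, so no appeal to Proposition~\ref{Simplicity} is needed for its positivity.
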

\begin{proof} By Definitions~\ref{DispersiveActionProfileDEF} and~\ref{BandGapOriginalDef}, the bands of the dispersive action profile are unions of consecutive intervals $\big[C_h^{\downarrow}(v; \ebar), C_{h-1}^{\uparrow}(v; \ebar)\big]$ each of length $\ebar>0$ by the shift relation~(\ref{ShiftRelationSpectrum}).\end{proof}

\subsection{Gaps as G\'erard--Kappeler global action variables} \label{SECsubsecBANDGAPdictionary} We now give a description of gaps:

\begin{propo} \label{GapInnit} Gaps of dispersive action profiles $f( c | v; \ebar)$ have the form $\big(C_h^{\uparrow}(v; \ebar), C_h^{\downarrow}(v; \ebar)\big)$. \end{propo}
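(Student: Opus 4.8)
The plan is to recover the gaps of $f(c | v; \ebar)$---which by Definition~\ref{BandGapOriginalDef} are the interiors of the maximal intervals on which $f$ has slope $+1$---as precisely the open intervals complementary to the bands, and then to read off their endpoints from the interlacing in Corollary~\ref{InterlacingCorollary} together with the shift relation in Lemma~\ref{ShiftRelation}. Since Corollary~\ref{InterlacingCorollary} already guarantees that $f(c|v;\ebar)$ is piecewise linear with slopes $\pm 1$, it suffices to locate the two slopes against the joint spectrum of $L_{\bullet}(v;\ebar)$ and $L_+(v;\ebar)$.

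First I would extract the slope pattern from the defining relation \eqref{Xi2f}, which gives $f'(c) = 2 \xi(c | L_{\bullet}(v;\ebar), L_+(v;\ebar)) - 1 + 2 \mathbbm{1}_{[0,\infty)}(c)$. For a pair of self-adjoint operators with trace-class (here rank-$2$) difference, the spectral shift function is locally constant off the spectra and, in the present sign convention, jumps by $+1$ at each eigenvalue $C_h^{\uparrow}(v;\ebar)$ of $L_{\bullet}(v;\ebar)$ and by $-1$ at each eigenvalue---namely $0$ and each $C_h^{\downarrow}(v;\ebar)$---of $L_+(v;\ebar)$ as $c$ increases; this is the same unit-jump count used to bound $|\xi| \le 1$ in Corollary~\ref{InterlacingCorollary} \cite{BirPush}. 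Absorbing the eigenvalue $0$ of $L_+(v;\ebar)$ into the term $2\mathbbm{1}_{[0,\infty)}(c)$---so that $c=0$ itself produces no kink in $f$---one finds $f'(c) = 1 + 2 \big[ \#\{h \ge 1 \colon C_h^{\downarrow}(v;\ebar) > c\} - \#\{h \ge 0 \colon C_h^{\uparrow}(v;\ebar) > c\} \big]$. Hence $f'(c) = -1$ exactly where one more of the $C_h^{\uparrow}(v;\ebar)$ than of the $C_h^{\downarrow}(v;\ebar)$ lies above $c$, and $f'(c) = +1$ where these two counts agree, so the only slope changes occur at the $C_h^{\uparrow}(v;\ebar)$ and $C_h^{\downarrow}(v;\ebar)$.

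This identifies the slope-$(-1)$ set with the bands $\bigcup_{h \ge 1} \big[ C_h^{\downarrow}(v;\ebar), C_{h-1}^{\uparrow}(v;\ebar) \big]$, as recorded in the proof of Proposition~\ref{SPECTRALPeriodicityConditions}; the gaps are then the complementary open intervals. By the interlacing $\cdots \le C_{h+1}^{\downarrow} \le C_h^{\uparrow} \le C_h^{\downarrow} \le C_{h-1}^{\uparrow} \le \cdots$ from \eqref{Interlacing}, the interval separating the band $\big[ C_{h+1}^{\downarrow}, C_h^{\uparrow} \big]$ from the band $\big[ C_h^{\downarrow}, C_{h-1}^{\uparrow} \big]$ is exactly $\big( C_h^{\uparrow}(v;\ebar), C_h^{\downarrow}(v;\ebar) \big)$, which is the assertion; it degenerates to the empty set precisely when the inequality $C_h^{\uparrow} \le C_h^{\downarrow}$ is an equality, i.e., when the two adjacent bands of length $\ebar$ merge. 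The main difficulty is purely bookkeeping: fixing the direction of the unit jumps of $\xi$ so that slope $+1$ corresponds to gaps rather than bands, and verifying that the eigenvalue $0$ of $L_+(v;\ebar)$ is exactly offset by $\mathbbm{1}_{[0,\infty)}(c)$. Once these signs are pinned down, the identification of the complementary intervals is immediate from \eqref{Interlacing} and the shift relation $C_h^{\downarrow} = - \ebar + C_{h-1}^{\uparrow}$ in \eqref{ShiftRelationSpectrum}.
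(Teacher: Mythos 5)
Your proposal is correct and follows essentially the same route as the paper, whose proof is a one-line citation of the interlacing inequalities \eqref{Interlacing} and the band identification in Proposition~\ref{SPECTRALPeriodicityConditions}; you simply unpack the content of those references by tracking the unit jumps of the spectral shift function through \eqref{Xi2f} (including the correct cancellation of the eigenvalue $0$ of $L_+(v;\ebar)$ against $\mathbbm{1}_{[0,\infty)}(c)$). The extra bookkeeping is sound and makes explicit why the slope-$(-1)$ set is $\bigcup_h \big[C_h^{\downarrow}(v;\ebar), C_{h-1}^{\uparrow}(v;\ebar)\big]$ and hence why the complementary open intervals are exactly $\big(C_h^{\uparrow}(v;\ebar), C_h^{\downarrow}(v;\ebar)\big)$.
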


\begin{proof} Follows from the interlacing inequalities in Corollary~\ref{InterlacingCorollary} and Proposition~\ref{SPECTRALPeriodicityConditions}.\end{proof}

After our study of gaps of dispersive action profiles in \cite{Moll1}, the same gaps were shown to be global action variables in a comprehensive analysis by G\'erard--Kappeler \cite{GerardKappeler2019} who found global action-angle variables for~(\ref{CBOE}) posed in the space of real $L^2$ functions on $\mathbb{T}$. We now present G\'erard--Kappeler's description of the gaps in the following theorem, which is a strict subset of Theorem~1 from~\cite{GerardKappeler2019} and stated here using relations between constructions in \cite{GerardKappeler2019, Moll1, NaSk2} established in Sections~\ref{SECclassicalNSpresentation} and~\ref{SECdispersiveactionprofiles}.

\begin{theom}[G\'erard--Kappeler \cite{GerardKappeler2019}] \label{GKexcerpt} For any fixed $v \in M(a) \cap L^2(\mathbb{T})$ and $\ebar>0$,
\begin{itemize}\itemsep=0pt
\item {\rm [Tori]} The phase space $M(a) \cap L^2(\mathbb{T})$ is foliated by Liouville tori \begin{gather} \label{GKTori} \Lambda^{b}(\ebar) = \{ v \colon f(c | v; \ebar) = b(c)\} \end{gather}
consisting of all $v$ whose dispersive action profiles are equal to a fixed profile~$b(c)$.
\item {\rm [Cycles]} The map $\varphi$ which takes $v$ to its classical Baker--Akhiezer function~\eqref{ClassicalNSbakerakhiezer}
\begin{gather*} 
\varphi\colon \ v \longrightarrow \Phi^{\rm BA}(u,w| v; \ebar) \end{gather*} is injective and has an inverse $\varphi^{-1}$ defined on the image of $\varphi$. A smooth global basis of cycles $\big\{{\Gamma}_h^b(\ebar)\big\}_{h=1}^{\infty}$ on the Liouville tori $\Lambda^b(\ebar)$ is given by the pushforward along $\varphi^{-1}$ of the cycle in the space of meromorphic functions in $u$ which rotates the residue at $u=C_h^{\uparrow}(v; \ebar)$.
\item {\rm [Actions]} For $\upalpha_{\rm GFZ}$ the Liouville $1$-form from Proposition~{\rm \ref{GFZformEXACT}}, the classical actions \begin{gather} \label{GKActions} \oint_{{\Gamma}_h^b(\ebar)} \upalpha_{\rm GFZ} =2\pi \ebar \big| C_h^{\uparrow}(v; \ebar) - C_h^{\downarrow}(v; \ebar) \big| \end{gather} around the cycles ${\Gamma}_h^b(\ebar)$ are $2\pi \ebar>0$ multiples of the length of the gap $\big(C_h^{\uparrow}(v; \ebar), C_h^{\downarrow}(v; \ebar)\big)$.
\end{itemize}
\end{theom}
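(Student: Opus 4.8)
The plan is to import Theorem~1 of G\'erard--Kappeler~\cite{GerardKappeler2019} and translate its three conclusions into the profile language fixed in Sections~\ref{SECclassicalNSpresentation} and~\ref{SECdispersiveactionprofiles}, so that no fresh integrable-systems machinery is needed. The essential prerequisite, already recorded above, is that the generating function~\eqref{ClassicalNSgenerating} and the classical Baker--Akhiezer function~\eqref{ClassicalNSbakerakhiezer} coincide with the objects G\'erard--Kappeler use to build their global Birkhoff coordinates on $M(a) \cap L^2(\mathbb{T})$. Granting this identification, the spectral data underlying their coordinates are precisely the eigenvalues $C_h^{\uparrow}(v;\ebar)$ of $L_{\bullet}(v;\ebar)$ together with the residues of $u \mapsto T^{\uparrow}(u|\ebar)|_v$ at those eigenvalues, and the entire statement collapses to a single action computation.

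For the [Tori] clause I would use that a dispersive action profile is reconstructed from its gaps alone: by Propositions~\ref{SPECTRALPeriodicityConditions} and~\ref{GapInnit} every band has length $\ebar$ and every gap is an interval $\big(C_h^{\uparrow}(v;\ebar), C_h^{\downarrow}(v;\ebar)\big)$, so fixing $f(c|v;\ebar) = b(c)$ is equivalent to fixing all the gap lengths $\big|C_h^{\uparrow}-C_h^{\downarrow}\big|$, hence all the action variables of~\cite{GerardKappeler2019}. Since their Theorem~1 identifies the common level sets of the actions with the Liouville tori, these level sets coincide with the sets $\Lambda^b(\ebar)$ in~\eqref{GKTori}. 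For the [Cycles] clause I would argue that the angle variable conjugate to the $h$-th action is the argument of the residue of the Baker--Akhiezer function at $u=C_h^{\uparrow}(v;\ebar)$; pushing forward along $\varphi^{-1}$ the circle on which only this argument advances by $2\pi$ produces $\Gamma_h^b(\ebar)$, and injectivity of $\varphi$ (again their result) makes $\varphi^{-1}$ well-defined on the image.

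For the [Actions] clause I would use that in any global action-angle chart the Gardner--Faddeev--Zakharov form reads $\upomega_{\rm GFZ} = \sum_h {\rm d}I_h \wedge {\rm d}\theta_h$, so its primitive differs from $\sum_h I_h\,{\rm d}\theta_h$ by an exact form; integrating around the closed cycle $\Gamma_h^b(\ebar)$, on which only $\theta_h$ advances and by exactly $2\pi$, gives $\oint_{\Gamma_h^b(\ebar)} \upalpha_{\rm GFZ} = 2\pi I_h$. It then remains to evaluate $I_h$, and this is the one substantive step: one must show $I_h = \ebar\,\big|C_h^{\uparrow}(v;\ebar) - C_h^{\downarrow}(v;\ebar)\big|$, that is, $\ebar$ times the length of the $h$-th gap. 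I expect this identification of the action with $\ebar$ times the gap length to be the main obstacle, since $\upalpha_{\rm GFZ}$ in~\eqref{1formFORMULA} is the $s=-1/2$ Liouville form rather than the standard $L^2$ one, so the factor $\ebar$ and the overall normalization must be propagated through the $\ebar$-dependence of $L_{\bullet}(v;\ebar)$ in~\eqref{ClassicalLaxINTRINSIC} and through G\'erard--Kappeler's Birkhoff map. The cleanest route is to read off their explicit action formula and rescale accordingly, rather than recompute the residue integral directly.
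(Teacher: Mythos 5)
Your proposal matches the paper's treatment: the paper offers no independent proof of this statement, but presents it explicitly as ``a strict subset of Theorem~1 from~\cite{GerardKappeler2019}'' restated via the dictionary between the Nazarov--Sklyanin generating function~\eqref{ClassicalNSgenerating}, the Baker--Akhiezer function~\eqref{ClassicalNSbakerakhiezer}, and the gaps of dispersive action profiles established in Sections~\ref{SECclassicalNSpresentation}, \ref{SECdispersiveactionprofiles} and~\ref{SECsubsecBANDGAPdictionary} (in particular Propositions~\ref{SPECTRALPeriodicityConditions} and~\ref{GapInnit}), which is exactly the import-and-translate route you describe. Your additional remarks on reconstructing the profile from the gap lengths and on reading off the action normalization from G\'erard--Kappeler's explicit formula are consistent elaborations of that same dictionary rather than a different argument.
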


\section[Quantum periodic Benjamin--Ono: Hamiltonian and Lax operator]{Quantum periodic Benjamin--Ono: Hamiltonian \\ and Lax operator} \label{SECquantumBO}

In this section we quantize the classical Benjamin--Ono equation (\ref{CBOE}) for $v$ periodic in $x$ by choosing $J$-holomorphic quantizations of the classical phase space, Hamiltonian, and Lax operator in Section~\ref{SECclassicalBO}.
\subsection[Quantum state space as Fock--Sobolev space from Segal--Bargmann construction]{Quantum state space as Fock--Sobolev space\\ from Segal--Bargmann construction} \label{SUBSECsegalbargmann}

Recall from Proposition~\ref{ClassicalPhaseSpaceCompatibility} that the spatial Hilbert transform $J$ defines a complex structure on the classical phase space $M(a)$ compatible with the metric $g_{-1/2}$ associated to the $L^2$-Sobolev norm of regularity $s= -1/2$. As a state space for the quantization of (\ref{CBOE}) we choose the Fock space of $J$-holomorphic functionals on $M(a)$ given by the Segal--Bargmann construction. To emphasize its dependence on the regularity $s=-1/2$, we may refer to this Fock space as Fock--Sobolev space.
\begin{defin} \label{QuantumPhaseSpaceDEF} For $a \in \R$ and $\hbar >0$, the Fock--Sobolev space is the complex Hilbert space
\begin{gather*} 
\overline{F}(a) = L^2_{\text{J-hol}} (M(a), \rho_{-1/2, \hbar}) \end{gather*} of $J$-holomorphic functionals on $M(a)$ square-integrable against the Segal--Bargmann Gaussian weight $\rho_{-1/2, \hbar}$ given by the standard Gaussian on $(M(a), \upomega_{\rm GFZ}, J, g_{-1/2})$ of variance~$\hbar>0$. \end{defin}

The Segal--Bargmann construction is standard in quantization. For background, see \cite[Remark~4.4]{Janson} and \cite[Section~9]{Woodhouse}. As an alternative to Definition~\ref{QuantumPhaseSpaceDEF}, we also recall that one can define the Fock--Sobolev space indirectly as a Hilbert space completion:
\begin{propo} The Fock--Sobolev space $\overline{F}(a)$ is the completion of the polynomial ring \begin{gather*} 
F(a) = \C[V_1, V_2, \ldots ] \end{gather*} in the infinitely-many Fourier modes $V_k$ from~\eqref{FourierDEF} with inner product $\langle \cdot, \cdot \rangle_{\hbar}$ defined by requiring $V_{\mu}:=V_1^{d_1} V_2^{d_2} \cdots$ with $d_k \in \{0,1,2,\ldots\}$ to be orthogonal with norm $||V_{\mu } ||_{\hbar}^2 = \prod\limits_{k=1}^{\infty} (\hbar k)^{d_k} d_k! $. \end{propo}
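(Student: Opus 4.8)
The plan is to show that the abstract Segal--Bargmann space of Definition~\ref{QuantumPhaseSpaceDEF} carries exactly the stated monomial orthogonality and norms, by diagonalizing the K\"ahler data in the Fourier basis and reducing to a countable product of one-dimensional complex Gaussians. First I would identify the $J$-holomorphic linear coordinates on $M(a)$: since the compatible complex structure of Proposition~\ref{ClassicalPhaseSpaceCompatibility} acts by $J{\rm e}^{{\bf i}kx} = -{\bf i}{\rm e}^{{\bf i}kx}$ for $k>0$, the positive Fourier modes $V_1, V_2, \ldots$ are precisely the holomorphic coordinates and $V_{-k} = \overline{V_k}$ the antiholomorphic ones. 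Hence $J$-holomorphic functionals are power series in the $V_k$, and the dense subspace of $J$-holomorphic \emph{polynomials} is exactly $F(a) = \C[V_1, V_2, \ldots]$.

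Next I would exploit that the metric $g_{-1/2}$ is diagonal in this basis, $g_{-1/2}(v,v) = 2\sum_{k\geq 1}k^{-1}|V_k|^2$, so that the standard Gaussian weight $\rho_{-1/2,\hbar}$ of variance $\hbar$ factorizes as an independent product over the modes $k\geq 1$. Independence immediately forces distinct monomials $V_\mu = V_1^{d_1}V_2^{d_2}\cdots$ to be orthogonal, since for $\mu\neq\mu'$ the integral $\int\overline{V_\mu}V_{\mu'}\,\rho_{-1/2,\hbar}$ contains a factor $\int V_k^{\,p}\overline{V_k}^{\,q}$ with $p\neq q$ that vanishes by rotational invariance of the $k$th complex Gaussian. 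To pin down the norms I would compute one mode at a time: with the variance-$\hbar$ normalization the $k$th factor of $\rho_{-1/2,\hbar}$ is proportional to $\exp(-|V_k|^2/(\hbar k))$, a centered complex Gaussian of variance $\langle|V_k|^2\rangle_\hbar = \hbar k$. The elementary moment identity $\int_{\C}|z|^{2d}\,{\rm d}\gamma_\sigma = \sigma^d\,d!$ for a centered complex Gaussian of variance $\sigma$ then gives $\|V_k^{d_k}\|_\hbar^2 = (\hbar k)^{d_k}d_k!$, and the product structure yields $\|V_\mu\|_\hbar^2 = \prod_{k\geq 1}(\hbar k)^{d_k}d_k!$, as claimed. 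Since the $V_\mu$ thereby form an orthogonal system spanning a dense subspace---polynomials being dense in any Gaussian-weighted holomorphic $L^2$ space---the space $\overline{F}(a)$ is identified with the Hilbert space completion of $F(a)$ in $\langle\cdot,\cdot\rangle_{\hbar}$.

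The main obstacle is not the per-mode computation but making the infinite-dimensional Gaussian $\rho_{-1/2,\hbar}$ rigorous: as a genuine countably additive measure it is not supported on $M(a)$ itself but on a larger space of distributions of lower Sobolev regularity, so one must either invoke the projective-limit (cylinder-measure) construction and verify tightness of the finite-dimensional marginals, or avoid measures altogether. I expect the cleanest route to be the latter: realize the quantization of the bracket~\eqref{ClassicalBracket} directly on $F(a)$, letting $\widehat{V_k}$ ($k\geq 1$) act by multiplication (creation) and $\widehat{V_{-k}} = (\widehat{V_k})^{*}$ by the adjoint (annihilation), annihilating the vacuum $1\in F(a)$ of unit norm, with the canonical commutation relations $[\,\widehat{V_{-k}},\widehat{V_{k'}}\,] = \hbar k\,\delta(k-k')$ fixed by quantizing~\eqref{ClassicalBracket}. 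Then $\|V_k^{d+1}\|_\hbar^2 = \langle V_k^d,\, \widehat{V_{-k}}\,V_k^{d+1}\rangle = \hbar k(d+1)\|V_k^d\|_\hbar^2$ gives $\|V_k^{d_k}\|_\hbar^2 = (\hbar k)^{d_k}d_k!$ by a one-line induction, and orthogonality of the $V_\mu$ follows from the mutual commutativity and independence of the creation operators across distinct modes---recovering the same inner product while sidestepping all measure-theoretic subtleties.
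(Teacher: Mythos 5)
Your argument is correct, and it is worth noting that the paper itself offers no proof of this proposition: it is stated as a standard consequence of the Segal--Bargmann construction with pointers to \cite[Remark~4.4]{Janson} and \cite[Section~9]{Woodhouse}. Your write-up therefore supplies what the paper delegates to the references, and both of your routes are sound. The Gaussian-moment route is the honest unpacking of Definition~\ref{QuantumPhaseSpaceDEF}: the metric $g_{-1/2}$ is diagonal in the Fourier modes, the weight factorizes, rotational invariance of each complex Gaussian gives orthogonality of distinct monomials, and $\int_{\C}|z|^{2d}\,{\rm d}\gamma_\sigma=\sigma^d d!$ with $\sigma=\hbar k$ gives the stated norms. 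You are also right to flag the one genuine subtlety, namely that the covariance $\langle |V_k|^2\rangle=\hbar k$ makes $\sum_k k^{-1}|V_k|^2$ divergent almost surely, so $\rho_{-1/2,\hbar}$ is not a countably additive measure on $M(a)$ itself; this is exactly why the paper records the completion-of-polynomials description as an ``alternative'' definition and cites Janson's Gaussian Hilbert space formalism. Your second, algebraic route --- deriving $\|V_k^{d}\|_\hbar^2=(\hbar k)^{d}d!$ by induction from the commutation relation $[\widehat{V}_{-k},\widehat{V}_{k'}]=\hbar k\,\delta(k-k')$ and the requirement that $\widehat{V}_{-k}=\widehat{V}_k^{\,\dagger}$ annihilate the vacuum --- is in fact the one most consistent with how the paper proceeds: Definition~\ref{DEFquantumLADDERops} posits $\widehat{V}_{-k}=\hbar k\,\partial/\partial V_k$ as the adjoint of multiplication by $V_k$, and the stated norm on $V_\mu$ is precisely the unique inner product on $\C[V_1,V_2,\ldots]$ with $\|1\|=1$ for which that adjointness holds. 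The only cosmetic caveat is the variance convention: ``standard Gaussian of variance $\hbar$'' must be read so that $\langle|V_k|^2\rangle_\hbar=\hbar k$ (density $\propto\exp(-\|v\|_{-1/2}^2/2\hbar)$ in the normalization of \eqref{ClassicalPhaseSpace}), which your computation implicitly and correctly fixes.
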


The Fourier modes $V_{\pm k}$ are functionals on $M(a)$ which satisfy $\overline{V}_{\pm k} = {V}_{\mp k}$ and are also canonical coordinates on $(M(a),\upomega_{\rm GFZ})$ as seen in (\ref{ClassicalBracket}). The quantum analogs of $V_{\pm k}$ are also well-known:

\begin{defin} \label{DEFquantumLADDERops} The creation and annihilation operators are the mutually-adjoint operators of multiplication $\widehat{V}_k = V_k$ and differentiation $\widehat{V}_{-k} = \hbar k \frac{\partial}{\partial V_k}$, respectively, in $\big(\overline{F}(a), \langle \cdot, \cdot \rangle_{\hbar}\big)$ satisfying
\begin{gather*} 
\big[ \widehat{V}_{-k} , \widehat{V}_k \big] = \hbar k \delta(k-k') \end{gather*} the quantum canonical commutation relations of the same form as the classical relations~\eqref{ClassicalBracket}. \end{defin}

\subsection{Quantum Hamiltonian at criticality} We now quantize the classical Benjamin--Ono equation (\ref{CBOE}) in $M(a)$ by replacing $V_{\pm k}$ in the classical Hamiltonian (\ref{ClassicalBOHamiltonian}) by $\widehat{V}_{\pm k}$ from Definition~\ref{DEFquantumLADDERops}.

\begin{defin} \label{DEFquantumBOhamiltonian} For $\hbar$ independent of $\ebar$ and $a$, $\widehat{V}_{\pm k}$ in Definition \ref{DEFquantumLADDERops}, and $\widehat{V}_0 = a$, the \textit{quantum periodic Benjamin--Ono equation} is the quantum Hamiltonian system in Fock--Sobolev space $\big(\overline{F}(a), \langle \cdot, \cdot \rangle_{\hbar}\big)$ determined by the quantum Hamiltonian defined without normal ordering by \begin{gather} \label{QuantumBOHamiltonian} \widehat{O}_3 ( \ebar, \hbar) = 3 \sum_{h_1, h_2 = 0}^{\infty} \widehat{V}_{h_1} \widehat{V}_{h_2 - h_1} \widehat{V}_{-h_2} + 3\sum_{h=0}^{\infty} \big ( \ebar h - a \big ) \widehat{V}_{h} \widehat{V}_{-h} + a^3 .\end{gather}
\end{defin}

The procedure of directly replacing classical canonically conjugate modes by their quantum analogs usually results in an ill-defined operator in Hilbert space that must be regularized by normal ordering. An important feature of the formula~(\ref{ClassicalBOHamiltonian}) is that substituting $V_k \mapsto \widehat{V}_k$ in~(\ref{ClassicalBOHamiltonian}) results in~(\ref{QuantumBOHamiltonian}) which is well-defined {without normal ordering}.

\subsection{Quantum Lax operator as generalized Fock-block Toeplitz operator} \label{SECquantumLAX} Let $\mathcal{F}$ be a vector space over $\mathbb{C}$ and $H_{\bullet}$ the Hardy space of Definition~\ref{HardyDef}. Recall that \textit{block Toeplitz operators} on $\mathcal{F} \otimes H_{\bullet}$ are Toeplitz operators in $H_{\bullet}$ whose matrix elements are linear operators on $\mathcal{F}$. For background on block Toeplitz operators, see Section~10 in Deift--Its--Krasovsky~\cite{DeiftItsKra}. Using material from Section~\ref{SECclassicalLAX}, we define the quantum Lax operator for~(\ref{CBOE}) as a generalized block Toeplitz operator in $\overline{F}(a) \otimes H_{\bullet}$. Since $\mathcal{F} = \overline{F}(a)$ is Fock space, we call it a generalized ``Fock-block'' Toeplitz operator.
\begin{defin} \label{QuantumBOLaxOperatorHERE} For $\ebar >0$ and $\hbar>0$, the quantum Benjamin--Ono Lax operator is the self-adjoint operator $L_{\bullet}(\widehat{v} ( \cdot, \hbar) ; \ebar)$ in $\overline{F}(a) \otimes H_{\bullet}$ realized as the unique extension of
 \begin{gather}\label{QuantumLaxMatrix} L_{\bullet}(\widehat{v} ( \cdot, \hbar); \ebar) \big |_{F(a) \otimes \C[ w ]} = \begin{bmatrix}
\big({-}0 \ebar + \widehat{V}_0 \big)\! & \widehat{V}_{-1} & \widehat{V}_{-2} & \widehat{V}_{-3} & \cdots \\
\widehat{V}_{1} & \big({-}1 \ebar+ \widehat{V}_0\big)\! & \widehat{V}_{-1} & \widehat{V}_{-2} & \ddots \\
 \widehat{V}_{2} & \widehat{V}_{1} & \big({-}2 \ebar+ \widehat{V}_0\big)\! & \widehat{V}_{-1} & \ddots \\
 \widehat{V}_3 & \widehat{V}_2 & \widehat{V}_1 &\big({-}3 \ebar + \widehat{V}_0 \big)\! & \ddots \\
 \vdots & \ddots & \ddots & \ddots & \ddots \end{bmatrix}\!,\!\!\! \end{gather} where $\widehat{V}_{\pm k}$ are from Definition~\ref{DEFquantumLADDERops} and $\widehat{V}_0$ acts by the scalar $a$. Equivalently,
 \begin{gather*} 
 L_{\bullet}(\widehat{v} ( \cdot, \hbar); \ebar)= \mathbbm{1}_{\overline{F}(a)} \otimes ( - \ebar D_{\bullet} ) + L_{\bullet}(\widehat{v}(\cdot, \hbar) )\end{gather*}
 is the generalized Fock-block Toeplitz operator of order $1$ whose symbol is the affine $\widehat{\mathfrak{gl}_1}$ current
\begin{gather} \label{CURRENTdef} \widehat{v}(x; \hbar) = \sum_{k \in \mathbb{Z}} \widehat{V}_k {\rm e}^{- {\bf i} k x} \end{gather} defined by replacing $V_{\pm k} \mapsto \widehat{V}_{\pm k}$ in the Fourier series (\ref{FourierDEF}) for the classical field $v$. \end{defin}

 The Fock-block matrix (\ref{QuantumLaxMatrix}) is essentially self-adjoint in $F(a) \otimes \C[ w ]$ due to the Szeg\H{o} projections. Indeed, since $F(a) = \C[V_1, V_2, \ldots]$, the matrix (\ref{QuantumLaxMatrix}) preserves the dense subspace in question \begin{gather*} L_{\bullet}(\widehat{v} ( \cdot, \hbar); \ebar) \colon \ F(a) \otimes \C [ w ] \rightarrow F(a) \otimes \C[ w ] \end{gather*} and commutes with the operator $ \sum\limits_{k=1}^{\infty} \widehat{V}_k \widehat{V}_{-k} \otimes \mathbbm{1} + \mathbbm{1} \otimes D_{\bullet}$ with finite-dimensional eigenspaces so essential self-adjointness follows by Nussbaum's criteria. As a corollary, for any $\Phi^{{\rm out}}, \Phi^{{\rm in}} \in \C[ w ]$, \begin{gather*} 
 \big\langle \Phi^{{\rm out}} | L_{\bullet}(\widehat{v} ( \cdot, \hbar); \ebar)^{\ell} | \Phi^{{\rm in}} \big\rangle \colon \ F(a) \rightarrow F(a) \end{gather*} the matrix element of $\ell$th powers of the quantum Lax matrix preserves $F(a)$. By contrast, without $\uppi_{\bullet}$, $\ell$th powers of the operator $L(\widehat{v}( \cdot, \hbar))$ which multiplies by the current~(\ref{CURRENTdef}) with zero mode $a$ and level $\hbar$ are ill-defined on $F(a)$, a well-known issue in the theory of vertex algebras that is discussed in Kac~\cite{Kac0}.

\subsection{Quantum Hamiltonian from quantum Lax operator} \label{SECquantumLAXHAMDICTIONARY} As in Section~\ref{SECclassicalLAXHAMDICTIONARY}, we have: \begin{propo} \label{QuantumBOHamiltonianVIALAX} The quantum periodic Benjamin--Ono Hamiltonian can be recovered as \begin{gather*} \widehat{O}_3(\ebar, \hbar) = 3\widehat{T}_3^{\uparrow}(\ebar, \hbar)- 3a \widehat{T}_2^{\uparrow}(\ebar, \hbar) + a^3 \end{gather*} in the Fock--Sobolev space $(\overline{F}(a), \langle \cdot, \cdot \rangle_{\hbar})$ where \begin{gather} \label{QuantumT2} \widehat{T}_2^{\uparrow}(\ebar, \hbar) = \big\langle 0 | L_{\bullet}(\widehat{v}( \cdot, \hbar); \ebar)^{2} | 0 \big\rangle, \\ \label{QuantumT3} \widehat{T}_{3}^{\uparrow}(\ebar, \hbar) = \big\langle 0 | L_{\bullet}(\widehat{v}(\cdot, \hbar); \ebar)^{3} | 0 \big\rangle \end{gather} are matrix element of powers of the quantum Lax operator \eqref{QuantumLaxMatrix} for $|0 \rangle = w^0 = 1 \in H_{\bullet}$.
 \end{propo}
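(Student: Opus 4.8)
The plan is to establish the quantum identity by the same direct matrix-element computation that proves its classical analogue, Proposition~\ref{ClassicalBOHamiltonianVIALAX}, now tracking the order of the non-commuting operators $\widehat V_{\pm k}$. Writing the quantum Lax operator as the sum $L_\bullet(\widehat v(\cdot,\hbar);\ebar) = B + C$ of its Toeplitz part $B = L_\bullet(\widehat v(\cdot,\hbar))$ and its diagonal dispersion part $C = \mathbbm{1}\otimes(-\ebar D_\bullet)$ from Definition~\ref{QuantumBOLaxOperatorHERE}, I would first use that $C$ annihilates the vacuum, $C|0\rangle = 0$ and $\langle 0|C = 0$, since $D_\bullet|0\rangle = 0$. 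Expanding $\langle 0|(B+C)^2|0\rangle$ and $\langle 0|(B+C)^3|0\rangle$, every term carrying a factor of $C$ adjacent to $|0\rangle$ or $\langle 0|$ then drops out, leaving only
\begin{gather*}
\widehat T_2^\uparrow(\ebar,\hbar) = \langle 0|B^2|0\rangle, \qquad
\widehat T_3^\uparrow(\ebar,\hbar) = \langle 0|B^3|0\rangle + \langle 0|BCB|0\rangle.
\end{gather*}
This reduces the statement to evaluating a small number of explicit sums.

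I would evaluate these by inserting $\sum_{h\ge 0}|h\rangle\langle h|$ between the factors and reading off the entries of the Fock-block matrix~\eqref{QuantumLaxMatrix}. Each matrix element becomes a sum over index paths $0\to h\to\cdots\to 0$ in $H_\bullet$, weighting each path by the ordered product of the corresponding entries, while the single dispersion insertion in $\langle 0|BCB|0\rangle$ contributes the scalar $-\ebar h$ at the intermediate index $h$. The key point, exactly as in the classical case, is that the order in which the operators $\widehat V_{\pm k}$ appear is dictated by the matrix product, and this is precisely the order chosen in the Abanov--Wiegmann Hamiltonian~\eqref{QuantumBOHamiltonian}: each resulting monomial ends, on the right, in an annihilation operator $\widehat V_{-h}$. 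Using $\widehat V_0 = a$ to separate the vacuum index $h=0$, assembling $3\widehat T_3^\uparrow - 3a\widehat T_2^\uparrow + a^3$ should then match~\eqref{QuantumBOHamiltonian} term by term, with $\langle 0|BCB|0\rangle$ supplying the $\ebar h$ and $-3a\widehat T_2^\uparrow$ the $-a$ in the coefficient $(\ebar h - a)$ of the quadratic term.

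The main obstacle I anticipate is not the algebra but a well-definedness subtlety: I must verify that replacing $V_{\pm k}$ by $\widehat V_{\pm k}$ forces no reordering, so that no normal-ordering (commutator) correction is generated and the order produced by the Lax matrix elements is literally that of~\eqref{QuantumBOHamiltonian}. The danger is that a careless rearrangement of the infinite sums would manufacture a formally divergent vacuum constant proportional to $\hbar\sum_{h\ge 1}h$; the content of the proposition, and the reason~\eqref{QuantumBOHamiltonian} needs no normal ordering, is that the ordering dictated by the matrix product already places every annihilation operator on the right, where it acts on only finitely many monomials. I would make this precise by reading the identity as an equality of operators on the dense domain $F(a) = \C[V_1,V_2,\dots]$, invoking the grading and Nussbaum argument of Section~\ref{SECquantumLAX} so that each $\langle 0|L_\bullet(\widehat v(\cdot,\hbar);\ebar)^\ell|0\rangle$ genuinely preserves $F(a)$, and then comparing, for arbitrary monomials $V_\mu,V_\nu$, the coefficient of $V_\nu$ in the image of $V_\mu$ under the two sides, a comparison to which only finitely many terms contribute.
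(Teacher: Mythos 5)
Your proposal is correct and follows essentially the same route as the paper, which proves this proposition only implicitly by asserting that the direct matrix-element computation of Proposition~\ref{ClassicalBOHamiltonianVIALAX} carries over verbatim after the substitution $V_{\pm k} \mapsto \widehat{V}_{\pm k}$. Your decomposition into the Toeplitz part and the vacuum-annihilating dispersion part, the resulting identities $\widehat{T}_2^{\uparrow} = \langle 0 | B^2 | 0\rangle$ and $\widehat{T}_3^{\uparrow} = \langle 0|B^3|0\rangle + \langle 0|BCB|0\rangle$, and the observation that the operator ordering dictated by the Fock-block matrix product is exactly the (well-defined, normal-ordering-free) ordering of~\eqref{QuantumBOHamiltonian} on the dense domain $F(a)$ make explicit precisely the content the paper leaves to the reader.
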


\section[Quantum stationary states: Jack functionsand Abanov--Wiegmann renormalization]{Quantum stationary states: Jack functions \\ and Abanov--Wiegmann renormalization} \label{SECquantumINT}

In this section we show how the renormalization $\ebar \rightarrow \varepsilon_1$ in~(\ref{Renormalization}) of the classical dispersion coefficient in Abanov--Wiegmann~\cite{AbWi1} is implicit in the known realization of Jack functions as quantum periodic Benjamin--Ono stationary states.

\subsection{Quantum periodic Benjamin--Ono stationary states} \label{SECjackstrue} Recall the definition of partitions. \begin{defin} \label{RegularPartitionDEF} A partition $\lambda$ is a weakly-decreasing sequence $0 \leq \cdots \leq \lambda_2 \leq \lambda_1$ of non-negative integers $\lambda_h \in \mathbb{N} = \{0,1,2,\ldots\}$ labeled by $h=1,2,3,\ldots$ so that $\sum\limits_{h=1}^{\infty} \lambda_h < \infty.$ \end{defin}

Partitions $\lambda$ index the pure quantum stationary states of our quantization of (\ref{CBOE}):
\begin{propo} The quantum periodic Benjamin--Ono Hamiltonian $\widehat{O}_3(\ebar, \hbar)$ in Fock--Sobolev space $\overline{F}(a)$ is self-adjoint with discrete spectrum indexed by partitions $\lambda$ and eigenfunctions \begin{gather} \label{QPBOStationaryStates} P_{\lambda,a} (V_1, V_2, \ldots | \ebar, \hbar) = \sum_{ \mu} \chi_{\lambda}^{\mu}(\ebar, \hbar) V_{\mu}, \end{gather} which are polynomials in $V_k$ independent of $a$, i.e., finite linear combinations of $V_{\mu} = V_1^{d_1} V_2^{d_2} \cdots$ indexed by partitions $\mu$ with $d_i = \# \{ j \colon \mu_j = i\}$ so that $\chi_{\lambda}^{\mu}(\ebar, \hbar)= 0$ if $\sum_i \lambda_i \neq \sum_j \mu_j$. \end{propo}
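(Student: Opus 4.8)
The plan is to exploit the weight grading of Fock--Sobolev space. Introduce the degree operator $\widehat{D} = \sum_{k\ge 1} k\, V_k \partial_{V_k} = \hbar^{-1}\sum_{k \ge 1}\widehat{V}_k \widehat{V}_{-k}$, which acts on each monomial $V_\mu = V_1^{d_1}V_2^{d_2}\cdots$ by the scalar $|\mu| = \sum_j \mu_j = \sum_k k\, d_k$, so that $F(a) = \C[V_1, V_2, \ldots]$ is the algebraic direct sum of the finite-dimensional graded pieces $F(a)_n = \operatorname{span}\{V_\mu \colon |\mu| = n\}$, of dimension the number of partitions of $n$. First I would observe that every monomial appearing in \eqref{QuantumBOHamiltonian} has vanishing index sum: the cubic terms $\widehat{V}_{h_1}\widehat{V}_{h_2-h_1}\widehat{V}_{-h_2}$ and the quadratic terms $\widehat{V}_h\widehat{V}_{-h}$ carry net weight $h_1 + (h_2-h_1) - h_2 = 0$ and $h-h=0$. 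Hence $\widehat{O}_3$ preserves each $F(a)_n$, i.e.\ $[\widehat{O}_3, \widehat{D}]=0$. Moreover, on a fixed $F(a)_n$ the annihilation operators force $h_1, h_2, h \le n$, so the a priori infinite sums in \eqref{QuantumBOHamiltonian} reduce to finite sums and $\widehat{O}_3|_{F(a)_n}$ is a genuine finite-dimensional operator; this is the precise content of the remark that \eqref{QuantumBOHamiltonian} needs no normal ordering.

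Next I would establish self-adjointness and the homogeneity of the eigenfunctions. Since $\widehat{V}_k$ and $\widehat{V}_{-k}$ are mutually adjoint in $(\overline{F}(a), \langle \cdot,\cdot\rangle_\hbar)$ and $\widehat{V}_0 = a$ is a real scalar, taking the formal adjoint of the cubic sum and relabelling $h_1 \leftrightarrow h_2$ returns the same sum, so $\widehat{O}_3|_{F(a)_n}$ is Hermitian on the finite-dimensional inner-product space $F(a)_n$. Diagonalizing it yields real eigenvalues and an orthogonal eigenbasis of $F(a)_n$; assembling these over all $n$ produces an orthonormal eigenbasis of $\overline{F}(a)$, so $\widehat{O}_3$ is essentially self-adjoint with pure point (discrete) spectrum. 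Because each eigenvector lies in a single $F(a)_n$, it is a finite linear combination of the $V_\mu$ with $|\mu|=n$, which is exactly the assertion that $\chi_\lambda^\mu(\ebar,\hbar)=0$ unless $\sum_i \lambda_i = \sum_j \mu_j$.

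Then I would prove independence of $a$ by isolating the zero mode. Collecting the terms of \eqref{QuantumBOHamiltonian} containing a factor $\widehat{V}_0 = a$, I would use that in the cubic sum such a factor occurs only when one of the three indices vanishes (two vanishing forces all three, since the indices sum to zero), and that each such family sums to a multiple of $\sum_{h\ge 0}\widehat{V}_h \widehat{V}_{-h} = a^2 + \hbar\widehat{D}$, which equals $a^2 + \hbar n$ on $F(a)_n$; likewise the quadratic piece $-3a\sum_h \widehat{V}_h\widehat{V}_{-h}$ contributes $-3a(a^2+\hbar n)$. Together with the constant $a^3$ these assemble into a scalar $s_n(a,\ebar,\hbar)$, so that $\widehat{O}_3|_{F(a)_n} = \widehat{A}_n(\ebar,\hbar) + s_n(a,\ebar,\hbar)$, where $\widehat{A}_n = 3\sum_{p,q,r \neq 0}\widehat{V}_p\widehat{V}_q\widehat{V}_r + 3\ebar\hbar\sum_h h^2 V_h\partial_{V_h}$ (cubic terms with all three indices nonzero, plus the $\ebar$-diagonal) is manifestly independent of $a$. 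Since $s_n$ is a scalar on $F(a)_n$, the eigenvectors of $\widehat{O}_3$ coincide with those of $\widehat{A}_n$ and do not depend on $a$ (only the eigenvalues shift by $s_n$), justifying the $a$-independent coefficients $\chi_\lambda^\mu(\ebar,\hbar)$ in \eqref{QPBOStationaryStates}.

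Finally, to label the $p(n)$ eigenvectors in $F(a)_n$ by partitions $\lambda \vdash n$, I would order $\{V_\mu\}_{\mu \vdash n}$ by the dominance partial order and show that $\widehat{A}_n$ acts triangularly with pairwise distinct diagonal entries, the diagonal containing $3\ebar\hbar\sum_h h^2 d_h(\mu)$, so that for each $\lambda$ there is a unique eigenvector $P_{\lambda,a}$ with leading term $V_\lambda$. I expect this last step to be the main obstacle, since checking the dominance-triangularity of the cubic interaction and the distinctness of the diagonal entries is the genuine computation; it is precisely the existence-and-uniqueness mechanism for Jack symmetric functions under the identification $V_k \leftrightarrow p_k$, so in practice I would reduce $\widehat{A}_n$ to the collective Calogero--Sutherland operator whose eigenfunctions are known to be Jack functions and invoke Awata--Matsuo--Odake--Shiraishi~\cite{AwMtOdSh} and Nazarov--Sklyanin~\cite{NaSk2}.
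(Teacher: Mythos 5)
Your argument is correct and is essentially the paper's own proof: the paper grades Fock space by $\widehat{T}_2(\hbar)=\sum_{k\geq 1}\widehat{V}_k\widehat{V}_{-k}$ (your $\hbar\widehat{D}$), observes $[\widehat{O}_3(\ebar,\hbar),\widehat{T}_2(\hbar)]=0$ so the finite-dimensional weight spaces are preserved, obtains Hermiticity from the mutual adjointness $\widehat{V}_k\leftrightarrow\widehat{V}_{-k}$, and concludes by the spectral theorem. Your extra paragraphs on the $a$-independence of the eigenvectors and on the dominance-triangular labeling by partitions fill in details that the paper instead delegates to the subsequent Stanley/Polychronakos/Awata--Matsuo--Odake--Shiraishi theorem identifying the stationary states as Jack functions, so the two treatments agree in substance.
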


\begin{proof} By the definition of the quantum Lax operator (\ref{QuantumLaxMatrix}), (\ref{QuantumT2}) is independent of $\ebar$ and is
\begin{gather*} \widehat{T}_2 (\hbar ) = \sum_{k=1}^{\infty} \widehat{V}_k \widehat{V}_{-k} .\end{gather*} $\widehat{T}_2(\hbar)$ acts diagonally on $V_{\mu}$ with eigenvalue $\hbar \sum_j \mu_j$. By direct calculation, it commutes \begin{gather*} \big[ \widehat{O}_3(\ebar, \hbar), \widehat{T}_2(\hbar)\big] = 0 \end{gather*} with $\widehat{O}_3(\ebar, \hbar)$ in (\ref{QuantumBOHamiltonian}), hence $\widehat{O}_3(\ebar, \hbar)$ preserves the finite-dimensional eigenspaces of $\widehat{T}_2(\hbar)$ spanned by $V_{\mu}$ with fixed $\sum_j \mu_j$. Since~(\ref{QuantumBOHamiltonian}) is symmetric under the exchange $\widehat{V}_k \leftrightarrow \widehat{V}_{-k}$ which are mutual adjoints in Fock space, $\widehat{O}_3(\ebar, \hbar)$ is self-adjoint on the finite-dimensional eigenspaces of $\widehat{T}_2(\hbar)$. The result then follows from the spectral theorem.
\end{proof}

\subsection{Jack functions and Abanov--Wiegmann renormalization} \label{SECconventions}
Several authors discovered that (\ref{QPBOStationaryStates}) are \textit{Jack functions}. We state this result in light of \cite{AbWi1}.
\begin{theom}[Stanley \cite{Stanley}, Polychronakos \cite{Poly1995}, Awata--Matsuo--Odake--Shiraishi \cite{AwMtOdSh}] Using the conventions for power sum symmetric functions $p_k$ and the Jack parameter $\alpha$ from Macdo\-nald~{\rm \cite{Mac}}, the quantum periodic Benjamin--Ono stationary states~\eqref{QPBOStationaryStates} are Jack functions with
 \begin{gather} \label{Convention1} V_k = (- \varepsilon_2) p_k, \\ \label{Convention2} \alpha = {\varepsilon_1}/{(- \varepsilon_2)}\end{gather}
 in which $\varepsilon_1 = \varepsilon_1(\ebar, \hbar)$ in~\eqref{Renormalization} and $\varepsilon_2 = \varepsilon_2( \ebar, \hbar)$ in~\eqref{RenormalizationCompanion} are defined from the coefficients of dispersion $\ebar$ and quantization $\hbar$ by the renormalization~\eqref{Renormalization} found by Abanov--Wiegmann~{\rm \cite{AbWi1}}.
\end{theom}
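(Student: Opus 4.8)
The plan is to reduce the statement to a direct comparison of the quantum Hamiltonian $\widehat{O}_3(\ebar, \hbar)$, viewed as an explicit differential operator in the $V_k$, with the bosonized cubic Calogero--Sutherland Hamiltonian whose eigenfunctions are Jack functions. By Proposition~\ref{QuantumBOHamiltonianVIALAX} together with the fact that $\widehat{T}_2(\hbar) = \sum_{k\geq 1}\widehat{V}_k\widehat{V}_{-k}$ acts as the scalar $\hbar\sum_j\mu_j$ on each $V_\mu$, on every finite-dimensional graded subspace of fixed degree the operator $\widehat{O}_3$ differs from $3\widehat{T}_3^{\uparrow}(\ebar,\hbar)$ only by an additive scalar; hence it suffices to diagonalize the cubic matrix element $\widehat{T}_3^{\uparrow}(\ebar,\hbar) = \langle 0|L_\bullet(\widehat{v}(\cdot,\hbar);\ebar)^3|0\rangle$ and identify its eigenfunctions.

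First I would write $\widehat{O}_3$ out as a differential operator by substituting $\widehat{V}_k = V_k$, $\widehat{V}_{-k} = \hbar k\,\partial/\partial V_k$, and $\widehat{V}_0 = a$ directly into \eqref{QuantumBOHamiltonian}, keeping the ordering fixed as in the (normal-ordering-free) definition. The cubic term organizes into two families: a \emph{join} part, in which two creation operators $\widehat{V}_{h_1}\widehat{V}_{h_2-h_1}$ (both indices positive) are contracted against one annihilation $\widehat{V}_{-h_2}$, and a \emph{cut} part, in which a single creation is split by two annihilations. The remaining pieces carry explicit factors of $a = V_0$ and combine with $-3a\widehat{T}_2$, the quadratic $\ebar$-term, and $a^3$ so that all $a$-dependence collapses into an additive scalar on each graded piece, leaving eigenfunctions independent of $a$ as asserted in the preceding Proposition.

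The heart of the matter is the change of variables \eqref{Convention1}. Setting $V_k = (-\varepsilon_2)p_k$ gives $\partial/\partial V_k = (-\varepsilon_2)^{-1}\partial/\partial p_k$, so that, using $\hbar = -\varepsilon_1\varepsilon_2$ from Section~\ref{SECnekrasov}, the creation and annihilation operators become $\widehat{V}_k = (-\varepsilon_2)p_k$ and $\widehat{V}_{-k} = \varepsilon_1 k\,\partial/\partial p_k$. This asymmetric rescaling is the source of the Jack parameter: the join part of the cubic term then carries the coefficient $(-\varepsilon_2)^2\varepsilon_1$ while the cut part carries $(-\varepsilon_2)\varepsilon_1^2$, so the ratio of the cut to the join coefficient is precisely $\alpha = \varepsilon_1/(-\varepsilon_2)$ of \eqref{Convention2}, while the diagonal grading term built from $\ebar h$ enters through $\ebar = \varepsilon_1 + \varepsilon_2$. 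I would then match the resulting operator, term by term, with the bosonic form of the cubic Calogero--Sutherland Hamiltonian recorded by Awata--Matsuo--Odake--Shiraishi \cite{AwMtOdSh} and Polychronakos \cite{Poly1995} (equivalently, the cubic member of the commuting family of Calogero--Sutherland quantum integrals in the conventions of Macdonald \cite{Mac}), confirming agreement up to the overall scale and the additive constant already isolated above.

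Finally, to conclude that the eigenfunctions \emph{are} the Jack functions $P_\lambda^{(\alpha)}$ rather than merely eigenfunctions of the same operator, I would invoke the triangularity characterization: the matched operator acts triangularly on the monomial symmetric functions with respect to the dominance order, with diagonal entries equal to the known Jack eigenvalues, which are distinct for comparable partitions. Together with the degree grading supplied by $\widehat{T}_2(\hbar)$, this pins down each $P_{\lambda,a}$ uniquely up to normalization, identifies it with $P_\lambda^{(\alpha)}$, and forces the parameters to be the two roots $\varepsilon_{1,2}$ of $t^2 - \ebar t - \hbar = 0$, i.e.\ \eqref{Renormalization} and \eqref{RenormalizationCompanion}. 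I expect the main obstacle to lie in the bookkeeping of the second and third steps: tracking the noncommuting orderings in the cubic term so that the join and cut coefficients emerge cleanly, and verifying that the triangular diagonal entries coincide with the Jack eigenvalues for all $\alpha > 0$ so that the uniqueness argument holds uniformly in the parameters.
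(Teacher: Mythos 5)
Your proposal is essentially correct, but note that the paper offers no proof of this statement at all: it is recorded as a known theorem with attributions to Stanley, Polychronakos, and Awata--Matsuo--Odake--Shiraishi, and the only internal evidence the paper supplies is the consistency check in Section~\ref{SECquantumNSfinally} that $\alpha - 1 = \ebar/(-\varepsilon_2)$ and $p_k^* = \alpha k \partial/\partial p_k$ matches $\widehat{V}_{-k} = \hbar k \partial/\partial V_k$ under \eqref{Convention1}--\eqref{Convention2}. What you have written is a faithful reconstruction of the standard argument in those references: substitute $V_k = (-\varepsilon_2)p_k$, use $\hbar = -\varepsilon_1\varepsilon_2$ and $\ebar = \varepsilon_1 + \varepsilon_2$ to turn $\widehat{V}_k, \widehat{V}_{-k}$ into $(-\varepsilon_2)p_k$ and $\varepsilon_1 k\,\partial/\partial p_k$, observe that the cut-to-join ratio is $\varepsilon_1/(-\varepsilon_2) = \alpha$ while the diagonal term carries $\ebar/(-\varepsilon_2) = \alpha - 1$, and identify the result with the Laplace--Beltrami-type operator characterizing Jack polynomials via triangularity on monomials in dominance order. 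Two remarks. First, the ``main obstacle'' you anticipate in tracking noncommuting orderings is not actually there: every genuinely cubic term $\widehat{V}_{h_1}\widehat{V}_{h_2-h_1}\widehat{V}_{-h_2}$ with $h_1, h_2 \geq 0$ is automatically normal ordered (creations left, annihilations right), which is exactly why the paper can define \eqref{QuantumBOHamiltonian} without normal ordering; the only bookkeeping is collecting the $\widehat{V}_0 = a$ terms, which, as you say, cancel against $-3a\widehat{T}_2$ up to a scalar on each graded piece. Second, in the uniqueness step you should be explicit that the separation of eigenvalues you need is Macdonald's statement that the diagonal entries differ on dominance-comparable partitions of equal size for all $\alpha > 0$ (not global distinctness, which fails at resonant $\alpha$); with that caveat your argument closes correctly and does pin down $\varepsilon_{1,2}$ as the roots of $t^2 - \ebar t - \hbar = 0$, i.e.\ \eqref{Renormalization} and \eqref{RenormalizationCompanion}.
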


The reduction to Schur functions at $\alpha = 1$ is $\ebar = 0$ when (\ref{CBOE}) has no dispersion term is in accordance with quantizations by Dubrovin \cite{Dubrovin2014} and Karabali--Polychronakos~\cite{KarabaliPolychronakos2014}.

\section[Quantum Nazarov--Sklyanin hierarchy: anisotropic partition profiles]{Quantum Nazarov--Sklyanin hierarchy:\\ anisotropic partition profiles}\label{SECquantumNS}

In this section we present the solution of the quantization problem for the classical integrable system (\ref{CBOE}) posed in $M(a)$ by Nazarov--Sklyanin \cite{NaSk2}. We also present the exact formula in Nazarov--Sklyanin \cite{NaSk2} for the spectrum of their quantum integrable hierarchy in terms of anisotropic partition profiles of anisotropy $(\varepsilon_2, \varepsilon_1)$ from Definition~\ref{AnisotropicProfileDEF}.

\subsection{The quantization problem for classical integrable systems} \label{SECquantizationproblem}

 Given a smooth symplectic manifold $(M, \upomega)$, we recall four types of quantizations of Poisson subalgebras $\mathsf{A} \subset C^{\infty}(M, \R)$ and state the quantization problem for classical integrable systems. The definitions below are all standard. We refer the reader to the survey of Faddeev~\cite{FaddeevDEF} and to Dubrovin \cite[Definition 1.1]{Dubrovin2014}.
\begin{defin} \label{DEFQUANTdef} For formal $\hbar$, a deformation quantization $\star_{\hbar}$ of $\mathsf{A}$ is a phase space star product \begin{gather*} O_1 \star_{\hbar} O_2 = \sum_{p=0}^{\infty} \hbar^p B_{p} (O_1, O_2) \end{gather*} defined for $O_1, O_2 \in \mathsf{A}$ by bilinear operators $B_p\colon \mathsf{A} \times \mathsf{A} \rightarrow \mathsf{A} \otimes \C$ of order $(p,p)$ for which
\begin{itemize}\itemsep=0pt
\item[\rm (i)] if $\mathsf{A}$ has a unit $1$, $1 \star_{\hbar} O = O \star_{\hbar} 1 = O$ is a two-sided identity,
\item[\rm (ii)] to leading-order, $B_0 (O_1, O_2) = O_1 O_2$, so $\star_{\hbar}$ deforms the commutative product $\star_0$,
\item[\rm (iii)] the anti-symmetric part $B_1(O_1, O_2)^- = \{O_1, O_2\}$ is the Poisson bracket in $(M, \upomega)$ for \begin{gather} \label{SymAntiSym} B_{1}(O_1, O_2)^{\pm} = \tfrac{1}{2} \big ( B_1(O_1, O_2) \pm B_1(O_2, O_1) \big ) .\end{gather} \end{itemize} \end{defin}

\begin{defin} For $\hbar>0$, an operator quantization $Q_{\hbar}$ of $\mathsf{A}$ is a unitary representation of $\star_{\hbar}$ from Definition~\ref{DEFQUANTdef}, i.e., a choice of a Hilbert space of quantum states $(\mathcal{H}, \langle \cdot, \cdot \rangle)$ and a map
\begin{gather*} Q_{\hbar}\colon \ \mathsf{A} \rightarrow {\bf i} \mathfrak{u} ( \mathcal{H}, \langle \cdot, \cdot \rangle) \end{gather*} to the space ${\bf i} \mathfrak{u} ( \mathcal{H}, \langle \cdot, \cdot \rangle)$ of self-adjoint operators in $(\mathcal{H}, \langle \cdot, \cdot \rangle)$ so the pullback $\star_{\hbar}^Q$ of multiplication \begin{gather*} Q_{\hbar}\big( O_1 \star_{\hbar}^Q O_2\big) := Q_{\hbar}(O_1) \cdot Q_{\hbar}(O_2) \end{gather*} of self-adjoint operators is a deformation quantization. Below we write $Q_{\hbar}(O) = \widehat{O}^Q(\hbar)$. \end{defin}

\begin{defin} \label{GeometricQuantizationDEF} Given an almost complex structure $J$ on $M$ compatible with $\upomega$ and associated Riemannian metric $g$, a $J$-holomorphic quantization of $\mathsf{A}$ is an operator quantization $Q_{\hbar}$ so that \begin{gather*} B_1^{Q} (O_1, O_2)^+ = g( \nabla_g O_1, \nabla_g O_2) \end{gather*} the symmetric part of the first bidifferential defined by~(\ref{SymAntiSym}) is the inverse metric $g^{-1}$. \end{defin}

\begin{defin} \label{CommutativeQuantizationDEF} Given a Poisson-commutative subalgebra $\mathsf{T} \subset \mathsf{A}$, i.e., for all $T_{\ell_1}, T_{\ell_2} \in \mathsf{T}$
\begin{gather*} \{ T_{\ell_1}, T_{\ell_2} \} = 0,\end{gather*} a $\mathsf{T}$-commutative quantization of~$\mathsf{A}$ is an operator quantization $Q$ so that for all $T_{\ell_1}, T_{\ell_2} \in \mathsf{T}$,\begin{gather*} \big[\widehat{T}_{\ell_1}^Q, \widehat{T}^Q_{\ell_2} \big] = 0 \end{gather*}
 the quantization of $\mathsf{T}$ is a commutative subalgebra of self-adjoint operators in ${\bf i} \mathfrak{u} ( \mathcal{H}, \langle \cdot, \cdot \rangle_{\hbar})$. \end{defin}

The quantization problem for integrable systems is to construct an $\mathsf{T}$-commutative quantization of $\mathsf{A} \subset C^{\infty}(M, \R)$ where $\mathsf{T}$ is the Poisson commutative subalgebra spanned by a classical integrable hierarchy in $(M, \upomega)$. For further discussion, see \cite[Definition~1.1]{Dubrovin2014}.

\subsection{Quantum Nazarov--Sklyanin hierarchy} \label{SECquantumNSfinally}

From the perspective of Section~\ref{SECquantizationproblem}, the main result in Nazarov--Sklyanin \cite{NaSk1} is an explicit solution to the quantization problem for the classical integrable hierarchy (\ref{ClassicalNSHierarchy}). Recall that the quantum Hamiltonian $\widehat{O}_3(\ebar, \hbar)$ and Lax operator $L_{\bullet}( \widehat{v}( \cdot, \hbar) ; \ebar)$ in Section~\ref{SECquantumBO} are defined without normal ordering by replacing $V_{\pm k} \rightarrow \widehat{V}_{\pm k} (\hbar)$ in formulas from Section~\ref{SECclassicalBO}.

\begin{defin} \label{QuantumNShierarchy} For $\ell=0,1,2,3,\ldots$ the quantum Nazarov--Sklyanin hierarchy \begin{gather} \label{QuantumNSHierarchy} \widehat{T}_{\ell}^{\uparrow} (\ebar, \hbar) = \big\langle 0 | L_{\bullet}(\widehat{v}( \cdot, \hbar) , \ebar)^{\ell} | 0 \big\rangle \end{gather} are the self-adjoint operators in $\overline{F}(a)$ defined without normal ordering by replacing \linebreak $V_{\pm k} \rightarrow \widehat{V}_{\pm k} (\hbar)$ in formula (\ref{ClassicalNSHierarchy}) for the classical Nazarov--Sklyanin hierarchy. At $\ell=2,3$, (\ref{QuantumNSHierarchy}) is~(\ref{QuantumT2}),~(\ref{QuantumT3}).
\end{defin}

In Section~\ref{SECquantumLAX} we saw that matrix elements of $\ell$th powers of the quantum Lax operator such as~(\ref{QuantumNSHierarchy}) are well-defined in $\overline{F}(a)$. The particular matrix elements~(\ref{QuantumNSHierarchy}) have a remarkable property:

\begin{theom}[Nazarov--Sklyanin \cite{NaSk2}] \label{QuantumNStheorem} For any $\ell_1, \ell_2 = 0,1,2,3,\ldots$, \eqref{QuantumNSHierarchy} commute \begin{gather} \label{TheTrueThing} \big[ \widehat{T}^{\uparrow}_{\ell_1}(\ebar, \hbar) , \widehat{T}^{\uparrow}_{\ell_2}(\ebar, \hbar) \big] = 0 .\end{gather}
\end{theom}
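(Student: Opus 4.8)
The plan is to mirror the classical strategy behind Theorem~\ref{ClassicalNStheorem}: rather than handle each pair $\widehat{T}^{\uparrow}_{\ell_1}, \widehat{T}^{\uparrow}_{\ell_2}$ separately, I would assemble the whole hierarchy into a single resolvent generating function and prove commutation at two independent spectral parameters. Introduce the quantum generating function
\begin{gather*} \widehat{T}^{\uparrow}(u | \ebar, \hbar) = \big\langle 0 \big| \big(u - L_{\bullet}(\widehat{v}(\cdot, \hbar); \ebar)\big)^{-1} \big| 0 \big\rangle, \end{gather*}
the quantum analog of \eqref{ClassicalNSgenerating}, understood as a formal power series in $u^{-1}$ whose coefficient of $u^{-\ell-1}$ is $\widehat{T}^{\uparrow}_{\ell}(\ebar, \hbar)$; by the discussion after Definition~\ref{QuantumBOLaxOperatorHERE} every such coefficient is a well-defined operator preserving $F(a)$. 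Consequently \eqref{TheTrueThing} for all $\ell_1, \ell_2$ is equivalent to the single family of relations $\big[\widehat{T}^{\uparrow}(u), \widehat{T}^{\uparrow}(u')\big] = 0$, read off coefficient by coefficient in $u^{-1}$ and $(u')^{-1}$.

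To establish commutation of the generating functions I would expand the commutator as a double series in the matrix elements $\langle 0 | L_{\bullet}^{m} | 0\rangle$ and reduce every rearrangement to the canonical commutation relations $\big[\widehat{V}_{-k}, \widehat{V}_{k'}\big] = \hbar k \delta(k-k')$ of Definition~\ref{DEFquantumLADDERops}. Here lies the essential difference from the classical computation: the matrix entries of $L_{\bullet}(v; \ebar)$ were commuting functions $V_{\pm k}$, so the Poisson calculation transcribed directly, whereas the quantum entries $\widehat{V}_{\pm k}$ no longer commute. The cleanest way to control this is to seek a quantum Lax-pair or RTT-type relation: an auxiliary operator $A(u)$ on $\overline{F}(a) \otimes H_{\bullet}$ whose commutator with $L_{\bullet}$ lets the product $(u - L_{\bullet})^{-1}(u' - L_{\bullet})^{-1}$ be reorganized symmetrically under $u \leftrightarrow u'$, so that the antisymmetric part telescopes to a boundary contribution annihilated by $|0\rangle$.

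The step I expect to be the main obstacle is showing that the $\hbar$-corrections generated by moving annihilation operators $\widehat{V}_{-k}$ past creation operators $\widehat{V}_{k}$ cancel after antisymmetrization in $u \leftrightarrow u'$. Each such rearrangement produces a term proportional to $\hbar$, and the claim is that in the antisymmetrized combination these corrections organize into a total difference that telescopes to zero. Two structural facts should drive the cancellation: the Toeplitz--Hankel form of $L_{\bullet}$ that governs which modes $\widehat{V}_{\pm k}$ appear at each step, and the fact that $|0\rangle = w^0$ is a genuine vacuum annihilated by every $\widehat{V}_{-k}$, which truncates the telescoping sum at its endpoint. The role of the Szeg\H{o} projection $\uppi_{\bullet}$ (the ``$\bullet$'' subscript) is precisely what both makes the matrix elements well-defined in $\overline{F}(a)$ and forces these boundary terms to close.

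Finally, as a cross-check and a possible alternative route, one can bypass the direct commutator entirely using the identification of the quantum stationary states with Jack functions in Section~\ref{SECconventions}: if one shows that every $\widehat{T}^{\uparrow}_{\ell}(\ebar, \hbar)$ is diagonalized by the same Jack-function basis, with joint eigenvalues that separate partitions, then \eqref{TheTrueThing} follows immediately from simultaneous diagonalizability. This reduces the problem to computing the action of each $\widehat{T}^{\uparrow}_{\ell}$ on Jack functions and verifying that the Jack basis is a common eigenbasis, at the cost of importing the full spectral theory of Section~\ref{SECquantumINT}.
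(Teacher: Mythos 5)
First, a point of calibration: the paper does not actually prove Theorem~\ref{QuantumNStheorem} --- it is imported verbatim from Nazarov--Sklyanin \cite{NaSk2}. The paper's own contribution in Section~\ref{SECquantumNSfinally} is limited to matching conventions and showing, via the path-decomposition identity \eqref{Clutch}, that commutativity of the $\widehat{T}^{\downarrow}_{\ell}$ family appearing in \cite{NaSk2} is equivalent to commutativity of the $\widehat{T}^{\uparrow}_{\ell}$ family \eqref{QuantumNSHierarchy}. So the relevant comparison is between your sketch and the actual argument of \cite{NaSk2, NaSk1}.

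Your primary route has a genuine gap at exactly the point you flag as ``the main obstacle.'' You posit an auxiliary operator $A(u)$ satisfying an RTT-type relation that would make the antisymmetrized resolvent product telescope, but you do not exhibit it, and no such relation is known in this simple form for the generalized Toeplitz Lax operator $L_{\bullet}(\widehat{v}(\cdot,\hbar);\ebar)$: the naive transcription of the classical Poisson computation fails precisely because of the $\hbar$-corrections you describe, and there is no reason offered why they cancel. What Nazarov--Sklyanin actually do is construct a genuinely different commuting family --- the Sekiguchi--Debiard operators at infinity $\widehat{A}(u|\ebar,\hbar)$ of \cite{NaSk1}, diagonalized on Jack functions --- and then express the $\widehat{T}^{\downarrow}_{\ell}$ hierarchy through it. This is much closer to your ``alternative route,'' which is logically sound as stated (a complete common eigenbasis does force commutativity; note the eigenvalue-separation hypothesis is superfluous for this direction). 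But there too the decisive step --- that every $\widehat{T}^{\uparrow}_{\ell}(\ebar,\hbar)$ acts diagonally on the Jack basis --- is deferred, and it is essentially Theorem~\ref{QuantumNSSPECTRUMtheorem}, i.e., the main result of \cite{NaSk2}, not a routine verification. As it stands the proposal is a plan with its central lemma unproven on both branches.
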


The map $Q_{\hbar}^{\rm NS} \colon T_{\ell}^{\uparrow}(\ebar) \mapsto \widehat{T}^{\uparrow}(\ebar, \hbar)$ defined on the Poisson-commutative subalgebra $\mathsf{T}(\ebar)$ generated by $\big\{T_{\ell}^{\uparrow}(\ebar)\big\}_{\ell=0}^{\infty}$ defines both a $\mathsf{T}(\ebar)$-commutative and $J$-holomorphic quantization of~$\mathsf{T}(\ebar)$, where $J$ is the spatial Hilbert transform. Theorem~\ref{QuantumNStheorem} implies Theorem~\ref{ClassicalNStheorem} by taking the $\hbar$-expansion. The hierarchy (\ref{QuantumNSHierarchy}) and relation (\ref{TheTrueThing}) was also found by Sergeev--Veselov \cite{SergVes, SergVes2}.

 \textit{Conventions}: We verify that our presentation of the quantum Nazarov--Sklyanin hierarchy~(\ref{QuantumNSHierarchy}) is equivalent to the original one in \cite{NaSk2}. First, use (\ref{Convention1}), (\ref{Convention2}) to change conventions in~\cite{NaSk2} (which are those of Macdonald \cite{Mac}) to ours (which we discussed in Section~\ref{SECnekrasov}). We have $\alpha -1 = \ebar / (- \varepsilon_2)$ and $p_k^* = \alpha k \frac{\partial}{\partial p_k}$ in formula (5.2) of \cite{NaSk2} is our $\widehat{V}_{-k} = \widehat{V}^{\dagger}_k = \hbar k \frac{\partial}{\partial V_k}$ from Section~\ref{SUBSECsegalbargmann}. Next, scaling the operator in formula (6.1) of \cite{NaSk2} by $1/(- \varepsilon_2)$ yields the principal minor of
 \begin{gather}\label{QuantumLaxMatrixEmbeddedPrincipalMinor} L_{+} (\widehat{v} ( \cdot, \hbar); \ebar) \big |_{F(a) \otimes \C[w]} = \begin{bmatrix}
0& 0 & 0& 0 & \cdots & \\
0 & (-1 \ebar+ \widehat{V}_0) & \widehat{V}_{-1} & \widehat{V}_{-2} & \ddots \\
0 & \widehat{V}_{1} & (-2 \ebar+ \widehat{V}_0) & \widehat{V}_{-1} & \ddots \\
0 & \widehat{V}_2 & \widehat{V}_1 &(-3 \ebar + \widehat{V}_0 )& \ddots \\
 \vdots & \ddots & \ddots & \ddots & \ddots \end{bmatrix} \end{gather} the embedded principal minor $L_+(\widehat{v}( \cdot, \hbar); \ebar)$ of our quantum Lax operator $L_{\bullet} (\widehat{v}( \cdot, \hbar); \ebar)$ in (\ref{QuantumLaxMatrix}). Note in \cite{NaSk2} one assumes $\widehat{V}_0=a = 0.$ In this way, Theorem 2 in \cite{NaSk2} asserts commutativity of
 \begin{gather} \label{QuantumNSHierarchyALT} \widehat{T}^{\downarrow}_{\ell} ( \ebar, \hbar) := \big\langle 0 | L_{\bullet} (\widehat{v}( \cdot, \hbar); \ebar) L_+(\widehat{v}( \cdot, \hbar); \ebar)^{\ell} L_{\bullet} (\widehat{v}( \cdot, \hbar); \ebar) | 0 \big\rangle
 \end{gather} for $\ell = 0,1,2,3,\ldots$, as can be seen by using our (\ref{QuantumLaxMatrix}), (\ref{QuantumLaxMatrixEmbeddedPrincipalMinor}) to rewrite formula~(6.4) in \cite{NaSk2}. Finally, commutativity of (\ref{QuantumNSHierarchyALT}) is equivalent to commutativity of (\ref{QuantumNSHierarchy}) since for $a=0$, $u \in \C \setminus \R$, \begin{gather} \label{Clutch}
 \big(u- \widehat{T}^{\downarrow}(u | \ebar, \hbar)\big)^{-1} = \widehat{T}^{\uparrow}(u| \ebar, \hbar) \end{gather}
 the series $\widehat{T}^{\uparrow}(u | \ebar, \hbar) = \sum\limits_{\ell=0}^{\infty} u^{-\ell -1} \widehat{T}_{\ell}^{\uparrow}(\ebar, \hbar)$ is the resolvent of $\widehat{T}^{\downarrow}(u | \ebar, \hbar) = \sum\limits_{\ell=0}^{\infty} u^{-\ell -1} \widehat{T}_{\ell}^{\downarrow}(\ebar, \hbar)$. To prove (\ref{Clutch}), use (\ref{QuantumBOLaxOperatorHERE}) to expand $\widehat{T}^{\uparrow}_{\ell}(\ebar, \hbar)$ in~(\ref{QuantumNSHierarchy}) as a sum over paths of length $\ell$ in $\{0,1,2,\ldots\}$ which start and end at $0$, look for the first step where the path returns to~$0$, and collect terms to form~(\ref{QuantumNSHierarchyALT}) from which the equivalence of presentations follows.

\begin{Remark} The proof of Theorem \ref{QuantumNStheorem} in Nazarov--Sklyanin \cite{NaSk2} relies on their earlier work~\cite{NaSk1} where they construct a different family of commuting operators $\widehat{A}(u| \ebar, \hbar)$ diagonalized on Jack functions~(\ref{QPBOStationaryStates}). The $\widehat{A}(u| \ebar, \hbar)$ in \cite{NaSk1} serve to define a \textit{quantum Baker--Akhiezer function} in formula~(7.1) of~\cite{NaSk2}. While we do not make use of the quantum Baker--Akhiezer function below, for completeness let us mention that in the notation of \cite{NaSk2} the classical limit is $\alpha \rightarrow 0$, hence by formulas~(5.5), (7.1) in~\cite{NaSk2} as $\alpha \rightarrow 0$ the quantum Baker--Akhiezer function degenerates to the classical Baker--Akhiezer function~(\ref{ClassicalNSbakerakhiezer}). Recall also that (\ref{ClassicalNSbakerakhiezer}) was independently discovered by G\'erard--Kappeler~\cite{GerardKappeler2019} and plays a role in their results which we reviewed in Theorem~\ref{GKexcerpt}.
\end{Remark}

\subsection{Partitions and anisotropic partition profiles}

\begin{lema} \label{AnisotropicLemma} For any $r_2 < 0 < r_1$ and $a \in \R$ fixed, there is a bijection between partitions \begin{gather}\label{Bijection} \lambda \longleftrightarrow f_{\lambda}( c-a | r_2, r_1 ) \end{gather} and anisotropic partition profiles of anisotropy $(r_2, r_1)$ centered at $a \in \R$ from Definition~{\rm \ref{AnisotropicProfileDEF}}. \end{lema}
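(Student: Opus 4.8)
The plan is to build the bijection \eqref{Bijection} by passing through the data of bands and gaps, exactly as in the proof of Proposition~\ref{INTERPRETATION}, and then matching this data to a partition through a (charged) Maya diagram. Throughout, fix $r_2<0<r_1$ and $a\in\R$.

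First I would establish the geometric characterization underlying Definition~\ref{AnisotropicProfileDEF}. Since the region \eqref{TheRegion} is required to be a \emph{finite} disjoint union of translates of the $45^{\circ}$-oriented rectangle $R(r_2,r_1)$, the upper boundary $f(c)$ is automatically piecewise-linear with slopes $\pm 1$ and agrees with $|c-a|$ outside a compact set. The claim I would prove is the equivalence: such an $f$ is an anisotropic partition profile of anisotropy $(r_2,r_1)$ centered at $a$ \emph{if and only if} every bounded band (maximal slope-$(-1)$ interval) has $c$-length in $r_1\mathbb{Z}_+$ and every gap (bounded slope-$(+1)$ interval) has $c$-length in $(-r_2)\mathbb{Z}_+$. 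This is the one genuinely geometric step. In the forward direction I would exhibit an explicit tiling of \eqref{TheRegion} by translates of $R(r_2,r_1)$, laid down in diagonal shells starting from the floor $y=|c-a|$, using that each rectangle contributes two sides of slope $+1$ with $c$-projection $-r_2$ and two sides of slope $-1$ with $c$-projection $r_1$; in the converse direction, the same side data forces the boundary runs of any tiling to accumulate in integer multiples of $r_1$ and of $-r_2$.

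Granting this characterization, I would subdivide the profile into \emph{elementary steps}: each band of $c$-length $r_1N_i$ into $N_i$ descending steps of $c$-length $r_1$, each gap of $c$-length $-r_2N_i'$ into $N_i'$ ascending steps of $c$-length $-r_2$, and the two unbounded rays into infinitely many descending resp.\ ascending steps. Reading these in order of increasing $c$ yields a sequence $\sigma\colon\mathbb{Z}\to\{\uparrow,\downarrow\}$ that is eventually descending as $c\to-\infty$ and eventually ascending as $c\to+\infty$, i.e.\ a Maya diagram differing from the profile $|c-a|$ of the empty partition in only finitely many places. I would then invoke the classical edge-sequence (Frobenius) bijection between such Maya diagrams and partitions, reading off $\lambda_h$ from the positions of the ascending steps relative to the empty-partition pattern, and conversely reconstructing $\sigma$, hence all band and gap lengths, hence $f$, from $\lambda$. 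The center is pinned by the telescoping identity $a=\sum_i s_i^{\uparrow}-\sum_i s_i^{\downarrow}$: the profile built from any admissible $\sigma$ is automatically centered at the prescribed $a$, so the Maya diagram carries charge $0$ and the correspondence is with honest partitions rather than charged ones. Verifying that the two constructions are mutually inverse is then routine bookkeeping.

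\emph{The main obstacle is the first step}: proving that the rectangle-tiling condition of Definition~\ref{AnisotropicProfileDEF} is equivalent to the band/gap divisibility conditions, in particular the \emph{existence} of the shell tiling in the forward direction. Everything downstream is the standard partition--Maya-diagram dictionary (Kerov~\cite{Ke4}) transported through the anisotropic step lengths $r_1$ and $-r_2$.
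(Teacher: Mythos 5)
Your argument is correct, but it takes a longer route than the paper. The paper's proof is a single observation made directly on the \emph{tiling}: the translates of $R(r_2,r_1)$ filling the region \eqref{TheRegion} group into rows of positive slope indexed by $h=1,2,\ldots$ starting from the right, and $\lambda_h$ is simply the number of rectangles in the $h$th row --- i.e., the region is read as a rotated Young diagram and the partition is read off cell by cell. You instead work on the \emph{boundary}: you first prove that the tiling condition is equivalent to the band lengths lying in $r_1\mathbb{Z}_+$ and the gap lengths in $(-r_2)\mathbb{Z}_+$, then encode the resulting up/down step sequence as a charge-zero Maya diagram and invoke the standard edge-sequence bijection with partitions. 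Both are valid. What your detour buys is a proof of the band/gap divisibility characterization, which the paper actually \emph{uses} without proof elsewhere --- the one-line proof of Proposition~\ref{INTERPRETATION} is precisely the equivalence you isolate as your ``main obstacle'' --- so your version does double duty. What it costs is length, plus one point that deserves more care than your sketch gives it: in the ``tiling $\Rightarrow$ divisibility'' direction you should say explicitly that, because all tiles are \emph{translates} of a single oriented rectangle, every slope-$(-1)$ side has $c$-projection exactly $r_1$ and every slope-$(+1)$ side exactly $-r_2$, and that the tiles meeting a maximal boundary run of $f$ from below must do so along full sides which partition that run; this is what forces the integrality. Your remark that the centering $a=\sum_i s_i^{\uparrow}-\sum_i s_i^{\downarrow}$ forces the Maya diagram to have charge zero, so that the correspondence is with honest rather than charged partitions, is a correct and worthwhile observation that the paper leaves implicit.
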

\begin{proof} The rectangles which tile the region~(\ref{TheRegion}) below an anisotropic partition profile are grouped in rows of positive slope indexed by $h=1,2,3,\ldots$ starting from the right. The count~$\lambda_h$ of the number of rectangles $R(r_2, r_1)$ in the $h$th row defines the necessary bijection~(\ref{Bijection}).
\end{proof}

 To illustrate (\ref{Bijection}), the partition $\lambda$ in Fig.~\ref{anisotropicPOPFIGURE} is $\cdots \leq 1 \leq 1 \leq 1 \leq 4 \leq 4 \leq 4 \leq 4$.

\subsection{Spectrum of the quantum Nazarov--Sklyanin hierarchy}
Without relying on knowledge of the classical multi-phase solutions of (\ref{CBOE}) nor on semi-classical approximation, not only did Nazarov--Sklyanin \cite{NaSk2} solve the quantization problem for (\ref{CBOE}) by constructing the hierarchy (\ref{QuantumNSHierarchy}), they also found the exact quantum spectrum (Hamiltonian eigenvalues) of this hierarchy at the Jack functions $P_{\lambda, a}(V | \ebar, \hbar)$ (\ref{QPBOStationaryStates}) in terms of the Jack parameter $\alpha$ and the parts $\lambda_h$ of the partition $\lambda$. One striking feature of the spectrum in \cite{NaSk2} is that it can be presented in terms of the anisotropic partition profile $f_{\lambda}(c-a | \varepsilon_2, \varepsilon_1)$ from Lemma~\ref{AnisotropicLemma} using the conventions from Section~\ref{SECconventions}:
\begin{theom}[Nazarov--Sklyanin~\cite{NaSk2}] \label{QuantumNSSPECTRUMtheorem} For any partition $\lambda$, $a \in \R$, and $\ebar, \hbar>0$, the eigenvalue
\begin{gather*} \widehat{T}^{\uparrow}(u; \ebar, \hbar) P_{\lambda;a}( V | \ebar, \hbar) = T^{\uparrow}_{\lambda;a}(u | \ebar, \hbar) P_{\lambda;a}( V | \ebar, \hbar)\end{gather*} of the generating function of the quantum periodic Benjamin--Ono hierarchy \eqref{QuantumNSHierarchy} defined by
\begin{gather*} 
\widehat{T}^{\uparrow}(u; \ebar, \hbar) = \sum_{\ell=0}^{\infty} \widehat{T}^{\uparrow}_{\ell}(\ebar, \hbar) u^{- \ell-1} \end{gather*}
at a Jack function $P_{\lambda;a}(V | \ebar, \hbar) \in \overline{F}(a)$ in the Fock--Sobolev space associated to the classical phase space $(M(a), \upomega_{\rm GFZ})$ with zero mode $a = \int_0^{2\pi} v(x) \tfrac{{\rm d}x}{2 \pi}$ is given for $u \in \C \setminus \R$ by \begin{gather} \label{NSMegaFormula} T^{\uparrow}_{\lambda;a}(u | \varepsilon_1, \hbar) = \exp \left( \int_{- \infty}^{+\infty} \log \left[ \frac{1}{u-c} \right] \tfrac{1}{2} f_{\lambda}''(c-a | \varepsilon_2, \varepsilon_1) {\rm d}c \right ), \end{gather} where $f_{\lambda}( c-a | \varepsilon_2, \varepsilon_1)$ is the anisotropic partition profile of anisotropy $(\varepsilon_2, \varepsilon_1)$ centered at $a \in \R$. \end{theom}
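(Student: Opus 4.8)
The plan is to mirror, at the quantum level, the classical derivation of Proposition~\ref{Kerov337}: there the generating function of the hierarchy was written as an exponential integral against $\tfrac{1}{2}f''$ by way of the resolvent/perturbation-determinant identity~(\ref{IHaveThis}). Here the mechanism is identical, the only new input being the identification of the discrete spectral data of the \emph{quantum} operator at a Jack eigenstate with the corners of the anisotropic profile $f_{\lambda}(c-a|\varepsilon_2,\varepsilon_1)$. Since the stationary states~(\ref{QPBOStationaryStates}) are Jack functions and the operators $\widehat{T}^{\uparrow}_{\ell}(\ebar,\hbar)$ mutually commute (Theorem~\ref{QuantumNStheorem}), each Jack function is a simultaneous eigenvector of the generating function $\widehat{T}^{\uparrow}(u;\ebar,\hbar)$, so the eigenvalue $T^{\uparrow}_{\lambda;a}(u|\ebar,\hbar)$ is well-defined and, being the eigenvalue of a resolvent-type element $\langle 0 | (u - L_{\bullet})^{-1} | 0 \rangle$, is a rational function of $u$. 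The content of the theorem is to locate its poles and zeros.

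First I would invoke the explicit eigenvalue of $\widehat{T}^{\uparrow}(u)$ on Jack functions computed by Nazarov--Sklyanin in~\cite{NaSk2}, expressed in their conventions through the Jack parameter $\alpha$ and the parts $\lambda_h$. Next I would translate this expression into the present conventions via~(\ref{Convention1}) and~(\ref{Convention2}), so that the $\alpha$-weighted contents appearing in the eigenvalue become combinations of the renormalized couplings $\varepsilon_1$ and $\varepsilon_2$ of~(\ref{Renormalization}) and~(\ref{RenormalizationCompanion}); the zero mode $a$ enters as an overall shift, recovering the $a=0$ normalization of~\cite{NaSk2} through~(\ref{Clutch}). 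The outcome is that $T^{\uparrow}_{\lambda;a}(u|\ebar,\hbar)$ is a ratio of two finite products of linear factors $u - s$, the numerator factors indexed by the inner corners $s_i^{\downarrow}$ and the denominator factors by the outer corners $s_i^{\uparrow}$ of the Young diagram of $\lambda$, drawn in anisotropy $(\varepsilon_2,\varepsilon_1)$ and centered at $a$.

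The decisive step is then the geometric identification: by Lemma~\ref{AnisotropicLemma} the partition $\lambda$ corresponds to the anisotropic profile $f_{\lambda}(c-a|\varepsilon_2,\varepsilon_1)$, whose local extrema~(\ref{BOMultiPhaseParameters}) are precisely these corner coordinates, the local minima being the outer corners $s_i^{\uparrow}$ and the local maxima the inner corners $s_i^{\downarrow}$. Since such a profile is piecewise linear with slopes $\pm 1$, its distributional second derivative is the signed sum of unit Dirac masses $\tfrac{1}{2}f_{\lambda}'' = \sum_i \delta_{s_i^{\uparrow}} - \sum_i \delta_{s_i^{\downarrow}}$ supported at these extrema. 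Substituting this into the right-hand side of~(\ref{NSMegaFormula}) and evaluating the integral term by term collapses the exponential into exactly the ratio of products found in the previous step -- the same computation that produced~(\ref{IHaveThis}) from the signed point masses of $\tfrac{1}{2}f''$ -- which completes the proof.

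The main obstacle I anticipate lies in the bookkeeping of the second and third steps: matching the $\alpha$-weighted box contents that appear in the Nazarov--Sklyanin eigenvalue to the local extrema of the anisotropic profile. This is the two-parameter (anisotropic) generalization of Kerov's correspondence between a partition's contents and the corners of its profile, and the delicate part is verifying that the telescoping of contents along rows and columns, weighted by $\varepsilon_1$ and $\varepsilon_2$ respectively, reproduces exactly the interlacing extrema $s_n^{\uparrow} < s_n^{\downarrow} < \cdots < s_0^{\uparrow}$ with the correct signs and the correct overall center $a$. Once this content-to-corner dictionary is pinned down, the reduction to the exponential-integral form~(\ref{NSMegaFormula}) is formal.
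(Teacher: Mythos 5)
Your proposal follows essentially the same route as the paper, whose entire justification of this quoted Nazarov--Sklyanin result is the remark after the statement that both sides can be written as $\prod_{\square \in \lambda} S(c(\square)\,|\,\varepsilon_2,\varepsilon_1)$ with $S(c\,|\,\varepsilon_2,\varepsilon_1)=\tfrac{(c+\varepsilon_2)(c+\varepsilon_1)}{c(c+\varepsilon_2+\varepsilon_1)}$ --- i.e., precisely the convention translation via~(\ref{Convention1})--(\ref{Convention2}) followed by the content-to-corner telescoping you identify as the main bookkeeping step. Your sign conventions ($\tfrac12 f_{\lambda}''=\sum_i\delta_{s_i^{\uparrow}}-\sum_i\delta_{s_i^{\downarrow}}$, numerator factors at the $s_i^{\downarrow}$, denominator at the $s_i^{\uparrow}$) are consistent with the paper's band/gap conventions, so the argument is sound.
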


Using Section~\ref{SECquantumNSfinally}, one can match (\ref{NSMegaFormula}) with the eigenvalue formula in \cite[Section~6]{NaSk2} by writing each using $\prod_{\square \in \lambda} S(c(\square) | \varepsilon_2, \varepsilon_1)$ where $c(\square)$ is the content of the box $\square$ in $\lambda$ and $S(c | \varepsilon_2, \varepsilon_1) = \frac{(c + \varepsilon_2)(c + \varepsilon_1)}{ c ( c+ \varepsilon_2 + \varepsilon_1)}.$

\section{Multi-phase solutions: finite-gap conditions} \label{SECclassicalMP}

In this section we recall our result from \cite{Moll1} that multi-phase solutions (\ref{BOmPhaseSolution}) of (\ref{CBOE}) are finite-gap. We also discuss why this result agrees with a subsequent classification of finite-gap solutions by G\'erard--Kappeler \cite{GerardKappeler2019} in order to apply their computations of classical action integrals for arbitrary $v$ -- which we presented in Theorem~\ref{GKexcerpt} -- to multi-phase $v = v^{\vec{s}, \vec{\chi}}(x, t; \ebar)$ in Section~\ref{SECquantumMP}.

\subsection{Multi-phase solutions are finite-gap}
 The multi-phase solutions (\ref{BOmPhaseSolution}) have not appeared at all in Sections~\ref{SECclassicalBO},~\ref{SECclassicalNS},~\ref{SECclassicalINT},~\ref{SECquantumBO},~\ref{SECquantumINT} and~\ref{SECquantumNS} above. The following result from \cite{Moll1} establishes the relevance of the constructions and results from these previous sections to the study of multi-phase solutions.
\begin{theom}[{\cite[Theorem 1.2.1]{Moll1}}] \label{FiniteGapResult} For any $\vec{s} \in \R^{2n+1}$ from~\eqref{BOMultiPhaseParameters} and $\vec{\chi} = (\chi_n, \ldots, \chi_1) \in \R^n$, \begin{gather} \label{FiniteGapResultFormula} f\big(c | v^{\vec{s}, \vec{\chi} } ( \cdot ; \ebar); \ebar\big) = f( c| \vec{s})\end{gather} the dispersive action profiles $f( c | v ; \ebar)$ from Definition~{\rm \ref{DispersiveActionProfileDEF}} of the multi-phase solutions $v = v^{\vec{s}, \vec{\chi}}(x, t; \ebar)$ \eqref{BOmPhaseSolution} are equal to the Dobrokhotov--Krichever profiles $f(c | \vec{s})$ from Definition~{\rm \ref{MultiPhaseProfileDEF}}. In particular, for multi-phase $v = v^{\vec{s}, \vec{\chi}}$, only finitely-many gaps $\big(s_i^{\uparrow}, s_i^{\downarrow}\big)$ are non-empty. \end{theom}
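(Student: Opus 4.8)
The plan is to pass from the profile to the scalar generating function and identify the latter explicitly. By Corollary~\ref{Paper1KMKcorollary} and the identity \eqref{IHaveThis}, the dispersive action profile $f(c | v; \ebar)$ is determined by the resolvent matrix element $T^\uparrow(u|\ebar)|_v = \langle 0 | (u - L_\bullet(v;\ebar))^{-1} | 0\rangle$ from \eqref{ClassicalNSgenerating}, so it suffices to prove that for the multi-phase potential $v = v^{\vec s, \vec\chi}(\cdot; \ebar)$ one has
\[
T^\uparrow(u | \ebar)\big|_{v^{\vec s, \vec\chi}} = \frac{\prod_{i=1}^n \big(u - s_i^\downarrow\big)}{\prod_{i=0}^n \big(u - s_i^\uparrow\big)}.
\]
Indeed, taking the logarithmic derivative of this rational function and comparing with \eqref{IHaveThis} forces $\tfrac12 f''(c | v^{\vec s,\vec\chi}; \ebar) = \sum_{i=0}^n \delta(c - s_i^\uparrow) - \sum_{i=1}^n \delta(c - s_i^\downarrow)$, i.e.\ the profile is piecewise-linear with local minima at the $s_i^\uparrow$ and local maxima at the $s_i^\downarrow$; by Definition~\ref{MultiPhaseProfileDEF} this is exactly the Dobrokhotov--Krichever profile $f(c|\vec s)$, and since the right-hand side is a finite rational function only the $n$ bounded gaps $\big(s_i^\uparrow, s_i^\downarrow\big)$ are non-empty. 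Two built-in checks guide the bookkeeping: the expansion at $u = \infty$ begins with $u^{-1}$, matching $T_0^\uparrow = 1$, and the first moment gives $T_1^\uparrow = \langle 0 | L_\bullet | 0\rangle = a = \sum_{i=0}^n s_i^\uparrow - \sum_{i=1}^n s_i^\downarrow$, which pins the centering in Definition~\ref{MultiPhaseProfileDEF}.

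Before computing I would dispose of the dependence on $\vec\chi$ and $t$. The generating function $T^\uparrow(u|\ebar)$ collects the conserved quantities of the Nazarov--Sklyanin hierarchy (Theorem~\ref{ClassicalNStheorem}); the phase motions $\chi_i$ are flows of this hierarchy and act isospectrally on $L_\bullet$, so the spectral data entering the right-hand side above are conserved and the answer is manifestly free of $\vec\chi$ and $t$, as it must be. The substantive point is then the \emph{finite-gap} mechanism. Although $L_\bullet(v^{\vec s,\vec\chi};\ebar)$ has infinitely many eigenvalues filling each band with spacing $\ebar$ (Proposition~\ref{SPECTRALPeriodicityConditions}), the displayed rational function has only the $n+1$ poles $s_0^\uparrow, \ldots, s_n^\uparrow$; the claim is therefore that the spectral measure of $L_\bullet$ at the cyclic vector $|0\rangle$ is supported on exactly these $n+1$ top-of-band eigenvalues. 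Equivalently, $|0\rangle$ lies in the $(n+1)$-dimensional $L_\bullet$-invariant subspace spanned by the corresponding eigenvectors, i.e.
\[
\prod_{i=0}^n \big( L_\bullet\big(v^{\vec s,\vec\chi};\ebar\big) - s_i^\uparrow \big)\,|0\rangle = 0.
\]
Granting this, Proposition~\ref{Paper1KMK} writes $T^\uparrow$ as the quotient of perturbation determinants $\tfrac1u \det(u-L_+)/\det(u-L_\bullet)$, and the shift relation \eqref{ShiftRelationSpectrum} for the embedded principal minor $L_+$ cancels all but $n$ of the pole--zero pairs, leaving precisely the zeros $s_i^\downarrow$.

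The heart of the proof --- and the step I expect to be the main obstacle --- is establishing this finite-gap identity for the explicit determinantal potential \eqref{BOmPhaseSolution}--\eqref{BOmPhaseSolutionMatrix}. I would exploit the algebro-geometric origin of the multi-phase solutions: Dobrokhotov--Krichever \cite{DobrokhotovKrichever} build them from a rational spectral curve whose singularities are exactly the ordered parameters \eqref{BOMultiPhaseParameters}, and the classical Baker--Akhiezer function $\Phi^{\rm BA}(u,w|v;\ebar)$ of \eqref{ClassicalNSbakerakhiezer} is the associated eigenfunction; computing its mean component $\Phi_0^{\rm BA}(u) = T^\uparrow(u|\ebar)$ reads off the poles $s_i^\uparrow$ and zeros $s_i^\downarrow$ directly from that curve. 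The delicate part is controlling the infinitely many ``hidden'' band eigenvalues, whose eigenvectors must be orthogonal to $|0\rangle$ so as not to contribute poles. To keep the algebra honest I would first treat the prototype $n=1$, where the permanent form \eqref{BO1PhaseForm} has geometric Fourier coefficients and $L_\bullet$ is an explicit perturbation of $-\ebar D_\bullet$ whose resolvent one computes in closed form to obtain $\big(u - s_1^\downarrow\big)/\big[\big(u-s_0^\uparrow\big)\big(u-s_1^\uparrow\big)\big]$, and then organize the general case inductively around the spectral-curve data. Alternatively, one may bypass the explicit computation by invoking the classification of finite-gap solutions and the explicit generating function of G\'erard--Kappeler \cite{GerardKappeler2019} discussed in Section~\ref{SECclassicalMP}, matching their gap data with \eqref{BOMultiPhaseParameters}; either route yields \eqref{FiniteGapResultFormula} together with the finiteness of the non-empty gaps.
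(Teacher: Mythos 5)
Your reduction is sound and is in fact \emph{equivalent} to the statement: by Corollary~\ref{Paper1KMKcorollary} and \eqref{IHaveThis}, the equality $f(c\,|\,v^{\vec s,\vec\chi};\ebar)=f(c\,|\,\vec s)$ is the same as the rational form $T^{\uparrow}(u\,|\,\ebar)|_{v^{\vec s,\vec\chi}}=\prod_{i=1}^n\big(u-s_i^{\downarrow}\big)/\prod_{i=0}^n\big(u-s_i^{\uparrow}\big)$, your identification of the local minima/maxima of the Dobrokhotov--Krichever profile is correct, and the consistency checks $T_0^{\uparrow}=1$, $T_1^{\uparrow}=a=\sum_i s_i^{\uparrow}-\sum_i s_i^{\downarrow}$ are right. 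Be aware, however, that the paper does not prove this theorem here: it is imported from \cite[Theorem~1.2.1]{Moll1}, and the route taken there is the one you describe as primary, namely matching the Nazarov--Sklyanin Baker--Akhiezer function \eqref{ClassicalNSbakerakhiezer} with the Baker--Akhiezer function on the singular Dobrokhotov--Krichever spectral curves (see the Remark following Theorem~\ref{GKFINITE}). So your plan coincides in spirit with the actual proof, but the decisive step --- establishing $\prod_{i=0}^n\big(L_{\bullet}\big(v^{\vec s,\vec\chi};\ebar\big)-s_i^{\uparrow}\big)|0\rangle=0$, equivalently computing $\Phi_0^{\rm BA}$ from the curve and controlling the infinitely many band eigenvectors orthogonal to $|0\rangle$ --- is exactly the content of \cite{Moll1} and is only flagged, not executed, in your proposal.

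One concrete caution about your fallback route: Theorem~\ref{GKFINITE} only characterizes \emph{which} $v$ are finite-gap. Combined with the fact that $\tau_v=\det M^{\vec s,\vec\chi}$ is a polynomial in $w$ with zeros outside the closed unit disk, it yields the ``in particular'' clause (finitely many non-empty gaps), and that is all the paper's in-text argument extracts from it. It does not by itself place the gap endpoints at the $s_i^{\uparrow}$, $s_i^{\downarrow}$ of \eqref{BOMultiPhaseParameters}, so the phrase ``matching their gap data'' conceals the same spectral computation you deferred in the first route; as written, that alternative proves strictly less than \eqref{FiniteGapResultFormula}.
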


In the following proposition, we refine the spectral description of $f(c|\vec{s})$ in Theorem~\ref{FiniteGapResult}.

\begin{propo} \label{IKnowTheBandsAndGaps} For $N_i \in \mathbb{Z}_+$ in \eqref{BOMultiPhasePeriodicConditions} and $h_i = N_i + \cdots + N_1$, non-zero gaps in \eqref{FiniteGapResultFormula} are \begin{gather} \label{IKnowTheGapsFormula} \big(s_i^{\uparrow}, s_i^{\downarrow}\big) = \big( C_{h_i}^{\uparrow}(v^{\vec{s}, \vec{\chi}}; \ebar) , C_{h_i}^{\downarrow}\big(v^{\vec{s}, \vec{\chi}}; \ebar\big) \big) \end{gather} and the bands $\big[s_i^{\downarrow}, s_{i-1}^{\uparrow}\big]$ are unions of $N_i$ intervals $\big[C_{h}^{\downarrow}(v; \ebar), C_{h-1}^{\uparrow}(v; \ebar)\big]$ each of size $\ebar >0$. \end{propo}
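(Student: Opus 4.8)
The plan is to combine the profile identification of Theorem~\ref{FiniteGapResult} with the spectral description of bands and gaps developed in Section~\ref{SECclassicalINT}, and then to pin down the indices $h_i$ by a counting argument keyed to the spatial periodicity conditions. By Theorem~\ref{FiniteGapResult}, the dispersive action profile $f(c|v^{\vec{s},\vec{\chi}};\ebar)$ coincides with the Dobrokhotov--Krichever profile $f(c|\vec{s})$, whose local extrema are exactly the parameters~\eqref{BOMultiPhaseParameters}; since the profile has slope $-1$ at $-\infty$ and slope $+1$ at $+\infty$ with alternating turning points, each $s_i^{\uparrow}$ is a local minimum and each $s_i^{\downarrow}$ a local maximum. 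Hence, by Definition~\ref{BandGapOriginalDef}, the non-empty gaps are precisely the $n$ intervals $(s_i^{\uparrow}, s_i^{\downarrow})$ and the finite bands are $[s_i^{\downarrow}, s_{i-1}^{\uparrow}]$, of length $\ebar N_i$ by Proposition~\ref{MultiPhasePeriodicityConditions}.

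Next I translate these features into spectral data. By Proposition~\ref{GapInnit} every gap has the form $(C_h^{\uparrow}(v;\ebar), C_h^{\downarrow}(v;\ebar))$, and by the shift relation~\eqref{ShiftRelationSpectrum} of Lemma~\ref{ShiftRelation} consecutive eigenvalues obey $C_h^{\downarrow} = -\ebar + C_{h-1}^{\uparrow}$. Thus the profile is assembled from elementary bands $[C_h^{\downarrow}, C_{h-1}^{\uparrow}]$ of length exactly $\ebar$, separated by the gaps $(C_h^{\uparrow}, C_h^{\downarrow})$, which collapse precisely when $C_h^{\uparrow} = C_h^{\downarrow}$. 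Since the profile has constant slope $+1$ above the top of the spectrum, the largest eigenvalue is $C_0^{\uparrow} = s_0^{\uparrow}$, which anchors the count at the right end.

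I then induct on $i$ from the right, setting $h_0 = 0$. Assuming $s_{i-1}^{\uparrow} = C_{h_{i-1}}^{\uparrow}$ with $h_{i-1} = N_1 + \cdots + N_{i-1}$, note that the band $[s_i^{\downarrow}, s_{i-1}^{\uparrow}]$ contains no interior gap yet has length $\ebar N_i$; being a union of size-$\ebar$ elementary bands, it must be the contiguous union of the $N_i$ elementary bands indexed by $h = h_{i-1}+1, \ldots, h_{i-1}+N_i$. Equivalently, the interior gaps $(C_h^{\uparrow}, C_h^{\downarrow})$ for $h = h_{i-1}+1, \ldots, h_{i-1}+N_i-1$ all vanish, so $C_h^{\uparrow} = C_h^{\downarrow} = -\ebar + C_{h-1}^{\uparrow}$ telescopes to $C_{h_{i-1}+N_i}^{\downarrow} = s_{i-1}^{\uparrow} - \ebar N_i = s_i^{\downarrow}$. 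Therefore the next non-empty gap is $(s_i^{\uparrow}, s_i^{\downarrow}) = (C_{h_i}^{\uparrow}, C_{h_i}^{\downarrow})$ with $h_i = h_{i-1} + N_i = N_1 + \cdots + N_i$, which both closes the induction (the identity $C_{h_i}^{\uparrow} = s_i^{\uparrow}$ feeds the next step) and exhibits each band as the asserted union of $N_i$ intervals $[C_h^{\downarrow}, C_{h-1}^{\uparrow}]$ of size $\ebar$.

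The main obstacle is the bookkeeping in this inductive step: one must argue that a slope-$-1$ segment of length $\ebar N_i$ forces exactly $N_i$ consecutive elementary spectral bands to merge with all $N_i-1$ interior gaps collapsing, which rests on matching the purely geometric extrema $s_i^{\uparrow}, s_i^{\downarrow}$ with the spectral quantities $C_h^{\uparrow}, C_h^{\downarrow}$ under the rigidity of the shift relation. The finite-gap property from Theorem~\ref{FiniteGapResult} guarantees the induction terminates after $n$ steps and that all gaps with $h > h_n$ are empty, consistent with the unbounded band $(-\infty, s_n^{\uparrow}]$.
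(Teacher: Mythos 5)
Your proposal is correct and takes essentially the same route as the paper: it combines the profile identification of Theorem~\ref{FiniteGapResult} with Proposition~\ref{GapInnit}, the shift relation~\eqref{ShiftRelationSpectrum}, and the band lengths $\ebar N_i$ from Proposition~\ref{MultiPhasePeriodicityConditions}. The paper's proof leaves the index bookkeeping $h_i = N_i + \cdots + N_1$ implicit in the phrase ``follows from our identification of profiles,'' and your right-to-left induction anchored at $C_0^{\uparrow} = s_0^{\uparrow}$ simply makes that counting explicit.
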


\begin{proof} For any $v$, by Proposition \ref{GapInnit}, the gaps in the dispersive action profile $f( c | v; \ebar)$ must have the form $\big(C_h^{\uparrow}(v; \ebar), C_h^{\downarrow}(v; \ebar)\big)$ for some $h$. For multi-phase $v = v^{\vec{s}, \vec{\chi}}$, by Proposition~\ref{MultiPhasePeriodicityConditions}, the bands $\big[s_i^{\downarrow}, s_{i-1}^{\uparrow}\big]$ in the Dobrokhotov--Krichever profile $f(c | \vec{s})$ have length $\ebar N_i$ for $N_i \in \mathbb{Z}_+$. The identification of gaps (\ref{IKnowTheGapsFormula}) follows from our identification of profiles in Theorem~\ref{FiniteGapResult}.\end{proof}

\subsection{Multi-phase solutions from G\'erard--Kappeler classification} As a consequence of the identification of gaps in \cite{GerardKappeler2019, Moll1} in Section~\ref{SECsubsecBANDGAPdictionary}, Theorem~\ref{FiniteGapResult} can also be seen to follow from a recent classification of finite gap solutions:

\begin{theom}[{\cite[Theorem 3]{GerardKappeler2019}}] \label{GKFINITE} $v(x)$ has a dispersive action profile $f( c| v; \ebar)$ with finitely-many gaps of non-zero length if and only if it is of the form
\begin{gather} \label{TheForm} v(x) = C_0 - 2 \ebar \operatorname{Im} \partial_x \log \tau_v\big({\rm e}^{{\bf i} x}\big) \end{gather} for $C_0 \in \R$ and $\tau_v$ a polynomial in $w = {\rm e}^{{\bf i} x}$ whose zeroes all lie outside the closed unit disk. \end{theom}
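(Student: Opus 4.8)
The plan is to use the resolvent generating function $T^{\uparrow}(u|\ebar)|_v$ of \eqref{ClassicalNSgenerating} as a bridge, reducing the biconditional to the statement that $v$ has finitely-many non-zero gaps if and only if $T^{\uparrow}(u|\ebar)|_v$ is a rational function of $u$ of Dobrokhotov--Krichever type. First I would reformulate the finite-gap hypothesis spectrally: by Proposition~\ref{GapInnit} the gaps are the intervals $\big(C_h^{\uparrow}(v;\ebar),C_h^{\downarrow}(v;\ebar)\big)$, and by the shift relation \eqref{ShiftRelationSpectrum} the gap at index $h\geq 1$ is empty exactly when $C_h^{\uparrow}=C_h^{\downarrow}$. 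Feeding this into the perturbation-determinant formula \eqref{ClassicalOperatorKMK} of Corollary~\ref{Paper1KMKcorollary} and cancelling the factor $(u-C_h^{\downarrow})/(u-C_h^{\uparrow})=1$ contributed by every empty gap, all but finitely-many factors disappear and one is left with
\[ T^{\uparrow}(u|\ebar)|_v=\frac{\prod_{i=1}^n\big(u-s_i^{\downarrow}\big)}{\prod_{i=0}^n\big(u-s_i^{\uparrow}\big)}, \]
a rational function with a simple zero at $u=\infty$ whose $2n+1$ poles and zeros are precisely the singularities \eqref{BOMultiPhaseParameters} of the Dobrokhotov--Krichever spectral curve. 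Equivalently, the spectral measure ${\rm d}\tau^{\uparrow}$ of $L_{\bullet}(v;\ebar)$ at $|0\rangle$ is finitely atomic, supported on $\{s_0^{\uparrow},\dots,s_n^{\uparrow}\}$.

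For the implication ``$\tau$-form $\Rightarrow$ finitely-many gaps'' I would substitute $v(x)=C_0-2\ebar\operatorname{Im}\partial_x\log\tau_v\big({\rm e}^{{\bf i}x}\big)$ into the Lax operator \eqref{ClassicalLaxINTRINSIC} and analyze the classical Baker--Akhiezer function \eqref{ClassicalNSbakerakhiezer} through its defining equation, which in Hardy space reads $\uppi_{\bullet}\big[(u+\ebar w\partial_w-v)\Phi^{\rm BA}\big]=1$ with $w={\rm e}^{{\bf i}x}$. Since $\tau_v$ is polynomial with no zeros in the closed unit disk, the symbol $v$ is rational in $w$ with poles only at the finitely-many reciprocal zeros of $\tau_v$; solving the functional equation then shows $\Phi^{\rm BA}(u,\cdot)$ is rational in $w$, so $T^{\uparrow}(u|\ebar)=\Phi_0^{\rm BA}(u)$ is rational in $u$ and the gaps are finite in number. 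This direction is, up to the dictionary of Section~\ref{SECclassicalMP}, already contained in Theorem~\ref{FiniteGapResult}: the Dobrokhotov--Krichever determinant $\det M^{\vec{s},\vec{\chi}}$ in \eqref{BOmPhaseSolutionMatrix} is the $\tau$-function, and the spatial-periodicity conditions \eqref{BOMultiPhasePeriodicConditions} force each exponential ${\rm e}^{{\bf i}((s_{i-1}^{\uparrow}-s_i^{\downarrow})/\ebar)x}=w^{N_i}$ to be a genuine polynomial in $w$.

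The reverse implication ``finitely-many gaps $\Rightarrow$ $\tau$-form'' is the main obstacle and is a genuine inverse-spectral problem. Starting from the rational $T^{\uparrow}(u|\ebar)|_v$ above, I would reconstruct $\Phi^{\rm BA}(u,w)$ from the same functional equation $\uppi_{\bullet}\big[(u+\ebar w\partial_w-v)\Phi^{\rm BA}\big]=1$, read off its finitely-many $w$-poles $w_1,\dots,w_m$, define $\tau_v(w)=\prod_{j=1}^m(w-w_j)$, and recover $v=C_0-2\ebar\operatorname{Im}\partial_x\log\tau_v$ by a trace formula. Two delicate points remain: the \emph{reality} of the reconstructed $v$, which I would deduce from self-adjointness of $L_{\bullet}(v;\ebar)$ together with the interlacing and simplicity of Corollary~\ref{InterlacingCorollary} and Proposition~\ref{Simplicity} forcing the $s_i^{\uparrow},s_i^{\downarrow}$ to be real and strictly interlacing; and the \emph{location} of the zeros of $\tau_v$ strictly outside the closed unit disk, which is exactly the condition that $\Phi^{\rm BA}(u,\cdot)$ extend holomorphically into $H_{\bullet}$ rather than acquire poles on or inside $\mathbb{T}$. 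Pinning down this last point---controlling where the reconstructed poles sit---is the hard analytic step; it is what G\'erard--Kappeler's explicit Lax-pair inversion in \cite{GerardKappeler2019} secures, and it may alternatively be imported from the Dobrokhotov--Krichever construction underlying Theorem~\ref{FiniteGapResult} once the rational data $\vec{s}$ is matched to multi-phase parameters.
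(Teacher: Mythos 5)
The paper does not prove this statement: it is imported verbatim as Theorem~3 of G\'erard--Kappeler \cite{GerardKappeler2019}, and the only argument in the paper that touches it is the short deduction that it implies Theorem~\ref{FiniteGapResult} (via the Dobrokhotov--Krichever formula \eqref{BOmPhaseSolution} and their Lemma~1.1 on the location of the zeros of $\det M^{\vec{s},\vec{\chi}}$ in $w$). So there is no internal proof to compare yours against; what can be judged is whether your sketch would stand on its own.

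Your spectral reformulation is sound and uses exactly the paper's machinery: by Proposition~\ref{GapInnit}, Corollary~\ref{Paper1KMKcorollary}, and the shift relation \eqref{ShiftRelationSpectrum}, having finitely many non-empty gaps is equivalent to $T^{\uparrow}(u\,|\,\ebar)|_v$ being the rational function $\prod_i\big(u-s_i^{\downarrow}\big)/\prod_i\big(u-s_i^{\uparrow}\big)$, and interlacing plus simplicity give the reality and strict ordering of the $s_i^{\uparrow}$, $s_i^{\downarrow}$. The forward implication also looks workable, although your remark that it is ``already contained in Theorem~\ref{FiniteGapResult}'' conflates two classes: that theorem covers the Dobrokhotov--Krichever multi-phase solutions $v^{\vec{s},\vec{\chi}}$, whereas \eqref{TheForm} allows an arbitrary polynomial $\tau_v$ with zeros outside the closed disk, so you would still need the direct functional-equation argument you outline rather than a citation of Theorem~\ref{FiniteGapResult}. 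The genuine gap is the reverse implication. Reconstructing $\Phi^{\rm BA}$ from rational spectral data, showing its $w$-poles are finitely many, and above all proving that they lie strictly outside the closed unit disk is the entire content of the inverse spectral problem; at precisely that point your proposal falls back on ``what G\'erard--Kappeler's explicit Lax-pair inversion secures,'' i.e., on the theorem being proved. As written, the proposal is a correct reduction of the statement to its hard core plus a citation --- which is, in fairness, essentially how the paper itself treats this result --- but it is not a self-contained proof.
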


\begin{proof}[Proof that Theorem~\ref{GKFINITE} implies Theorem~\ref{FiniteGapResult}] \sloppy By the Dobrokhotov--Krichever for\-mu\-la~(\ref{BOmPhaseSolution}), the classical multi-phase solutions of Satsuma--Ishimori~\cite{SatsumaIshimori1979} are of the form~(\ref{TheForm}) for $\tau_v ({\rm e}^{{\bf i} x}) = \det M_n^{\vec{s}, \vec{\chi}}( x,t; \ebar)$ where the entries of the $n \times n$ matrix $M_n^{\vec{s}, \vec{\chi}}$ in (\ref{BOmPhaseSolutionMatrix}) are polynomials in $w = {\rm e}^{{\bf i} x}$. By Lemma~1.1 in Dobrokhotov--Krichever~\cite{DobrokhotovKrichever}, the $n$ eigenvalues of $M_n^{\vec{s}, \vec{\chi}}$ in $w = {\rm e}^{{\bf i} x}$ lie outside the closed unit disk, so by Theorem~\ref{GKFINITE}, (\ref{BOmPhaseSolution}) are finite-gap.
\end{proof}

\begin{Remark} $w = {\rm e}^{{\bf i} x}$ in Theorem~\ref{GKFINITE} matches $\C[w]$ in Definition~\ref{HardyDef} since in~\cite{GerardKappeler2019} there is a~relationship between $\Phi^{\rm BA}(u,w | v; \ebar)$ in~(\ref{ClassicalNSbakerakhiezer}) and a $\tau_v$ as in~(\ref{TheForm}) for all $v \in M(a) \cap L^2(\mathbb{T})$. In~\cite{Moll1}, we verified such a relationship for multi-phase $v$: at $v=v^{\vec{s}, \vec{\chi}}$, $\Phi^{\rm BA}(u, w | v; \ebar)$ comes from the Baker--Akhiezer function on the singular spectral curves in Dobrokhotov--Krichever~\cite{DobrokhotovKrichever} defined from two solutions to two non-stationary Schr\"{o}dinger equations whose time-dependent potentials determine $\tau_v$ in~(\ref{TheForm}).
\end{Remark}

\section{Multi-phase solutions: Bohr--Sommerfeld conditions} \label{SECquantumMP}

In this section we prove Theorem~\ref{MAINTHEOREM} in 7 Steps. In Steps 1--5 we derive formula (\ref{ClassicalActionsRESULT}). In Step~6 we derive formula (\ref{BOMultiPhaseBohrSommerfeldConditions}), completing the proof of Part I. In Steps 7--9 we prove Part~II.
\subsection{Step 1: 1-phase case of (\ref{ClassicalActionsRESULT})}
For the $1$-phase Benjamin--Ono periodic traveling wave (\ref{BO1PhaseSolution}) with $s_1^{\uparrow} < s_1^{\downarrow} < s_0^{\uparrow}$, the $n=1$, $i=1$ case of~(\ref{ClassicalActionsRESULT}) can be directly computed from the closed formula~(\ref{BO1PhaseSolution}) and the series formula~(\ref{1formFORMULA}) for the Liouville 1-form to give
\begin{gather} \label{ClassicalActionsRESULTfor1phase} \oint_{\upgamma_{1,1}^{\vec{s}} (\ebar)} \upalpha_{\rm GFZ}
 = 2\pi \ebar \big|s_1 ^{\uparrow} - s_1^{\downarrow} \big|.\end{gather}
We omit the calculation of (\ref{ClassicalActionsRESULTfor1phase}) since it is also the $n=1$ case of formula~(\ref{SecondExpansion}) in Step 3 below. Note that Step 3 below is logically independent of Step 2 below, so we can use (\ref{ClassicalActionsRESULTfor1phase}) in Step 2.
\subsection{Step 2: Asymptotic validity of (\ref{ClassicalActionsRESULT})}

As in (\ref{BOMultiPhasePeriodicConditions}), for $\vec{N} = (N_n, \ldots, N_1 )\in \mathbb{Z}_+^n$ define \begin{gather} B^{{\rm reg}}_n = \big\{ \vec{s} \in \R^{2n+1} \colon s_n^{\uparrow} < s_n^{\downarrow} < \cdots s_1^{\uparrow} < s_1^{\downarrow} < s_0^{\uparrow} \big\}, \nonumber\\
\label{MegaBase} B^{{\rm reg}}_{\vec{N}; n} (a; \ebar) = \left\{ \vec{s} \in B^{{\rm reg}}_n \colon a = \sum_{i=0}^n s_i^{\uparrow} - \sum_{i=1}^n s_i^{\downarrow} \ \text{and for all $i$}, \ \big|s_i^{\downarrow} - s_{i-1}^{\uparrow} \big| = \ebar N_i \right\} .\end{gather} Each $B^{{\rm reg}}_{\vec{N}; n} (a; \ebar)$ is diffeomorphic to $\R_{\geq 0}^n = [0, \infty)^n$ with coordinates the $n$ gap lengths $|s_{i}^{\uparrow} - s_{i}^{\downarrow}|$. As will be important below, note that both $\R_{\geq}^n = [0,\infty)^n$ and $\R_{>}^n = (0,\infty)^n$ are simply-connected. If all gap lengths $\big|s_j^{\uparrow} - s_j^{\downarrow}\big| \rightarrow \infty$ diverge, i.e., in \textit{any} limit to $\infty$ in $B^{{\rm reg}}_{\vec{N}; n} (a; \ebar) \cong \R^n_{>0}$, we claim \begin{gather} \label{KeyAsymptoticRelation} \oint_{\upgamma_{i,n}^{\vec{s}} (\ebar)} \upalpha_{\rm GFZ} \sim 2\pi \ebar \big| s_i^{\uparrow} - s_i^{\downarrow}\big|\end{gather} that (\ref{ClassicalActionsRESULT}) holds asymptotically. The proof of (\ref{KeyAsymptoticRelation}) is as follows: as all gap lengths diverge, the off-diagonal entries of the matrix (\ref{BOmPhaseSolutionMatrix}) vanish, hence the logarithmic derivative of the determinant in (\ref{BOmPhaseSolution}) splits into a sum indexed by $j=1, \ldots, n$. Since the cycle $\upgamma_{i,n}^{\vec{s}} (\ebar)$ varies only~$\chi_i$, and since $\chi_i$ appears only in the term with $j=i$, the action integral is asymptotically given by the $n=1$ case in Step~1. The asymptotic relation~(\ref{KeyAsymptoticRelation}) appears in the proof of Lemma~1.1 in Dobrokhotov--Krichever \cite{DobrokhotovKrichever} and is the regime in which the multi-phase solution becomes a~linear superposition of $1$-phase solutions.

\subsection{Step 3: Cycle decomposition of (\ref{ClassicalActionsRESULT}) and G\'erard--Kappeler actions} For $i=1, \ldots, n$, consider the cycle $\upgamma_{i,n}^{\vec{s}}(\ebar)$ in (\ref{ClassicalActionsRESULT}) defined from the formula (\ref{BOmPhaseSolution}) of Dobrokho\-tov--Krichever \cite{DobrokhotovKrichever}. Since the multi-phase profile $f( c| \vec{s})$ of Definition~\ref{MultiPhaseProfileDEF} is independent of $\chi_i$ in~(\ref{BOmPhaseSolution}), the cycle $\upgamma_{i,n}^{\vec{s}}(\ebar)$ lies in a torus (\ref{GKTori}) from Theorem~\ref{GKexcerpt} of G\'erard--Kappeler \cite{GerardKappeler2019}: \begin{gather} \label{Lies} \upgamma_{i,n}^{\vec{s}}(\ebar) \subset \Lambda^{f( \cdot | \vec{s})}(\ebar), \end{gather}
 which is $\Lambda^b(\ebar)$ for $b (c)= f( c | \vec{s})$. Decompose $\upgamma_{i,n}^{\vec{s}}(\ebar)$ relative to the basis of cycles $\Gamma_h^{f( \cdot | \vec{s})}(\ebar)$ in the torus $\Lambda^{f( \cdot | \vec{s})}(\ebar)$ from Theorem \ref{GKexcerpt}. By Theorem~\ref{FiniteGapResult}, $f( c| \vec{s})$ is the dispersive action profile $f( c | v^{\vec{s}, \vec{\chi}} ( \cdot ; \ebar) ; \ebar)$ of the multi-phase solution. By Proposition~\ref{IKnowTheBandsAndGaps}, the torus $\Lambda^{f( \cdot | \vec{s})}$ in (\ref{Lies}) is $n$ dimensional and has as a basis the cycles $\Gamma_{h_j}^{f( \cdot | \vec{s})} (\ebar)$ from Theorem~\ref{GKexcerpt} indexed by $h_j = N_j + \cdots +N_1$ for $j=1, \ldots, n$. By (\ref{Lies}), for classes $[ \ \cdot \ ]$ in first homology $H_1 ( \Lambda^{f ( \cdot | \vec{s})} , \mathbb{Z})$ of a fixed fiber, we have \begin{gather} \label{FirstExpansion} \big[\upgamma_{i,n}^{\vec{s}}(\ebar)\big] = \sum_{j=1}^n C^{\vec{s}}_{j,i}(\ebar) \big[ {\Gamma}_{h_j}^{f( \cdot | \vec{s})}(\ebar)\big], \end{gather}
 where $C_{j,i}^{\vec{s}}(\ebar) \in \mathbb{Z}$ depend a priori on $\vec{s} \in B^{{\rm reg}}_{\vec{N}; n} (a; \ebar)$. Pairing (\ref{FirstExpansion}) with the $1$-form $\upalpha_{\rm GFZ}$, using the actions~(\ref{GKActions}) of G\'erard--Kappeler~\cite{GerardKappeler2019} from Theorem~\ref{GKexcerpt}, and Proposition~\ref{IKnowTheBandsAndGaps}, we get
\begin{gather} \label{SecondExpansion} \oint_{\upgamma_{i,n}^{\vec{s}} (\ebar)} \upalpha_{\rm GFZ}= 2\pi\ebar \sum_{j=1}^n C^{\vec{s}}_{j,i}(\ebar) \big|s_j^{\uparrow} - s_j^{\downarrow} \big|. \end{gather}

\subsection{Step 4: Cycle decomposition of (\ref{ClassicalActionsRESULT}) is constant} We claim that the coefficients in (\ref{FirstExpansion}) \begin{gather} \label{Constancy} C_{j,i}^{\vec{s}}(\ebar) = C_{j,i}(\ebar) \end{gather}
 do not depend on $\vec{s}$. This is a short but crucial step in the proof. (\ref{Constancy}) follows since (i) the fibration in Theorem \ref{GKexcerpt} is smooth and (ii) $B^{{\rm reg}}_{\vec{N}; n} (a; \ebar)$ in (\ref{MegaBase}) is simply-connected (since $\R_{>0}^n$ is), so the completely integrable system associated to the $n$-phase solutions is monodromy-free.

\subsection{Step 5: Evaluation of (\ref{ClassicalActionsRESULT})} By (\ref{SecondExpansion}) and (\ref{Constancy}), to prove (\ref{ClassicalActionsRESULT}) it suffices to prove
\begin{gather} \label{ThirdExpansion} \sum_{j=1}^n C_{j,i}(\ebar) \big|s_j^{\uparrow} - s_j^{\downarrow} \big| = \big| s_i^{\uparrow} - s_i^{\downarrow} \big|, \end{gather} which is equivalent to the $n$ relations $C_{j,i}(\ebar) = \delta (i-j)$. Restating the asymptotic relation (\ref{KeyAsymptoticRelation}) from Step 2 using the decomposition (\ref{ThirdExpansion}), in any limit in which all $|s_j^{\uparrow} - s_j^{\downarrow}| \rightarrow \infty$, we know \begin{gather*} \sum_{j=1}^{n} C_{j,i}(\ebar ) \big| s_j^{\uparrow} - s_j^{\downarrow} \big| \sim \big|s_i^{\uparrow} - s_i^{\downarrow} \big|. \end{gather*}
 Taking $n$ different limits in which all gaps diverge but the $j$th gap grows faster than the others gives the desired $n$ relations $C_{j,i} (\ebar) = \delta(i-j)$. Indeed, (\ref{ThirdExpansion}) is linear in $\big|s_i^{\uparrow} - s_i^{\downarrow}\big|$ with constant coefficients so the coefficients are determined by~(\ref{KeyAsymptoticRelation}). $C_{j,i}^{\vec{s}}(\ebar) = \delta(i-j)$ in~(\ref{SecondExpansion}) gives~(\ref{ClassicalActionsRESULT}).

\subsection{Step 6: Regular Bohr--Sommerfeld conditions} We now prove Part I of Theorem~\ref{MAINTHEOREM}. For completeness, we first recall the definition of the regular Bohr--Sommerfeld conditions on the actions of a Liouville integrable system and comment on the geometric assumptions taken in our definition.
\begin{defin} \label{RegularBSDEF} For a classical Liouville integrable system in $(M, \upomega)$ of $\dim_{\mathbb{R}} M = 2n$ with
\begin{itemize}\itemsep=0pt
\item $\upomega = {\rm d} \upalpha$ an exact symplectic form with Liouville 1-form $\upalpha$,
\item ${T}\colon M \rightarrow B$ the associated moment map to a simply-connected base $B$ of $\dim_{\mathbb{R}} B = n$,
\item $\Lambda^{{b}} = {T}^{-1}( \vec{b})$ Lagrangian fibers given by the Liouville tori,
\item $\upgamma_i^{{b}}$ a basis of cycles of the tori $\Lambda^{{b}}$ indexed by $i=1,\ldots, n$,
\item ${b} \in B^{{\rm reg}} \subset B$ a regular value of $T$,
\end{itemize}
 the regular Bohr--Sommerfeld conditions on ${b}$ are the $n$ conditions for $i=1,\ldots, n$ given by \begin{gather} \label{RegularBSConditionsGeneral} \oint_{\upgamma_i^{{b}}} \upalpha = 2 \pi \hbar N_i', \end{gather} where $N_i' \in \mathbb{Z}_+$ is a positive integer and $\hbar>0$ is a dimensionless real parameter of quantization.\end{defin}

 Bohr--Sommerfeld conditions -- and associated semi-classical approximations of quantum spectra~-- have been long studied in mathematical physics. For background, see Takhtajan \cite[Section~6.3]{TakhtajanBOOK}, V\~{u}~Ngoc \cite[Section~5]{SVNbohrsommerfeld2001}, and Woodhouse \cite[Section~8.4]{Woodhouse}. Definition~\ref{RegularBSDEF} is a special case of the definition of Bohr--Sommerfeld leaves of general real polarizations of $M$ (whose Lagrangian leaves $\Lambda$ are not necessarily tori).

In practice, the assumption $[\upomega] = 0$ that the symplectic form is exact is often weakened to $[\upomega] \in H^2(M; \mathbb{Z})$, thus trading $\upomega = d \upalpha$ for the realization $\upomega = \textbf{F}_{\nabla}$ of the symplectic form as the curvature 2-form of a connection $\nabla$ on a line bundle $\mathbb{L} \rightarrow M$. In this setting, one reformulates the Bohr--Sommerfeld conditions as the requirement that the holonomy group of the flat connec\-tion~$\nabla |_{\Lambda}$ is trivial. For simply-connected $B$, a result of Guillemin--Sternberg~\cite{GuilleminSternbergBSRESULT} guarantees that this more general definition specializes to our Definition~\ref{RegularBSDEF} above.

Next, we argue that the assumptions in Definition \ref{RegularBSDEF} apply to our problem. At first glance, this seems impossible: the multi-phase profiles $b(c) = f(c | \vec{s})$ are certainly not regular values of the moment map which takes $v$ to its dispersive action profile $b(c) = f(c | v ; \ebar)$ (or, equivalently, the gap lengths). As we saw in Step 3, the tori $\Lambda^{f( \cdot | \vec{s})}$ explored by multi-phase solutions has real-dimension $n$, but generic tori $\Lambda^b(\ebar)$ in (\ref{GKTori}) are infinite-dimensional (generic $v$ are infinite-gap). However, in Part I we are to neglect the infinitely-many transverse directions in phase space to~$\Lambda^{f( \cdot | \vec{s})}$, an assumption that will allow us to use the regular Bohr--Sommerfeld conditions. For $\vec{N} = (N_n, \ldots, N_1) \in \mathbb{Z}_+^n$, consider the $n$ spectral indices $h_n > \cdots > h_1$ for $h_j = N_j + \cdots + N_1$ from Proposition \ref{IKnowTheBandsAndGaps} and define
\begin{gather} \label{FiniteTotalSpace} M_{\vec{N}; n} (a; \ebar) = \big\{ v \in M(a) \cap L^2(\mathbb{T}) \colon h \not \in \{h_n, \ldots, h_1 \} \Rightarrow \big|C_{h}^{\uparrow} (v; \ebar) - C_h^{\downarrow} (v; \ebar) \big|= 0 \big\} .\end{gather}
$M_{\vec{N}; n}(a; \ebar)$ is the space of $v$ whose gap lengths $\big|C_h^{\uparrow}(v; \ebar) - C_h^{\downarrow}(v; \ebar) \big| \geq 0$ for $h \in \{h_n, \ldots, h_1\}$ are either positive or zero. By Theorems~\ref{GKexcerpt} and~\ref{FiniteGapResult}, $M_{\vec{N}; n} (a; \ebar) $ in (\ref{FiniteTotalSpace}) is the phase space of an integrable subsystem of (\ref{CBOE}) associated to multi-phase solutions with moment map
\begin{gather} \label{FiniteMomentMap} M_{\vec{N}; n}(a; \ebar) \rightarrow B_{\vec{N}; n}(a; \ebar) \end{gather}
 given by taking the multi-phase profile (or, equivalently, the gap lengths). Notice that the base $B_{\vec{N}; n}(a; \ebar) \cong \R_{\geq 0}^{n}$ is simply-connected, and that $B^{{\rm reg}}_{\vec{N}; n} (a; \ebar) \subset B_{\vec{N}; n}(a; \ebar)$ from~(\ref{MegaBase}) is the open set of regular values of (\ref{FiniteMomentMap}) associated to all positive gap lengths which is also simply-connected. In particular, the case $n=0$ is indeed counted in Part I of Theorem~\ref{MAINTHEOREM}, as it corresponds to
 \begin{gather*} b (c) = | c- a| \end{gather*} the dispersive action profile $f(c | a; \ebar)$ of the constant $0$-phase solution $v(x, t ; \ebar) = a$ with $s_0^{\uparrow}=a$. Replacing (\ref{ClassicalActionsRESULT}) derived in Steps~1--5 into the regular Bohr--Sommerfeld conditions (\ref{RegularBSConditionsGeneral}) gives \begin{gather*} 2 \pi \ebar \big| s_i^{\uparrow} - s_i^{\downarrow} \big| = 2 \pi \hbar N_i' \end{gather*} for $\ebar >0$, $\hbar >0$, and $N_i' \in \mathbb{Z}_+$ which is exactly formula (\ref{BOMultiPhaseBohrSommerfeldConditions}), the central claim of Part I.

\subsection{Step 7: Classical energy levels of multi-phase solutions}
In this step we confirm that the formula $\sum\limits_{i=0}^{n} \big(s_i^{\uparrow}\big)^3 - \sum\limits_{i=1}^n \big(s_i^{\downarrow}\big)^3$ in the statement of Part~II defines the classical energy levels of multi-phase solutions. From the Definition \ref{MultiPhaseProfileDEF} of the multi-phase profile $f(c | \vec{s})$ one calculates \begin{gather} \label{OhYesTheEnergy}\sum_{i=0}^n \big(s_i^{\uparrow}\big)^3 - \sum_{i=1}^n \big(s_i^{\downarrow}\big)^3 = \int_{- \infty}^{+\infty} c^3 \tfrac{1}{2} f ''(c | \vec{s}){\rm d}s. \end{gather} By Theorem~\ref{FiniteGapResult} and Proposition~\ref{Kerov337}, (\ref{OhYesTheEnergy}) is the Hamiltonian $O_3 (\ebar)|_v$ at $v=v^{\vec{s}, \vec{\chi}}(\cdot ; \ebar)$.

\subsection{Step 8: Quantum energy levels of Jack functions} In this step we derive a formula for the quantum energy levels of the quantum stationary states (Jack functions). By Proposition \ref{QuantumBOHamiltonianVIALAX}, the quantum periodic Benjamin--Ono Hamiltonian is
\begin{gather} \label{HereItIs} \widehat{O}_3(\ebar, \hbar) = 3\widehat{T}_3^{\uparrow}(\ebar, \hbar)- 3a \widehat{T}_2^{\uparrow}(\ebar, \hbar) + a^3 . \end{gather}
In (\ref{HereItIs}), $\widehat{O}_3(\ebar, \hbar)$ is expressed through the members $\widehat{T}_{\ell}^{\uparrow} (\ebar, \hbar)$ of the quantum Nazarov--Sklyanin hierarchy (\ref{QuantumNSHierarchy}). By Theorem~\ref{QuantumNSSPECTRUMtheorem} the eigenvalue of the quantum Hamiltonian $\widehat{O}_3(\ebar, \hbar)$ at the Jack function $P_{\lambda, a}(V | \ebar, \hbar)$ indexed by a partition $\lambda$ is \begin{gather} \label{GreatFormula} \widehat{O}_3(\ebar, \hbar) \Big |_{P_{\lambda, a}(\cdot | \ebar, \hbar)} = \int_{- \infty}^{+\infty} c^3 \tfrac{1}{2} f_{\lambda} '' (c - a | \varepsilon_2, \varepsilon_1){\rm d}c, \end{gather} where $f_{\lambda}(c -a | \varepsilon_2, \varepsilon_1)$ is the anisotropic partition profile of anisotropy $(\varepsilon_2, \varepsilon_1)$ centered at $a$ in Lemma~\ref{AnisotropicLemma}. Formula~(\ref{GreatFormula}) is the coefficient of $u^{-4}$ in the logarithmic derivative of~(\ref{NSMegaFormula}).

 \subsection{Step 9: Exact Bohr--Sommerfeld conditions} We now prove Part II of Theorem~\ref{MAINTHEOREM}. By formula (\ref{GreatFormula}), the exact spectrum of the quantum periodic Benjamin--Ono equation in $\overline{F}(a)$ is indexed by partitions $\lambda$ with quantum energy levels $\int_{-\infty}^{+\infty} c^3 \tfrac{1}{2} f_{\lambda} '' ( c -a|\varepsilon_2, \varepsilon_1){\rm d}c$. By formula (\ref{OhYesTheEnergy}), this quantum spectrum coincides with the classical energy levels of the multi-phase solutions whose multi-phase profiles $f( c| \vec{s})$ have $a = \sum\limits_{i=0}^n s_i^{\uparrow} - \sum\limits_{i=1}^n s_i^{\downarrow}$ and band and gap lengths
 \begin{gather*} 
 \big|s_i^{\downarrow} - s_{i-1}^{\uparrow} \big| = \varepsilon_1 N_i, \qquad
 \big|s_i^{\uparrow} - s_{i}^{\downarrow} \big| = - \varepsilon_2 N_i' . \end{gather*}
for $N_i, N_i' \in \mathbb{Z}_+$. These are the spatial periodicity conditions (\ref{BOMultiPhasePeriodicConditions}) and regular Bohr--Sommer\-feld conditions (\ref{BOMultiPhaseBohrSommerfeldConditions}) after the renormalization~(\ref{Renormalization}) of Abanov--Wiegmann \cite{AbWi1}. $\square$

\subsection*{Acknowledgments} The author would like to thank Chris Beasley, Percy Deift, Sam Johnson, Igor Krichever, Ryan Mickler, and Jonathan Weitsman for many helpful discussions. This work was supported by the Andrei Zelevinsky Research Instructorship at Northeastern University and also by the National Science Foundation RTG in Algebraic Geometry and Representation Theory under grant DMS-1645877.

\pdfbookmark[1]{References}{ref}
\LastPageEnding

\end{document}